\theoremstyle{plain} 
\newtheorem{Pro}{Proposition}
\newtheorem{Lem}{Lemma}
\newtheorem{Cor}{Corollary}
\newtheorem{Def}{Definition}
\newtheorem{Theorem}{Theorem}
\begin{document}

\abovecaptionskip5pt \belowcaptionskip5pt

\title{A Stochastic Geometry Framework for Analyzing Pairwise-Cooperative Cellular Networks}
\author{Anastasios Giovanidis$^*$ and Fran\c{c}ois Baccelli$^{*\dagger}$\\
email: anastasios.giovanidis@inria.fr, baccelli@math.utexas.edu}

\maketitle
\IEEEpeerreviewmaketitle
 
\footnotetext{Acknowlegments: The work presented in this paper has
been carried out at LINCS (http://www.lincs.fr) and has been supported by INRIA and LINCS, Paris, France.\\
$^*$ INRIA and ENS Paris, team TREC, 23 avenue d'Italie, CS 81321, 75214 Paris Cedex 13, France\\
$^\dagger$ Simons Chair in Mathematics and Electrical and Computer Engineering, UT Austin Wireless Networking and Communications group, 2501 Speedway Stop C0806, Austin, Texas 78712-1687, USA}

\begin{abstract}

Cooperation in cellular networks has been recently suggested as a promising scheme to improve system performance, especially for cell-edge users. 
In this work, we use stochastic geometry to analyze cooperation models where the positions of Base Stations (BSs) follow a Poisson point process distribution and where Voronoi cells define the planar areas associated with them. For the service of each user, either one or two BSs are involved. If two, these cooperate by exchange of user data and channel related information with conferencing over some backhaul link. Our framework generally allows variable levels of channel information at the transmitters. In this paper we investigate the case of limited channel state information for cooperation (channel phase, second neighbour interference), but not the fully adaptive case which would require considerable feedback. The total per-user transmission power is further split between the two transmitters and a common message is encoded. The decision for a user to choose service with or without cooperation is directed by a family of geometric policies depending on its relative position to its two closest base stations. An exact expression of the network coverage probability is derived. Numerical evaluation allows one to analyze significant coverage benefits compared to the non-cooperative case. As a conclusion, cooperation schemes can improve system performance without exploitation of extra network resources.
\end{abstract}

\begin{keywords}
Cellular networks, Cooperative MultiPoint transmission (CoMP), Coverage probability, Stochastic geometry, Poisson point process, Power control, Geometric policies
\end{keywords}

\newpage

\section{Introduction}
\label{SectionI}

The rapid advances in communications technology and mobile network market 
have urged the research community to direct its efforts towards novel communications paradigms that 
can offer a better system performance.  Already resources as time, frequency and space have been exhaustively explored, leading to the suggestion of systems with high degrees of freedom including the Random Access (RA), the Multiple-Input-Multiple-Output (MIMO) and the Orthogonal-Frequency-Division-Multiple-Access (OFDMA) channel. One of the current tendencies, which extends the above, is to understand wireless networks as cooperative entities.

Conceptually, the idea of cooperative communications is that nodes should communicate with each other and try to adapt their transmission behavior in such a way, that their own as well as the entire system performance is improved. Cooperation can be understood either as (a) coordination of node actions after certain exchange of messages between neighbours, or (b) as full cooperation by concurrent transmission of the same data towards a receiver node. The latter type of cooperation is possible, when full data and/or channel state information is exchanged between the cooperating elements.

For cellular architectures, a recent suggestion in 3G and 4G systems has been the Cooperative MultiPoint transmission (CoMP) \cite{CommMag12CoMP}, where neighbouring Base Stations (BSs) exchange data and/or channel information over backhaul links. With use of appropriate precoding and by exploiting the extra information, interference within this - so called - BS cluster can be cancelled out for the users in service \cite{KarakayaliJou}. The concept is also known in the literature as Network MIMO \cite{GesbertTutor}. Theoretically, the idea originates from the Multiple Access Channel (MAC) with conferencing, whose capacity region has been derived by Willems in \cite{WillemsConf83} for the discrete memoryless case. Its Gaussian extension is provided by Bross et al. in \cite{BrossLapWigg08}.

Grouping of BSs into cooperative clusters normally requires a centralized controller, where information over the channel quality network-wide is available \cite{PapadogiannisDyn08}. Furthermore, the full benefits of cooperation in throughput and coverage are known to be achieved when both data and channel state information are fully known to all the transmitters of each cluster \cite{GesbertTutor}. However, such information exchange requires a considerable amount of feedback. This is often unrealistic because of the system resources reserved. The research community aims nowadays at resolving this major issue \cite{TutCoMPYang13}, \cite{deKerretTutorCoMP13}. Approaches to solve this investigate cases of partial, delayed, quantized and limited channel state information exchange between the cooperating entities \cite{JungnickICC13}, \cite{deKerretTutorCoMP13} as more realistic towards a practical system implementation \cite{CoMPTutIrmer11}.

\subsection{Related Literature}
There has been a large amount of both theoretical and practical investigations of the aforementioned concepts in the literature. Considering MAC with conferencing, starting from the work by Willems and its Gaussian variant by Bross et al. numerous extensions of the basic model have been considered. In \cite{MoritzIT11}, Wiese et al. have extended 
the discrete memoryless region to the case of compound channels with an arbitrary number of channel states. Maric et al. in \cite{MaricYaKaramer05} have derived the 
region for the case of 2-transmitter MAC with conferencing and 2-receivers. Further extensions for general $p$-transmitters and $q$-receivers are provided in \cite{PQMAC11}, while Wigger and Kramer have derived the region for the 3-User MIMO Gaussian MAC with conferencing in \cite{WiggKramer09}. A work which departs from the standard model, but considers channels of 2-transmitters and 2-receivers with cooperation in either of the two or both ends of the 
channel, is found in \cite{JindalISIT03} by Jindal et al. There, the transmitters are equipped with multiple antennas and jointly encode both messages using Dirty Paper Coding. 


Related to CoMP, the authors in \cite{ZakhourGesbSP11} investigate a 2-transmitter 2-receiver model and derive from Willems' capacity region an achievable rate region, when each BS sends a private and a common message to both receivers. Further works treating $2\times 2$ models are \cite{JindalISIT03}, \cite{MaricYatesKram07}, \cite{MarschFett08}, \cite{NgGold04} and \cite{ZakhourGesbSP11}. Practical schemes to implement cooperation, combining private/common message and Dirty Paper Coding are found in \cite{fettGLOBECOM08}. Optimal methods to form cooperation clusters between BSs are investigated in \cite{GiovaWCNC12} using tools from integer programming and in \cite{deKerrGesbert12} based on the concept of sparse precoding. In \cite{HardjVuce09} Hardjawana et al. investigate transmitter structures for a system with $N$ cooperating BSs $\times$ $N$ users. Finally, many tutorial papers on implementation issues of CoMP deal with the lingering issue of channel state information feedback, that is necessary to achieve the promised cooperation benefits \cite{TutCoMPYang13}, \cite{deKerretTutorCoMP13}, \cite{JungnickICC13}, \cite{CoMPTutIrmer11}.

The existing research, as shown above, has been limited to models with a small number (usually two) of cooperating transmitters and one or more receivers. The aim is to derive 
performance benefits when cooperation is incorporated and to suggest optimal or practical transmission and reception schemes within the cluster of communicating nodes. On the other hand, the influence of cooperation and the inter-cluster interference effects on a network level is not yet thoroughly investigated. We aim here at analyzing the performance of cooperation in terms of coverage, in a network with an infinite number of nodes. A first approach to this problem (done in parallel to the authors' work) is found by Akoum and Heath in \cite{AkoumHeathSPA12}, where BSs and users are positioned on the Euclidean plane as independent Poisson point processes. The concept of random clustering is used to define cooperation clusters, whose BSs transmit using intercell interference nulling. Although this work aims at analyzing the network in its generality, it does neither consider the issue of the optimal cluster choice (because of the random way cooperation clusters are formed) nor questions of power consumption. Furthermore, the model does not deal with the novel interference field which arises when cooperation is applied network-wide.

\subsection{Work Presentation and Major Contributions}

Our work here primarily aims at providing a \textit{general framework} to treat problems of cooperation between pairs of cellular nodes at a network level. The framework utilizes stochastic geometry to derive exact closed form or integral expressions and evaluate network performance when its nodes are assumed randomly scattered on the infinite plane. We consider scenarios of cooperation with data exchange between the optimal pairs of nodes. The present paper is focused on the case where only limited information over the channel is available such as phase shift and second neighbour interference. The reason for such approach is due to the fact that full information exchange, although may allow perfect adaptation of the transmitters to the channel state, it requires a considerable - often unrealistic - amount of inter-cell feedback exchange. The current analysis can however easily be extended to the case full channel knowledge and beamformer adaptation to the instantaneous channel state, but this is left for future investigation and comparison with the current work. 
More specifically, our basic model \textbf{assumptions} in this work are as follows:

\begin{itemize}
\item The BSs are distributed on the infinite two-dimensional (2D) plane as the realization of a Poisson point process (p.p.p.) with intensity $\lambda$.
\item Each BS $\mathbf{z}$ is the center of a cell, defined as the locus of planar points closest to $\mathbf{z}$ than any other atom of the point process. This is called a 1-Voronoi cell $\mathcal{V}_1\left(\mathbf{z}\right)$.
\item One user is randomly positioned within each cell of the network.
\item At most two BSs may cooperate for the service of a single user. These are the first $\mathbf{b}_1$ and second $\mathbf{b}_2$ closest geographic neighbours of each user, who then lies within their 2-Voronoi cell $\mathcal{V}_2\left(\mathbf{b}_1,\mathbf{b}_2\right)$. In this sense, the BS cooperation pair is chosen by each user and not by a central controller entity. Practical issues on knowledge acquisition of the BS-user distance are resolved by averaging over multiple instantaneous channel realizations \cite{GiovaWCNC12} to cancel out the random effect of fast-fading.
\item The cooperation scenario is similar to the concept of Willems' coding with conferencing \cite{WillemsConf83}. This means here that the original message is split into a private and a common part sent by the cooperating pair to the user. The common part is exchanged over the backhaul.
\item The power consumed per user is fixed to $p$. No power constraint per BS is considered. 
\item Fast-fading from each BS to each user is modeled by a complex number, whose absolute value $\sqrt{G}$ is an independent random variable (r.v.) with Rayleigh distribution (consequently its power $G$ follows the exponential distribution) and whose angle $\Theta$ is an independent r.v. with uniform distribution in $\left[0,2\pi\right)$.
\item Limited channel state information is considered. Specifically the analysis assumes that the transmitters know and exchange the angular shifts resulting from complex fading, so that the reception is in-phase at the user side. The purpose of such modeling assumption is to investigate benefits from scenarios of cooperation that avoid exhaustive information exchange between transmitters within a cluster. In our case, the information exchange is limited to just the phase of the channel and not the amplitude, which is a considerable feedback gain. 

Furthermore, we analyze the case where interference from the second base station is non-causally known, and the transmission from the pair is adapted to the latter by application of Dirty Paper Coding \cite{CostaDP83}. This case does not assume further channel knowledge (i.e. does not refer to the standard combined Zero Forcing and Dirty Paper Coding scheme in \cite{CaireDPC03}) and its purpose is to give an optimistic evaluation of our limited feedback scheme. 
\item A subset of users choses No Cooperation (SISO channels) whereas its complementary subset choses Full Coopeartion ($2\times 1$ MISO channel).
\item The choice to Cooperate or Not is dictated by a family of user-optimal policies, which is one of the main novelties of our work. These policies are strictly geometric and a user choses between them (as binary actions) based on the relative value of the ratio $\frac{r_1}{r_2}$ (of distances from its first and second BS neighbor) to the design parameter $\rho$. The parameter $\rho$ influences the width of some geometric zone of cooperation between the cells of the two neighbouring BSs and is left as an optimization variable.
\end{itemize}

The major \textbf{results} can be summarized as follows:

\begin{itemize}
\item The chosen communication scenario and action policies result in a specific $\mathrm{SINR}$ expression.
\item The coverage probability of a typical planar point is derived, given an $\mathrm{SINR}$ threshold $T$. The coverage further depends on the value of geometric parameter $\rho$. Explicit expressions are provided.
\item The network-wide interference is modeled as a shot-noise field, which explicitly takes into consideration the special influence of cooperation between BSs. The interference r.v. is provided together with its Laplace transform and expected value. It can be seen that although different cooperation schemes lead to the same expected value for interference, the same does not hold for its higher moments.
\item The work considers the possibility of interference elimination of the non-beneficial signal coming from the second closest neighbour. This can be achieved when the interference created to the typical user is measured and known to both MISO transmitters. In this case Dirty Paper Coding (DPC) can be applied, described in \cite{CostaDP83}.
\item The coverage expressions for both scenarios (with and without DPC) are numerically evaluated and the benefits of cooperation with geometric policies are illustrated in plots. Simulation results are compared with the theoretical ones to guarantee the validity of the results.
\item \textbf{It is concluded that:} Cooperation (in the sense of conferencing here) is substantially beneficial for network coverage even with limited channel adaptive schemes and when cooperation is restricted between two neighbouring BSs. DPC significantly improves these benefits.
\item Coverage regions change their shape when varying parameter $\rho$. Due to the geometric policies suggested, advantage is always given to planar areas closer to the cell boundaries.
\end{itemize}

Let us stress that beyond these particular results on cooperation
with limited channel knowledge, the main contribution of the present
paper is the new framework in which these questions are analyzed.
The k-Voronoi diagram allows one to identify the region where 
certain k-tuples of antennas constitute the optimal $k$-cluster. Poisson stochastic geometry allows one to analyze e.g. the probability of coverage but potentially other performance measures as well. This framework can be extended to optimal
cooperation by more than two base stations, to a variety of 
cooperation schemes and to a variety of assumptions on channel knowledge.

\subsection{Organization of the work}

Our work is organized as follows. Section \ref{SectionII} presents the general model of cooperation. Specifically, it explains the geometric notions, gives the modeling assumptions and describes the conferencing power-split scenario. The last part presents the policies of No/Full cooperation used and the resulting cooperation zones at the cell boundaries, which are controlled by parameter $\rho$. Section \ref{SectionIII} provides the derivation of the expressions for the coverage probability using stochastic geometry. The Laplace transform of the Full cooperation signal is derived and a probabilistic comparison with the SISO signal is given. The interference is described as a r.v. and its properties and expected value are provided. The section includes the variation of the above expressions when interference from the second closest neighbour is eliminated by use of DPC. It concludes with the derivation of the expressions for coverage. Section \ref{SectionIV} presents a list of pros and cons for the total coverage, when the proposed policies are applied to the network, compared to the traditional model without cooperation. Section \ref{SectionV} illustrates the numerical evaluation of the expressions, as well as their comparison with simulation results. Finally Section \ref{SectionVI} contains the conclusions of our work. A large amount of supplementary material, as well as proofs that are omitted from the main text can be found in the Appendix.

\textbf{Notation:} We will use capital letters $G$ for random variables (r.v.'s), boldface small letters $\mathbf{x}$ for vectors (including planar points) and small letters $a$ for real/complex quantities. Capital calligraphic letters are reserved for sets (except for the notation on signal $\mathcal{S}$ and interference $\mathcal{I}$). The set of real numbers will be denoted by $\mathbb{R}$ and that of complex numbers by $\mathbb{C}$. The unit imaginary number is denoted by $j:=\sqrt{-1}$. The asterisc notation $a^*$ denotes the complex-conjugate of $a\in\mathbb{C}$. The distance between two planar points $\mathbf{z}$ and $\mathbf{u}$ is denoted by $d\left(\mathbf{z},\mathbf{u}\right)$.


\section{Cooperation in the Network under Study}
\label{SectionII}

\subsection{Geometry and Model Assumptions}
\label{SecIIA}

For the model under study, the considered Base Stations (BSs) are equipped with a single antenna and are positioned at the 
locations of atoms of the realization of a planar p.p.p. with intensity $\lambda$, denoted by $\phi=\left\{\mathbf{z}_i\right\}$. A planar tessellation, called the 1-Voronoi diagram \cite{CompGeomBook}, separates the plane into subregions with the properties that: (a) Exactly one subregion corresponds to each $\mathbf{z}_i:=\left(x_i,y_i\right)\in\phi$. The subregion with center $\mathbf{z}_i$ is denoted by $\mathcal{V}_1\left(\mathbf{z}_i\right)$ and is called a \textbf{1-Voronoi cell}. (b) Considering Euclidean distance, which we denote by $ d\left(\mathbf{z}_i,\mathbf{z}\right)$, the subregion $\mathcal{V}_1\left(\mathbf{z}_i\right)$ consists of all planar points $\mathbf{z}:=\left(x,y\right)$ closer to $\mathbf{z}_i$ than to any other atom in $\phi$, in other words $d\left(\mathbf{z}_i,\mathbf{z}\right)\leq d\left(\mathbf{z}_k,\mathbf{z}\right)$, $\forall \mathbf{z}_k\in\phi\setminus\left\{\mathbf{z}_i\right\}$.  (c) The union of all subregions is the $\mathbbm{R}^2$ plane. The intersection of every pair of cells is of $0$ Lebesgue measure. When the 1-Voronoi cells of two atoms share a common edge, one says that they are \textbf{Delaunay neighbours}. A dual graph of the 1-Voronoi tessellation called the Delaunay graph can be formed if all Delaunay neighbours are connected by an edge \cite{CompGeomBook}. 

Another tessellation of the plane, called the 2-Voronoi diagram \cite{LeekVoronoi}, separates the plane into subregions with the properties that: (a) Exactly one subregion corresponds to each unordered pair $\left(\mathbf{z}_i,\mathbf{z}_n\right)$, $i\neq n$, $\mathbf{z}_i,\mathbf{z}_n\in\phi$. This subregion is denoted by $\mathcal{V}_2\left(\mathbf{z}_i,\mathbf{z}_n\right)$ and is called a \textbf{2-Voronoi cell}. (b) $\mathcal{V}_2\left(\mathbf{z}_i,\mathbf{z}_n\right)$ consists of all planar points $\mathbf{z}$ with the property that $d\left(\mathbf{z}_i,\mathbf{z}\right)\leq d\left(\mathbf{z}_k,\mathbf{z}\right)$ and $d\left(\mathbf{z}_n,\mathbf{z}\right)\leq d\left(\mathbf{z}_k,\mathbf{z}\right)$, $\forall \mathbf{z}_k\in\phi\setminus\left\{\mathbf{z}_i,\mathbf{z}_n\right\}$. In other words, it consists of all planar points geographically closest to $\mathbf{z}_i$ and second-closest to $\mathbf{z}_n$ or the other way around. (c) The union of all subregions is the $\mathbbm{R}^2$ plane. The intersection of every pair of cells is of $0$ Lebesgue measure. Rather interestingly, the 2-Voronoi cell $\mathcal{V}_2\left(\mathbf{z}_i,\mathbf{z}_n\right)$ is non-empty if and only if the atoms $\mathbf{z}_i$ and $\mathbf{z}_n$  are Delaunay neighbours. For formal definitions, proofs and further discussions, the reader is referred to Appendix A and the publications \cite{CompGeomBook} and \cite{LeekVoronoi}. The 1- and 2-Voronoi tessellation in a square with area 1 [$m^2$] based on the uniform positioning of 10 atoms 
is shown in Fig. \ref{fig:Voronoi1} and \ref{fig:Voronoi2}.

\begin{figure}[ht]    
\centering  
\label{fig:VoronoiEXMP}
	 \subfigure[1-Voronoi tessellation example.]{          
           \includegraphics[trim = 25mm 55mm 20mm 60mm, clip, width=0.45\textwidth]{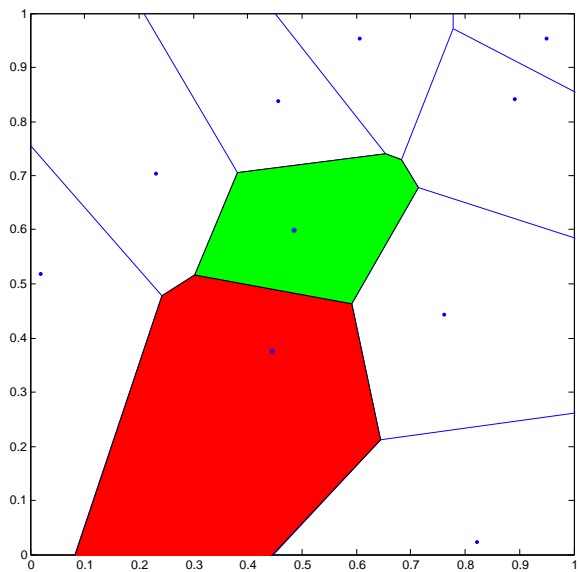}
           \label{fig:Voronoi1}
           }
            \subfigure[2-Voronoi tessellation example.]{          
           \includegraphics[trim = 25mm 55mm 20mm 60mm, clip, width=0.45\textwidth]{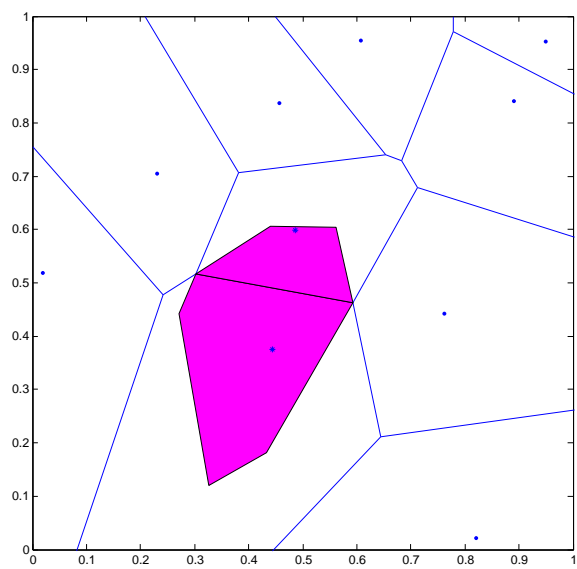}
           \label{fig:Voronoi2}
           }
           \caption{Illustration example for the 1- and 2-Voronoi tessellations of the plane. We highlight the subregions for a specific pair of atoms.}
\end{figure}

%


In the current work, we will consider a geometric cooperation scenario based on the following assumptions:

\begin{itemize}
\item The BSs are connected via a backhaul network, which consists of links of infinite capacity, one for each pair of Delaunay neighbours. The infrastructure in the network thus constitutes a Delaunay graph \cite{CompGeomBook}.

\item Exactly one user with a single antenna is 
located randomly at some point within the 1-Voronoi cell of its BS 
and we write $\mathbf{u}_i\in\mathcal{V}_1(\mathbf{z}_i)$. 
Because of this choice, the user positions do not follow a p.p.p. distribution. 

\item Each user $\mathbf{u}_i$ may be served by either one or two BSs. In the latter case, these are the atoms in $\phi$, which are its \textbf{first} and \textbf{second closest geographic neighbours}. In the former case, it is just the first closest neighbour. We use the notation $\mathbf{b}_{i1}$ and $\mathbf{b}_{i2}$ respectively when referring to these neighbours of $\mathbf{u}_i$. As already mentioned, we assume that the two BSs can communicate by information exchange over the backhaul network.

\item From the point of view of a BS located at $\mathbf{z}_i$, we refer to the user in its 1-Voronoi cell as the \textbf{primary user} $\mathbf{u}_i$ and to all other users served by it but located outside the cell as the \textbf{secondary users}. These constitute a set $\mathcal{N}^s\left(\mathbf{z}_i\right)$, with cardinality that ranges between zero and the number of Delaunay neighbours, depending on the users' position relative to $\mathbf{z}_i$.
\end{itemize}

The modeling assumption of exactly one user per BS is similar to the bipolar models suggested in \cite{BaccelliOpAloha} and \cite{JindalAndrewsComm11}. This choice is also found in the work of Venkatesan \cite{VenkaPIMRCa07} for the performance of cooperation schemes in cellular networks. It is useful because it does not pose questions about user scheduling decisions. It is further justified by the fact that, in practical communications systems, users are split among orthogonal dimensions of time and frequency (e.g. by applying OFDMA techniques) and  
each frequency slot per time is allocated to exactly one user per cell.

The distance 
between user $\mathbf{u}_i$ and its first BS neighbour $\mathbf{z}_i$ 
is equal to
$d\left(\mathbf{z}_i,\mathbf{u}_i\right) := d\left(\mathbf{b}_{i1},\mathbf{u}_i\right) = r_{i1}$. 
This is the maximum radius $r=r_{i1}$ of an open ball $\mathcal{B}\left(\mathbf{u}_i,r\right)$ with center $\mathbf{u}_i$, which is empty of atoms in $\phi$ (see Prop. \ref{CircleV1} in Appendix A). The second nearest neighbour can only be one of the atoms in $\phi$ whose associated Voronoi polygons are adjacent to $\mathcal{V}_1\left(\mathbf{z}_i\right)$. 
The distance between user $\mathbf{u}_i$ and its second BS neighbour $\mathbf{b}_{i2}$ (say $\mathbf{z}_n$),
%
$d\left(\mathbf{z}_n,\mathbf{u}_i\right) := d\left(\mathbf{b}_{i2},\mathbf{u}_i\right) = r_{i2}\geq r_{i1}$. 
This is the maximum radius $r=r_{i2}$ such that $\mathcal{B}\left(\mathbf{u}_i,r\right)$ contains $\mathbf{z}_i$ and only $\mathbf{z}_i$ in its interior and $\mathbf{z}_n$ on its boundary, or 
is empty and contains both $\mathbf{z}_i,\mathbf{z}_n$ on the boundary (In the Poisson case, the last scenario has $0$ probability. We refer also the reader to Prop. \ref{CircleV2} in Appendix A). 
An example of network for the cooperation scenario under study is illustrated in Fig. \ref{fig:ExampleNet}.

\begin{figure}[h]
\centering
\includegraphics[trim = 15mm 65mm 10mm 45mm, clip, width=0.6\textwidth]{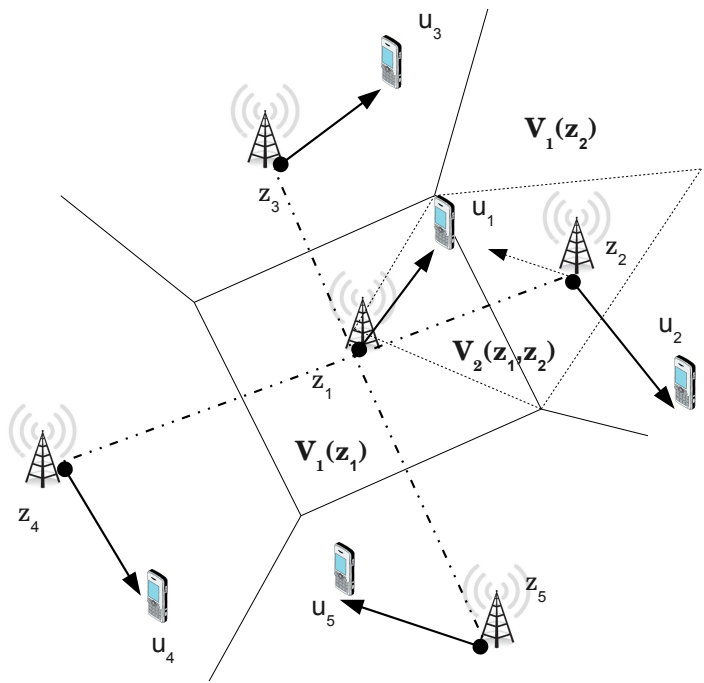}
\caption{Example of topology for a 5 BS network. BS $\mathbf{z}_1$ can communicate with all its Delaunay neighbours over backhaul links. 
Each BS is associated with exactly one user who is located 
within the corresponding 1-Voronoi cell. As shown geometrically, user $\mathbf{u}_1$ is primary user for BS $\mathbf{z}_1$ and secondary user for BS $\mathbf{z}_2$.}
\label{fig:ExampleNet}
\end{figure}


\subsection{Downlink Cooperative Transmission in Pairs with Power Splitting}
\label{SecIIB}

The communications scenario in this work uses the following idea. When two BSs cooperatively serve a user in the downlink, its signal is split into one \textbf{common part} served by both, and \textbf{private parts} served by each one of the serving BSs. The common part contains information known to both transmitters, that is exchanged between them over a reliable conferencing link. The idea is originally based on the seminal work of Willems \cite{WillemsConf83}, where the author derives the capacity region of the Multiple Access Channel (MAC) with conferencing, in the case that two transmitters cooperate over backhaul links to provide service to a single receiver. The proof of achievability is precisely based on such a splitting of the user's signal. The capacity region for the Gaussian noise case has been derived by Bross et al. for the $2\times 1$ model in \cite{BrossLapWigg08} and Maric et al for the $2\times 2$ model in \cite{MaricYaKaramer05}. In both cases the splitting idea is used for achievability (the reader is referred also to Appendix C). A variation of the idea has been applied in a network-MIMO setting by Zakhour and Gesbert for the $2\times2$ model in \cite{ZakhourGesbSP11}, where the beamforming vectors take the role of signal weights and define the power-split-ratios.

In this work we assume that only the information over channel phase shift is available at the transmission pair (and the interference from the second base station, in the DPC case later on), so that full adaptation of the signal to the instantaneous channel conditions is not possible. The purpose is to analyze schemes with limited information exchange. 


For each primary user $\mathbf{u}_i$ located within the 1-Voronoi cell of atom $\mathbf{z}_i$, consider a signal $s_i\in\mathbb{C}$ to be transmitted. The user signals are independent realizations of some random process with power $\mathbb{E}\left[\left|s_i\right|^2\right] = p>0$ and are uncorrelated with other user-signals meaning $\mathbb{E}\left[s_i\cdot s_n^*\right] = 0$, $\forall n\neq i$. The signal destined for user $\mathbf{u}_i$ in the downlink is divided into two parts:
\begin{eqnarray}
s_i & = & s_{i}^{(p)}+ s_{i}^{(c)}.
\label{SumPC}
\end{eqnarray}

\begin{itemize}
\item A \textbf{private} part sent to $\mathbf{u}_i$ from its first BS neighbour $\mathbf{b}_{i1}:=\mathbf{z}_i$, denoted by $s_{i}^{(p)}$. The second neighbour does not have a private part to send.
\item A \textbf{common} part sent by both $\mathbf{b}_{i1}$ and $\mathbf{b}_{i2}$, which is denoted by $s_{i}^{(c)}$. This part is communicated between both BSs over the backhaul links.
\end{itemize}
The two parts are uncorrelated random variables, in other words $\mathbb{E}\left[s_{i}^{(p)}\cdot {s_{i}^{(c)}}^*\right]=0$. For clarity purposes, we will use the notation
$s_{i}^{(c1)}$ for the common signal transmitted from $\mathbf{b}_{i1}$ and $s_{i}^{(c2)}$ for the common signal transmitted from $\mathbf{b}_{i2}$, although the two are actually scaled versions of the same signal.

Considering power issues, we put the constraint that the powers transmitted from both BSs to serve user $\mathbf{u}_i$ should 
sum up to $p$ to guarantee \textbf{power conservation}. This is similar to the beamforming normalization, found in \cite{ZakhourGesbSP11}. We assume that 
the common part is served by both BSs with the same power percentage $a_i\in\left[0,\frac{1}{2}\right]$, which is named the \textbf{power-split-ratio}. This ratio has in general a different value for different users $\mathbf{u}_i$. Hence,
\begin{eqnarray}
\label{powerconservA}
\mathbb{E}\left[\left|s_{i}^{(p)}\right|^2\right] = \left(1-2a_i\right)p, & \mathbb{E}\left[\left|s_{i}^{(c1)}\right|^2\right] = a_ip, & \mathbb{E}\left[\left|s_{i}^{(c2)}\right|^2\right] = a_ip. 
\end{eqnarray}
We will denote the infinite vector of these ratios by $\mathbf{a}$ and the vector of ratios omitting the entry $a_i$, by $\mathbf{a}_{-i}$.
Then BS $\mathbf{b}_{i1}$ consumes $\left(1-a_i\right)p$ in total for user $\mathbf{u}_i$, while BS $\mathbf{b}_{i2}$ consumes $a_ip$ for the same user. Each BS $\mathbf{z}_i$ transmits a total signal $x_i$. This signal is a superposition of:

\begin{itemize}
\item The private and common message for its primary user: $s_{i}^{(p)} + s_{i}^{(c1)}$.
\item The common message for all its secondary users $k\in\mathcal{N}^s\left(\mathbf{z}_i\right)$: $s_{k}^{(c2)}$.
\end{itemize}

We write altogether

\begin{eqnarray}
x_i & = & s_{i}^{(p)} + s_{i}^{(c1)} + \sum_{k\in\mathcal{N}^s\left(\mathbf{z}_i\right)} s_{k}^{(c2)}.
\label{BSsignal}
\end{eqnarray}
%

The BS signals are transmitted over the wireless medium to reach the users and experience real valued path-lost and complex valued fast-fading. For the former, the signal transmitted by BS $\mathbf{z}_n$ is received at the location of user $\mathbf{u}_i$, multiplied by the path-loss factor which depends on the distance $d\left(\mathbf{z}_n,\mathbf{u}_i\right)$.
The fast-fading is written as a product of its amplitude and the exponential shift, i.e. $\sqrt{g}e^{j\theta}$, where $j=\sqrt{-1}$. Its power $\left|\sqrt{g}e^{j\theta}\right|^2=g$ is a realization of a unit-mean exponential random variable (r.v.) $G$, whereas its phase shift $\theta$, as a realization of a unifom r.v. within the interval $\left[0,2\pi\right)$. Consequently, the total power gain from the first and second neighbour $\mathbf{b}_{i1}$ and $\mathbf{b}_{i2}$ to user $\mathbf{u}_i$ is equal to

\begin{eqnarray}
\label{FadeI}
h_{i1} := g_{i1} r_{i1}^{-\beta} &, & h_{i2} := g_{i2} r_{i2}^{-\beta}
\end{eqnarray}
while the total power gain from the first and second neighbour $\mathbf{b}_{n1}$, $\mathbf{b}_{n2}$ of some other user $\mathbf{u}_n$ to $\mathbf{u}_i$ is

\begin{eqnarray}
\label{FadeJ}
h_{n1,i} := g_{n1,i} d_{n1,i}^{-\beta} &, & h_{n2,i} := g_{n2,i} d_{n2,i}^{-\beta},
\end{eqnarray}
with $\beta>2$. Regarding the phase-shift due to fast-fading, the signal transmitted by $\mathbf{b}_{i1}$ will be multiplied by the gain amplitude $\sqrt{h_{i1}}$ and the complex phase shift $e^{j\theta_{i1}}$. Similar factors will appear at the signals from $\mathbf{b}_{i2}$, $\mathbf{b}_{n1,i}$ and $\mathbf{b}_{n2,i}$. Altogether, the signal received at user $\mathbf{u}_i$ is $y_i$ and is equal to

\begin{eqnarray}
y_{i} & = & \left(s_i^{(p)}+s_i^{(c1)}\right)e^{j\theta_{i1}}\sqrt{h_{i1}} +  s_i^{(c2)}e^{j\theta_{i2}}\sqrt{h_{i2}}+\nonumber\\
 &  & +\sum_{\mathbf{u}_n\neq \mathbf{u}_i} \left( \left(s_n^{(p)}+s_n^{(c1)}\right)e^{j\theta_{n1,i}}\sqrt{h_{n1,i}} +  s_n^{(c2)}e^{j\theta_{n2,i}}\sqrt{h_{n2,i}}\right) + \eta_{i}.
\label{RecYui}
\end{eqnarray}
The noise $\eta_{i}$ is a realization of an independent r.v. of distribution $\mathcal{N}\left(0,\sigma_{i}^2\right)$. In the above the sum of signals intended to users $\mathbf{u}_n\neq \mathbf{u}_i$ is the interference received by user $\mathbf{u}_i$. Consequently, the $\mathrm{SINR}_i^{(\mathbf{\theta})}$ for user $\mathbf{u}_i$ takes the form

\begin{eqnarray}
\label{SINRui2a1th}
\mathrm{SINR}_{i}^{(\mathbf{\theta})}\left(\mathbf{a},p\right) & = & \frac{\mathcal{S}_i^{(\mathbf{\theta})}\left(a_i,p\right)}{\sigma_i^2 + \mathcal{I}_i^{(\mathbf{\theta})}\left(\mathbf{a}_{-i},p\right)}\\
\label{SINRui2a2th}
\mathcal{S}_i^{(\mathbf{\theta})}\left(a_i,p\right) & := & h_{i1}\left(1-a_i\right)p + h_{i2}a_ip + 2a_ip\sqrt{h_{i1}h_{i2}}\cos\left(\theta_{i1}-\theta_{i2}\right)\\
\label{SINRui2a3th}
\mathcal{I}_i^{(\mathbf{\theta})}\left(\mathbf{a}_{-i},p\right) & := & \sum_{n\neq i}\mathcal{S}^{(\mathbf{\theta})}_{n,i}\left(a_n,p\right)\\
\label{SINRui2a4th}
\mathcal{S}_{n,i}^{(\mathbf{\theta})}\left(a_n,p\right) & := & h_{n1,i}\left(1-a_n\right)p + h_{n2,i}a_np + 2a_np\sqrt{h_{n1,i}h_{n2,i}}\cos\left(\theta_{n1,i}-\theta_{n2,i}\right).
\end{eqnarray}

In the above expression, an extra term, which includes the square root of the channel gain product, appears in both the beneficial signal (\ref{SINRui2a2th}) and the interfering ones (\ref{SINRui2a4th}) from all $\mathbf{u}_n$'s, $\mathbf{u}_n\neq \mathbf{u}_i$. Its existence is due to the common part transmitted from both BSs serving one user. The signals $s_n^{(c1)}$ and $s_n^{(c2)}$ are scaled versions of each other. As a result, there is correlation at the reception of these signals transmitted by two different BSs. The term would not exist in case $s_n^{(c1)}$ and $s_n^{(c2)}$ were chosen independently. 

In the $\mathrm{SINR}^{(\mathbf{\theta})}_i$ expression above, the maximum beneficial signal for $\mathbf{u}_i$ is equal to $h_{i1}\left(1-a_i\right)p + h_{i2}a_ip+2a_ip\sqrt{h_{i1}h_{i2}}$. This is achieved when the 
two angles are equal $\theta_{i1}=\theta_{i2}$ and as a result the two parts of the common signal from $\mathbf{b}_{i1}$ and $\mathbf{b}_{i2}$ are received in-phase after fading through the channel. 
This knowledge over the phases should be available at the transmitters side. Considering the interference term and the angular influence on it, control over the phase shift is not possible because transmission is usually done using local information and aims at the benefit of users within the local cell. Hence, the angles 
$\theta_{n1,i},\ \theta_{n2,i}$, $\forall n\neq i$ are realizations of independent and uniformly distributed random variables. As a result, it makes sense to consider the interference value for $\mathbf{u}_i$ in expectation over the reception angles of the signals related to all $\mathbf{u}_n\neq \mathbf{u}_i$. In such case, since $\mathbb{E}_{\theta_{n1,i},\ \theta_{n2,i}}\left[\cos\left(\theta_{n1,i} - \theta_{n2,i}\right)\right] = 0$, the extra term disappears. Altogether, we will use the following expression in place of (\ref{SINRui2a1th})-(\ref{SINRui2a4th})

\begin{eqnarray}
\label{SINRui2a1}
\mathrm{SINR}_{i}\left(\mathbf{a},p\right) & = & \frac{\mathcal{S}_i\left(a_i,p\right)}{\sigma_i^2 + \mathcal{I}_i\left(\mathbf{a}_{-i},p\right)}\\
\label{SINRui2a2}
\mathcal{S}_i\left(a_i,p\right) & := & h_{i1}\left(1-a_i\right)p + h_{i2}a_ip + 2a_ip\sqrt{h_{i1}h_{i2}}\\
\label{SINRui2a3}
\mathcal{I}_i\left(\mathbf{a}_{-i},p\right) & := & \sum_{n\neq i} h_{n1,i}\left(1-a_n\right)p + h_{n2,i}a_np.
\end{eqnarray}
It will be shown in section \ref{SectionV}, that the two $\mathrm{SINR}$ formulations in (\ref{SINRui2a1th})-(\ref{SINRui2a4th}) and (\ref{SINRui2a1})-(\ref{SINRui2a3}) provide coverage results that lie very close to each other. In particular, the suggested approximation, which considers the expectation of the interference term over ${\theta_{n1,i},\ \theta_{n2,i}}$ is valid for our model and produces results, which are almost identical with the original.

\subsection{A Family of Geometric Policies with either Full or No Cooperation}

In all above $\mathrm{SINR}$ expressions, the vector of power-split-ratios $\mathbf{a}$ is not predifined and it gives a 
continuous range of cooperation possibilities. In this work we will focus on the two extreme cases, given any fixed vector of choice $\mathbf{a}_{-i}$ for all users other than $\mathbf{u}_i$. The two cases are \textbf{user-optimal} because $a_i=0$ and $a_i=\frac{1}{2}$ are the \textbf{two possible minimizers of the signal} $\mathcal{S}_i\left(a_i,p\right)$ in (\ref{SINRui2a2}), when the choices of other users are held fixed. The choice depends on the relative quality of the channels from the two neighbouring BSs, namely the ratio $\frac{h_{i2}}{h_{i1}}$. If we ignore (or average out) the random effects of the fast fading, the critical ratio to define whether we fall in the case of No or Full cooperation is the distance ratio $\rho_i:=\frac{r_{i1}}{r_{i2}}$ (this is shown in Appendix D).
The $\mathrm{SINR}$ then takes following forms, by substitution of the appropriate values of $a_i$ in (\ref{SINRui2a1}).

\begin{itemize}
\item \textbf{No Cooperation} ($\mathrm{No\ Coop}$) for $a_i= 0$, which gives
\begin{eqnarray}
\label{NoC}
\mathrm{SINR}_i\left(0,\mathbf{a}_{-i}, p\right) & = & \frac{h_{i1} p}{\sigma_i^2 + \mathcal{I}_i\left(\mathbf{a}_{-i},p\right)}.
\end{eqnarray}
\item \textbf{Full Cooperation} ($\mathrm{Full\ Coop}$) for $a_i= \frac{1}{2}$, which gives
\begin{eqnarray}
\mathrm{SINR}_i\left(\frac{1}{2},\mathbf{a}_{-i},p\right) & = & \frac{\frac{\left(\sqrt{h_{i1}}+\sqrt{h_{i2}}\right)^2}{2}p}{\sigma_i^2 + \mathcal{I}_i\left(\mathbf{a}_{-i},p\right)}.
\label{FC}
\end{eqnarray}
\end{itemize}

\begin{Def}
\label{PolicyGeomDef}
The \textbf{user-optimal geometric policies} considered in this work are the family of policies with \textbf{global} parameter $\rho\in\left[0,1\right]$ such that
\begin{eqnarray}
\label{PolicyGeom}
a_i & = & \left\{
\begin{tabular}{l l l}
$0$ & $(\mathrm{No\ Coop})$ & , if $r_{i1}\leq \rho r_{i2}$\\
$\frac{1}{2}$ & $(\mathrm{Full\ Coop})$ & , if $r_{i1}> \rho r_{i2}$
\end{tabular}
\right..
\label{CaseOPTstoch}
\end{eqnarray}
The policies are called "user-optimal" and "geometric" because the optimal choice to cooperate or not depends on the relative position of each user to its two closest BSs. The ratio $\rho\in\left[0,1\right]$ defines the planar cooperation regions, has a global value network-wide and is left as an optimization variable.
\end{Def}
The $\mathrm{SINR}_i$ in (\ref{SINRui2a1}) can be rewritten using the policies in Def. \ref{PolicyGeomDef} and the expressions in (\ref{NoC}) and (\ref{FC}) as

\begin{eqnarray}
\mathrm{SINR}_i\left(\rho,r_{i1},r_{i2},\mathbf{a}_{-i},p\right) & = & \mathrm{SINR}_i\left(0,\mathbf{a}_{-i},p,r_{i1},r_{i2}\right) \mathbbm{1}_{\left\{r_{i1}\leq \rho r_{i2}\right\}}\nonumber\\ 
& + & \mathrm{SINR}_i\left(\frac{1}{2},\mathbf{a}_{-i}, p, r_{i1},r_{i2}\right) \mathbbm{1}_{\left\{r_{i1}> \rho r_{i2}\ \&\ r_{i1}\leq r_{i2}\right\}}.
\label{SINRo2}
\end{eqnarray}
The indicator functions depend on the relative user distance between the two BSs and the parameter $\rho$, as long as user $\mathbf{u}_i$ is a planar point within the 2-Voronoi cell of $\mathbf{b}_{i1}$ and $\mathbf{b}_{i2}$. The expression in (\ref{SINRo2}) takes the following two values at the lower and upper limits of $\rho$:
\begin{itemize}
\item For $\rho=0$ $\mathrm{Full\ Coop}$ covers the entire 2-Voronoi cell.
\item For $\rho=1$ $\mathrm{No\ Coop}$ is applied in the entire 2-Voronoi cell.
\end{itemize}
Considering intermediate values of $\rho$, we provide the geometric locus of points on the plane 
where $\mathrm{Full\ Coop}$ is applied, in the following Lemma (for a proof see Appendix E).

\begin{Lem}
Given a fixed parameter $\rho$ and locations of $\mathbf{b}_{i1}$ and $\mathbf{b}_{i2}$ on the plane, 
the geometric locus of these points that satisfy $r_{i1}\leq \rho r_{i2}$ for $\rho\in\left[0,1\right)$ 
is a \textbf{disc}. For $\rho=1$ the locus of points degenerates to the line which passes over the 1-Voronoi boundary of the two cells.
\label{ProGL}
\end{Lem}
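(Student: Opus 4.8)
The plan is to recognize the defining inequality $r_{i1}\le\rho r_{i2}$ as an Apollonius-type condition on the ratio of distances to the two fixed points $\mathbf{b}_{i1}$ and $\mathbf{b}_{i2}$, and to reduce it to a single quadratic inequality in the coordinates of the candidate point. First I would fix an orthonormal frame with origin at $\mathbf{b}_{i1}$ and with $\mathbf{b}_{i2}$ on the positive horizontal axis, say $\mathbf{b}_{i1}=(0,0)$ and $\mathbf{b}_{i2}=(d,0)$ with $d:=d(\mathbf{b}_{i1},\mathbf{b}_{i2})>0$. Writing the candidate point as $\mathbf{u}=(x,y)$, we have $r_{i1}^2=x^2+y^2$ and $r_{i2}^2=(x-d)^2+y^2$. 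Since both sides of $r_{i1}\le\rho r_{i2}$ are nonnegative, this inequality is equivalent to $r_{i1}^2\le\rho^2 r_{i2}^2$, which after expansion and grouping becomes
\begin{equation}
(1-\rho^2)(x^2+y^2)+2\rho^2 d\,x-\rho^2 d^2\le 0.
\label{eq:quad}
\end{equation}

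The decisive structural observation is the sign of the coefficient $1-\rho^2$ of the quadratic term $x^2+y^2$. For $\rho\in[0,1)$ this coefficient is strictly positive, so I would divide \eqref{eq:quad} by $1-\rho^2$ and complete the square in $x$. This yields
\begin{equation}
\left(x+\frac{\rho^2 d}{1-\rho^2}\right)^{2}+y^{2}\le\left(\frac{\rho d}{1-\rho^2}\right)^{2},
\label{eq:disc}
\end{equation}
which is exactly the closed disc with center $\mathbf{c}=\left(-\dfrac{\rho^2 d}{1-\rho^2},\,0\right)$ and radius $R=\dfrac{\rho d}{1-\rho^2}$, proving the first assertion. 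I would remark that this disc lies on the $\mathbf{b}_{i1}$ side and, consistently with the earlier observations, degenerates to the single point $\mathbf{b}_{i1}$ when $\rho=0$ (radius zero), while both $\mathbf{c}$ and $R$ diverge as $\rho\uparrow 1$, so the disc swells toward a half-plane.

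For the boundary case $\rho=1$ the quadratic terms in \eqref{eq:quad} cancel and the inequality collapses to the linear condition $2d\,x-d^2\le 0$, i.e. $x\le d/2$; its boundary $x=d/2$ is the perpendicular bisector of the segment $\mathbf{b}_{i1}\mathbf{b}_{i2}$, which by the definition of the Voronoi tessellation in Section~\ref{SecIIA} is precisely the shared edge on the $1$-Voronoi boundary of the two cells. This establishes the degeneration to a line. The argument is essentially algebraic and the only genuinely delicate point, rather than an obstacle, is to track the sign of $1-\rho^2$ carefully so as to distinguish the bounded disc ($\rho<1$) from the limiting unbounded locus ($\rho=1$); the completion of the square and the simplification of the right-hand side of \eqref{eq:disc} are routine. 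One should also note that the whole statement is coordinate-free, so the chosen frame entails no loss of generality.
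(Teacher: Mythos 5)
Your proof is correct and follows essentially the same route as the paper's: reduce $r_{i1}\leq\rho r_{i2}$ to a quadratic inequality in the coordinates, divide by $1-\rho^2$ when $\rho<1$, and complete the square to exhibit the Apollonius disc, with the $\rho=1$ case collapsing to the perpendicular bisector. The only difference is your choice of a normalized frame ($\mathbf{b}_{i1}$ at the origin, $\mathbf{b}_{i2}$ at $(d,0)$), which makes the identification of the $\rho=1$ boundary with the 1-Voronoi edge immediate, whereas the paper works in general coordinates and verifies this by a separate slope computation.
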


In Fig. \ref{fig:ExampleCoop04} and Fig. \ref{fig:ExampleCoop09} we illustrate the geometric locus of points where $\mathrm{Full\ Coop}$ is applied, that is all points for which $\mathbbm{1}_{\left\{r_{i1}> \rho r_{i2}\ \&\ r_{i1}\leq r_{i2}\right\}} = \mathbbm{1}_{\left\{r_{i1}> \rho r_{i2}\ \&\ \mathcal{V}_2\left(\mathbf{b}_{i1},\mathbf{b}_{i2}\right)\right\}} =1$, for a given realization of BS positions. This is the subregion of the
2-Voronoi cell which lies outside the discs described above. The larger the value of $\rho$, the "thinner" the $\mathrm{Full\ Coop}$ region. We provide two examples, one for $\rho=0.4$ and another one for $\rho=0.9$. Observe that in the case of small $\rho$ there can appear subregions of the 2-Voronoi cell far away from the boundary and specifically even "behind" the BS where $\mathrm{Full\ Coop}$ can be applied. Furthermore, in the case of values of $\rho$ close to $1$ observe how the $\mathrm{Full\ Coop}$ region tends to disappear, when the 
two circles reach the 1-Voronoi edge.

\begin{figure}[t]    
\centering  
\label{fig:VoronoiREG}
	 		\subfigure[Cooperation Regions for $\rho=0.4$.]{          
           \includegraphics[trim = 15mm 55mm 10mm 45mm, clip, width=0.45\textwidth]{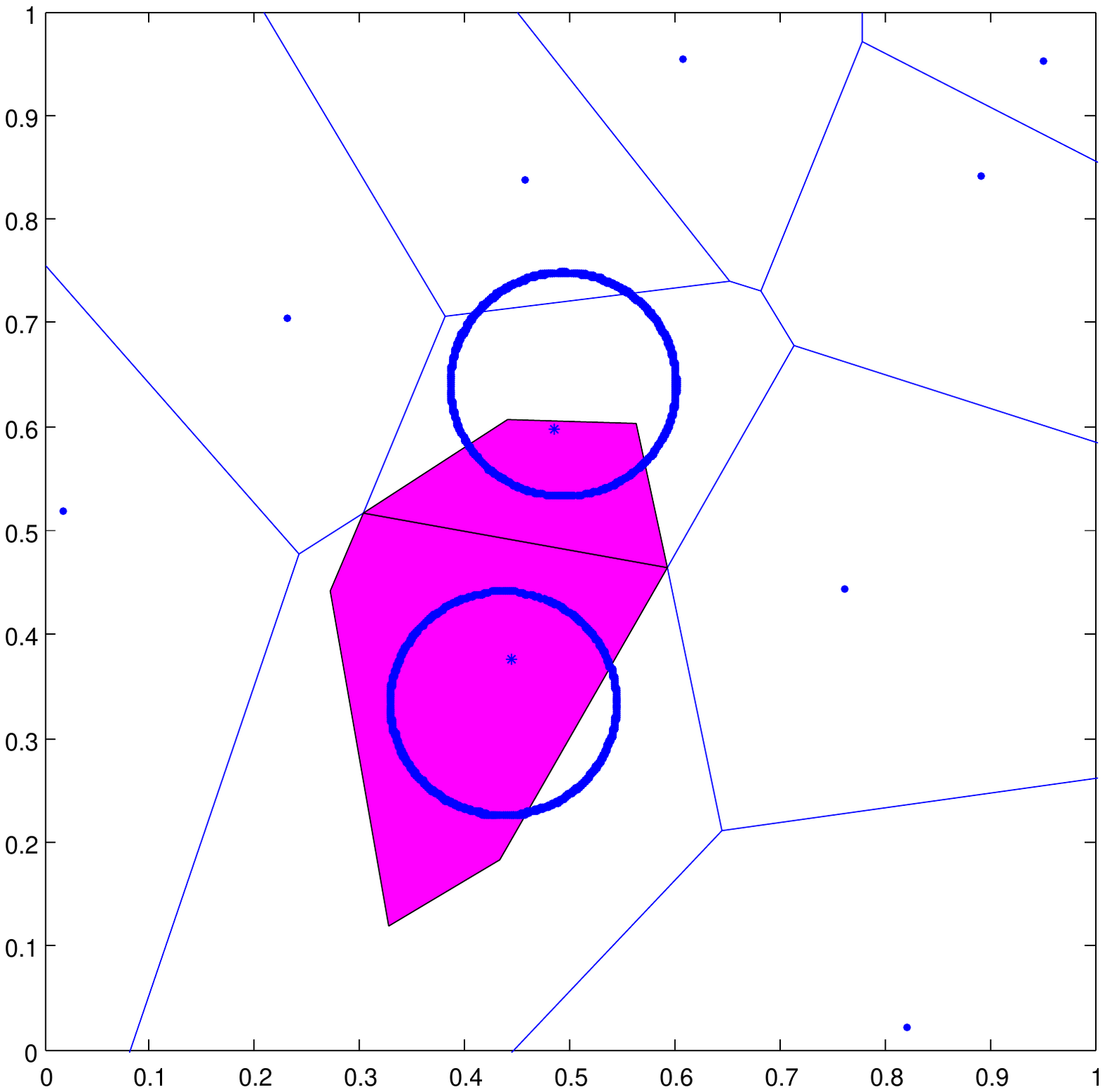}
           \label{fig:ExampleCoop04}
           }
            \subfigure[Cooperation Regions for $\rho=0.9$]{          
           \includegraphics[trim = 15mm 55mm 10mm 45mm, clip, width=0.45\textwidth]{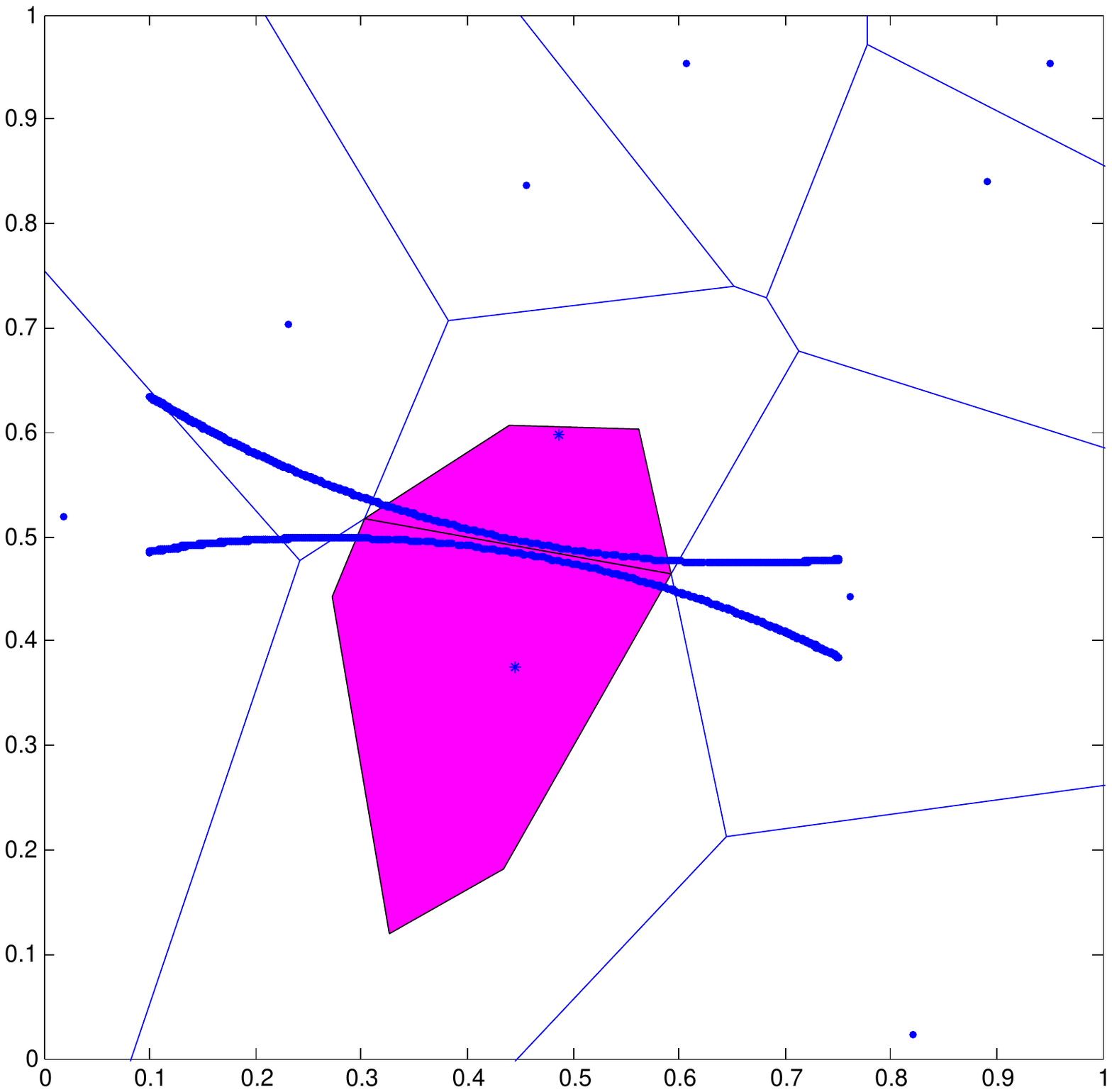}
           \label{fig:ExampleCoop09}
           }
           \caption{Cooperation regions between two neighbouring BSs, in an example topology of 10 uniformly scattered atoms and two different values of global $\rho$.}
\end{figure}

%


\section{Stochastic Geometry and Coverage}
\label{SectionIII}

We formulate the problem within the framework of Stochastic Geometry \cite{BaccelliBookStoch}. We focus on a random location on the plane and set its coordinates as 
the origin $\left(0,0\right)$. We assume that a user is positioned at this point, whom we denote by $\mathbf{u}_o$. The two closest BSs are $\mathbf{b}_1:=\mathbf{b}_{o1}$, $\mathbf{b}_2:=\mathbf{b}_{o2}$ with distances $r_1=r_{o1}$, $r_2:=r_{o2}$ from the typical location and 
gains $h_{1}:= h_{o1} = g_{1} r_{1}^{-\beta}$, $h_{2}:=h_{o2} = g_{2} r_{2}^{-\beta}$, 
$h_{n1}:=h_{n1,o} = g_{n1} d_{n1}^{-\beta}$ and $h_{n2}:=h_{n2,o} = g_{n2} d_{n2}^{-\beta}$.
Furthermore, $g_1$, $g_2$, $g_{n1}$ and $g_{n2}$ are realizations of independent exponential r.v.'s with mean $p$ (the individual per-user power) $G_{ni},G_i\sim \exp\left(1/p\right)$, $i=1,2$. Consequently, $\sqrt{G}$ follows the Rayleigh distribution.

The aim of this work is to derive the coverage of the cooperating system by calculating the 
coverage probability 

\begin{eqnarray}
q_c\left(T,\lambda,\beta,p,\rho\right):=\mathbb{P}\left[\mathrm{SINR}>T\right],
\label{CovDef}
\end{eqnarray}
where the $\mathrm{SINR}$ is measured at a \textbf{typical location} of the plane. Notice that it would be more natural to look at the SINR at the location of a \textbf{typical user} and that the two definitions do not coincide here because the point process of BSs and that of users are not independent. The coverage is a function of the threshold $T$, the p.p. intensity $\lambda$, the path-loss exponent $\beta$, the user power $p$ and the policy parameter $\rho$. 
From here on, the parameter set $\left\{T, \lambda, \beta, p\right\}$, which does not influence the analysis, is omitted 
from the arguments. The problem position extends the relevant work of Andrews et al. in \cite{AndrewsCoverage} for cellular networks. The $\mathrm{SINR}$ for the typical location can now be written using the policies in Definition \ref{PolicyGeomDef} and the expressions (\ref{NoC}) and (\ref{FC}).

\begin{eqnarray}
\mathrm{SINR}\left(\rho,r_1,r_2\right) & = & \frac{g_1 r_1^{-\beta}}{\sigma^2+ \mathcal{I}\left(\rho,r_2\right)} \mathbbm{1}_{\left\{r_1\leq \rho r_2\right\}} + \frac{\frac{\left(\sqrt{g_1 r_1^{-\beta}}+\sqrt{g_2 r_2^{-\beta}}\right)^2}{2}}{\sigma^2 + \mathcal{I}\left(\rho,r_2\right)} \mathbbm{1}_{\left\{r_1> \rho r_2\ \&\ r_1\leq r_2\right\}}
\label{SINRo2}
\end{eqnarray}

The interference $\mathcal{I}\left(\rho,r_2\right)$ takes into consideration all the power splitting decisions of primary users related to BSs with distance $d_{n1}, d_{n2}\geq r_2$ from the origin - the reader can refer to (\ref{SINRui2a3}). The decisions are determined by the value of $\rho$.

\subsection{Distribution of distance to the two closest neighbours}

For the stochastic geometry analysis, it will be useful to derive the probability density function (p.d.f.) of the 
r.v.'s of the distances from the typical location to the first and second closest BS neighbour. The proofs of the lemmas that follow together with supplementary material are provided in Appendix F.

\begin{Lem}
Given a p.p.p. of intensity $\lambda$, the joint p.d.f. of the distances $\left(r_1,r_2\right)$ between the typical location $\mathbf{u}_o$ and its first and second closest neighbour, is equal to
\begin{eqnarray}
f_{r_1,r_2}\left(r_1,r_2\right) & = & \left(2\lambda\pi\right)^2 r_1 r_2e^{-\lambda \pi r_2^2}.
\label{pdfN2f}
\end{eqnarray}
The expected value of the distance $r_2$ is equal to
\begin{eqnarray}
\mathbb{E}\left[r_2\right] & = & \frac{3}{4\sqrt{\lambda}}.
\label{ExpR2}
\end{eqnarray}
\label{LemJointD}
\end{Lem}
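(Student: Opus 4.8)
The plan is to derive the joint density directly from the defining void and counting properties of the homogeneous Poisson point process, exploiting the fact that the numbers of atoms falling in disjoint planar regions are independent. Fix $0\le r_1\le r_2$ and decompose the plane into four disjoint regions: the open disc $\mathcal{B}\left(\mathbf{u}_o,r_1\right)$, the infinitesimal annulus of radii $\left[r_1,r_1+dr_1\right]$, the intermediate annulus of radii $\left[r_1+dr_1,r_2\right]$, and the infinitesimal annulus of radii $\left[r_2,r_2+dr_2\right]$. The event that the first neighbour lies at distance in $\left[r_1,r_1+dr_1\right]$ and the second at distance in $\left[r_2,r_2+dr_2\right]$ coincides, up to terms of higher order in $dr_1,dr_2$, with the intersection of four independent events: no atom in the inner disc, exactly one atom in the inner thin annulus, no atom in the intermediate annulus, and exactly one atom in the outer thin annulus.

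Then I would multiply the corresponding probabilities. By the void-probability formula the empty-disc factor is $e^{-\lambda\pi r_1^2}$, the empty intermediate annulus contributes $e^{-\lambda\pi\left(r_2^2-r_1^2\right)}$, and each thin annulus of area $2\pi r_k\,dr_k$ contributes a single-atom factor $\lambda\,2\pi r_k\,dr_k$ to leading order. The two exponentials collapse to $e^{-\lambda\pi r_2^2}$, and dividing out $dr_1\,dr_2$ yields
\[
f_{r_1,r_2}\left(r_1,r_2\right)=\left(2\lambda\pi\right)^2 r_1 r_2\, e^{-\lambda\pi r_2^2},\qquad 0\le r_1\le r_2,
\]
which is exactly (\ref{pdfN2f}). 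A fully rigorous route avoiding the infinitesimal heuristic is to write the joint tail $\mathbb{P}\left[r_1>s_1,\,r_2>s_2\right]$ for $s_1\le s_2$ in terms of the Poisson count in $\mathcal{B}\left(\mathbf{u}_o,s_2\right)$ being either zero or exactly one atom located inside radius $s_1$, and then to differentiate twice; this gives the same density.

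For the expectation (\ref{ExpR2}) I would first integrate out $r_1$ over $\left[0,r_2\right]$ to obtain the marginal second-neighbour density $f_{r_2}\left(r_2\right)=2\left(\lambda\pi\right)^2 r_2^3\, e^{-\lambda\pi r_2^2}$ (the $k=2$ case of the standard generalized-Gamma law for the distance to the $k$-th nearest atom). Hence $\mathbb{E}\left[r_2\right]=2\left(\lambda\pi\right)^2\int_0^\infty r_2^4\, e^{-\lambda\pi r_2^2}\,dr_2$, and substituting the Gaussian fourth-moment integral $\int_0^\infty x^4 e^{-a x^2}\,dx=\tfrac{3}{8}\sqrt{\pi}\,a^{-5/2}$ with $a=\lambda\pi$ collapses the constants to $\tfrac{3}{4\sqrt{\lambda}}$.

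None of these steps poses a genuine obstacle once the region decomposition is fixed; the computation is routine. The only point requiring care is the infinitesimal counting argument itself: one must check that the four regions are disjoint so that the independence property applies, that the labels \emph{closest} and \emph{second-closest} correspond precisely to one atom in each thin shell with the inner disc and the intermediate annulus empty, and that the $O\left(dr^2\right)$ corrections (such as two atoms landing in one thin shell) are negligible in the limit. Presenting the tail-probability version makes these points rigorous if one prefers to avoid the differential heuristic.
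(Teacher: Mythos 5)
Your proposal is correct and follows essentially the same route as the paper: both arguments rest on the void probabilities of the inner disc and the intermediate annulus together with independence of Poisson counts on disjoint regions, and both then integrate out $r_1$ to get the marginal density $f_{r_2}\left(r_2\right)=2\left(\lambda\pi\right)^2 r_2^3 e^{-\lambda\pi r_2^2}$ and compute $\mathbb{E}\left[r_2\right]=\frac{3}{4\sqrt{\lambda}}$. The only difference is presentational: the paper obtains $f_{r_1}$ and the conditional density $f_{r_2|r_1}$ by differentiating the corresponding (conditional) cumulative distributions and combines them via Bayes' rule, whereas you assemble the joint density in a single infinitesimal four-region counting step (with the tail-probability argument as rigorization), and for the expectation the paper uses integration by parts where you quote the Gaussian fourth-moment integral — all equivalent computations.
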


An important lemma below, provides the expression for the probability for a user not 
to demand for cooperation, based on the region $r_1\leq \rho r_2$. The interesting observation is that the parameter $\rho\in\left[0,1\right]$ 
fully determines this probability.

\begin{Lem}
\label{Lemrho}
The probability of a user on the 2D plane, not to demand for BS cooperation for its service is
\begin{eqnarray}
\mathbb{P}\left[\mathrm{No\ Coop}\right] = \mathbb{P}\left[r_1\leq \rho r_2\right] = \rho^2.
\end{eqnarray}
\end{Lem}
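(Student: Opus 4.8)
The plan is to integrate the joint density from Lemma~\ref{LemJointD} over the region where $r_1 \le \rho r_2$. Since $f_{r_1,r_2}$ is supported on $\{0 \le r_1 \le r_2\}$ and $\rho \in [0,1]$, the event $\{r_1 \le \rho r_2\}$ sits entirely inside that support, so the constraint $r_1 \le r_2$ imposes nothing extra and can be dropped. First I would write
\begin{eqnarray}
\mathbb{P}\left[r_1 \le \rho r_2\right] & = & \int_0^\infty \int_0^{\rho r_2} \left(2\lambda\pi\right)^2 r_1 r_2\, e^{-\lambda\pi r_2^2}\, dr_1\, dr_2 \nonumber
\end{eqnarray}
and carry out the inner integral in $r_1$, which is elementary: $\int_0^{\rho r_2} r_1\, dr_1 = \tfrac{1}{2}\rho^2 r_2^2$.

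Substituting this back leaves a single integral in $r_2$,
\begin{eqnarray}
\mathbb{P}\left[r_1 \le \rho r_2\right] & = & 2\left(\lambda\pi\right)^2 \rho^2 \int_0^\infty r_2^3\, e^{-\lambda\pi r_2^2}\, dr_2. \nonumber
\end{eqnarray}
The remaining integral is a standard Gaussian-type moment; with the substitution $u = r_2^2$ it becomes $\tfrac{1}{2}\int_0^\infty u\, e^{-\lambda\pi u}\, du = \tfrac{1}{2}(\lambda\pi)^{-2}$. Multiplying through, the prefactor $2(\lambda\pi)^2$ and the factor $\tfrac{1}{2}(\lambda\pi)^{-2}$ cancel exactly, leaving $\rho^2$.

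The computation is routine, so I would not describe any step as a genuine obstacle; the one point worth emphasizing is that the intensity $\lambda$ and the constant $\pi$ vanish entirely from the answer, which is the conceptually interesting feature. A cleaner way to see why this must happen, which I would include as an alternative derivation, is to condition on $r_2$. Dividing the joint density by the marginal $f_{r_2}(r_2) = 2(\lambda\pi)^2 r_2^3 e^{-\lambda\pi r_2^2}$ gives the conditional law $f_{r_1\mid r_2}(r_1\mid r_2) = 2 r_1 / r_2^2$ on $[0,r_2]$, which is precisely the distance-to-center density of a point drawn uniformly in the disc of radius $r_2$. Conditionally on $r_2$, the event $\{r_1 \le \rho r_2\}$ is then just the event that this uniform point lands in the concentric disc of radius $\rho r_2$, whose probability is the area ratio $(\rho r_2)^2 / r_2^2 = \rho^2$, independent of $r_2$. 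Averaging over $r_2$ returns $\rho^2$ immediately and makes transparent why no scale-dependent quantity can survive.
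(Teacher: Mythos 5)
Your proof is correct and takes essentially the same route as the paper: both integrate the joint density of Lemma~\ref{LemJointD} over the event $\left\{r_1\leq \rho r_2\right\}$, the only difference being the order of integration (you do the inner integral in $r_1$ up to $\rho r_2$, while the paper integrates $r_2$ first from $r_1/\rho$ to $\infty$ and then $r_1$), which is an immaterial Fubini swap. Your supplementary conditioning argument, identifying $f_{r_1\mid r_2}\left(r_1\mid r_2\right)=2r_1/r_2^2$ as the distance law of a uniform point in a disc so that the answer is the area ratio $\rho^2$ independently of $r_2$, is a clean formalization of exactly the area-ratio intuition the paper itself sketches in the remark following its proof.
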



\subsection{Cooperative Channel Fading Distribution and Properties}

In the current subsection we will derive a general expression for the distribution 
of the signal power when full cooperation is applied. Furthermore, certain properties of this 
distribution will be provided. The proofs of the results in this paragraph are provided in Appendix G.

\begin{Lem}
\label{LemZ}
Given the r.v.'s $G_i\sim \exp\left(1/p_i\right)$, $i=1,2$, the Laplace Transform (LT) of the r.v. 
\begin{eqnarray}
\label{RVz}
Z_{r_1,r_2} := \left(\sqrt{G_1 r_1^{-\beta}}+\sqrt{G_2 r_2^{-\beta}}\right)^2
\end{eqnarray}
is equal to
\begin{eqnarray}
\label{LTz}
\mathcal{L}_Z\left(s,\mu_1,\mu_2\right) = \frac{-s\sqrt{\frac{1}{\mu_1\mu_2}}\pi +s\sqrt{\frac{1}{\mu_1\mu_2}} \arctan\left(\sqrt{\frac{\mu_1}{\mu_2}}g\left(s\right)\right)+s\sqrt{\frac{1}{\mu_1\mu_2}} \arctan\left(\sqrt{\frac{\mu_2}{\mu_1}}g\left(s\right)\right)+ g\left(s\right)}{g\left(s\right)^3},
\end{eqnarray}
where $g\left(s\right) := \sqrt{1+\left(\frac{1}{\mu_1}+\frac{1}{\mu_2}\right) s}$ and $\mu_i:=\frac{r_i^{\beta}}{p_i}$ for $i=1,2$.

Furthermore, its p.d.f. is square integrable and its expectation is equal to
\begin{eqnarray}
\label{EZpdf}
\mathbb{E}\left[Z_{r_1,r_2}\right] & = & \sqrt{\frac{1}{\mu_1\mu_2}}\left(\frac{\pi}{2} + \frac{\mu_1+\mu_2}{\sqrt{\mu_1\mu_2}} \right).
\end{eqnarray}
\end{Lem}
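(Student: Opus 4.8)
The plan is to read $Z_{r_1,r_2}$ as the square of a sum of two independent Rayleigh variables and to compute its Laplace transform by a single polar-coordinate integration. Setting $X_i:=\sqrt{G_i r_i^{-\beta}}$ for $i=1,2$, the assumption $G_i\sim\exp(1/p_i)$ makes $G_i r_i^{-\beta}$ exponential with rate $\mu_i=r_i^\beta/p_i$, so each $X_i$ carries the Rayleigh density $f_{X_i}(x)=2\mu_i x\,e^{-\mu_i x^2}$ on $x\ge 0$, and $Z_{r_1,r_2}=(X_1+X_2)^2$. By independence I would write
\[
\mathcal{L}_Z(s)=4\mu_1\mu_2\int_0^\infty\int_0^\infty x_1 x_2\,e^{-s(x_1+x_2)^2-\mu_1 x_1^2-\mu_2 x_2^2}\,dx_1\,dx_2 .
\]
The exponent is the negative of the quadratic form with coefficients $(s+\mu_1)x_1^2+(s+\mu_2)x_2^2+2s\,x_1x_2$, whose associated determinant $\mu_1\mu_2+s(\mu_1+\mu_2)=\mu_1\mu_2\,g(s)^2$ already singles out the quantity $g(s)$ of the statement.

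First I would pass to polar coordinates $x_1=t\cos\phi$, $x_2=t\sin\phi$ with $\phi\in[0,\pi/2]$. The radial part is elementary, $\int_0^\infty t^3 e^{-Q(\phi)t^2}\,dt=\tfrac{1}{2}Q(\phi)^{-2}$, where $Q(\phi)=(s+\mu_1)\cos^2\phi+(s+\mu_2)\sin^2\phi+2s\cos\phi\sin\phi$, leaving the single integral $\mathcal{L}_Z(s)=\mu_1\mu_2\int_0^{\pi/2}\sin(2\phi)\,Q(\phi)^{-2}\,d\phi$. The substitution $\psi=2\phi$ then gives
\[
\mathcal{L}_Z(s)=\frac{\mu_1\mu_2}{2}\int_0^\pi\frac{\sin\psi\,d\psi}{\left(A+B\sin\psi+C\cos\psi\right)^2},
\]
with $A=s+\tfrac{\mu_1+\mu_2}{2}$, $B=s$, $C=\tfrac{\mu_1-\mu_2}{2}$, and one checks directly that $A^2-B^2-C^2=\mu_1\mu_2+s(\mu_1+\mu_2)=\mu_1\mu_2\,g(s)^2>0$, so the integral is of genuine arctangent type.

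I would evaluate this standard trigonometric integral by writing it as $-\partial_B\int_0^\pi(A+B\sin\psi+C\cos\psi)^{-1}d\psi$ and applying the Weierstrass substitution $\tau=\tan(\psi/2)$, which turns the inner integral into the rational integral $\int_0^\infty 2\,d\tau\,[(A-C)\tau^2+2B\tau+(A+C)]^{-1}$; its value is an arctangent with argument scaled by $\sqrt{A^2-B^2-C^2}=\sqrt{\mu_1\mu_2}\,g(s)$, and the two endpoint contributions ($\tau=0$ from $\psi=0$ and $\tau\to\infty$ from $\psi=\pi$) are precisely what split off into the two separate terms $\arctan(\sqrt{\mu_1/\mu_2}\,g(s))$ and $\arctan(\sqrt{\mu_2/\mu_1}\,g(s))$. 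Carrying out the $B$-derivative and regrouping everything over the common denominator $g(s)^3$ should reproduce (\ref{LTz}) exactly. I expect this algebraic bookkeeping — forcing the two arctangent terms, the $\pi$ term and the bare $g(s)$ term to land over $g(s)^3$ with the correct signs — to be the main obstacle; as consistency checks one verifies $\mathcal{L}_Z(0)=1$ (since $g(0)=1$) and that the large-$s$ expansion, after including the $O(g^{-3})$ terms of the arctangents, yields the delicate cancellation $\mathcal{L}_Z(s)\sim \mu_1\mu_2/(3s^2)$.

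For the mean I would bypass the transform and simply expand $\mathbb{E}[Z]=\mathbb{E}[X_1^2]+2\mathbb{E}[X_1]\mathbb{E}[X_2]+\mathbb{E}[X_2^2]$, using the Rayleigh moments $\mathbb{E}[X_i^2]=1/\mu_i$ and $\mathbb{E}[X_i]=\tfrac{1}{2}\sqrt{\pi/\mu_i}$, which at once gives $\tfrac{\mu_1+\mu_2}{\mu_1\mu_2}+\tfrac{\pi}{2}\tfrac{1}{\sqrt{\mu_1\mu_2}}$, i.e. (\ref{EZpdf}); this doubles as an independent check on the transform through $\mathbb{E}[Z]=-\mathcal{L}_Z'(0)$. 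Finally, for square integrability of the density I would argue elementarily: the convolution estimate $f_W(w)\sim\tfrac{2}{3}\mu_1\mu_2\,w^3$ as $w\to 0$ (both Rayleigh factors vanishing linearly at the origin), combined with $z=w^2$ and $f_Z(z)=f_W(\sqrt z)/(2\sqrt z)$, shows $f_Z$ stays bounded near $0$, while $Z=W^2$ with $W$ having Gaussian-type tails forces exponential decay; a bounded, exponentially decaying density lies in $L^2$. Equivalently, the decay $\mathcal{L}_Z(s)=O(s^{-2})$ transported to the imaginary axis shows the characteristic function is $O(|\omega|^{-2})$, hence in $L^2$, and Plancherel gives $f_Z\in L^2$.
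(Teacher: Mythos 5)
Your proof is correct, but it takes a genuinely different route from the paper's. The paper first constructs the density of $\sqrt{Z}$ by convolving the two Rayleigh densities, converts it to the density of $Z$ through $f_Z(z)=f_{\sqrt{Z}}(\sqrt{z})/(2\sqrt{z})$, and then obtains (\ref{LTz}) by integrating $e^{-sz}f_Z(z)$ (that final integration is simply stated, not carried out); square integrability is asserted without argument, and the mean is recovered as $-\mathcal{L}_Z'(0)$. You never compute the density at all: you write $\mathcal{L}_Z(s)$ directly as a double integral against the joint Rayleigh law, pass to polar coordinates, and reduce to an explicit trigonometric integral evaluated by the Weierstrass substitution; you get the mean from the elementary moment expansion $\mathbb{E}[Z]=\mathbb{E}[X_1^2]+2\mathbb{E}[X_1]\mathbb{E}[X_2]+\mathbb{E}[X_2^2]$ with Rayleigh moments; and you actually prove square integrability (a bounded density with exponential tail is in $L^2$), which the paper leaves as a claim. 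Your version is thus more self-contained --- it exposes where $g(s)$ and the arctangent structure come from, and the moment computation doubles as an independent check on the transform through $\mathbb{E}[Z]=-\mathcal{L}_Z'(0)$ --- at the cost of heavier trigonometric bookkeeping. One imprecision worth fixing: the Weierstrass evaluation naturally produces the single term $\frac{\pi}{2}-\arctan\bigl(s/(\sqrt{\mu_1\mu_2}\,g(s))\bigr)$, not two separate arctangents attached to the endpoints $\tau=0$ and $\tau\to\infty$; to land on the two-term form (\ref{LTz}) you need the addition identity $\arctan\bigl(\sqrt{\mu_1/\mu_2}\,g\bigr)+\arctan\bigl(\sqrt{\mu_2/\mu_1}\,g\bigr)=\frac{\pi}{2}+\arctan\bigl(s/(\sqrt{\mu_1\mu_2}\,g)\bigr)$, valid here because the product of the two arguments is $g^2\geq 1$. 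That is pure algebra, not a gap, and your two consistency checks ($\mathcal{L}_Z(0)=1$ and the large-$s$ tail $\mu_1\mu_2/(3s^2)$, which matches the small-$z$ behavior $f_Z(z)\sim\mu_1\mu_2 z/3$ implied by your convolution estimate) confirm that the computation closes.
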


We further provide an interesting property of the r.v. $\frac{Z_{r,r}}{2}$ of the fading for $\mathrm{Full\ Coop}$, in the special case of $\mu_1=\mu_2=\mu:=\frac{r^{\beta}}{p}$ (the expression given in (\ref{RVz})), with relation to the r.v. $G\sim\exp{\left(\mu\right)}$, which is the case for no cooperation. We use the notion of \textit{stochastic ordering} \cite[A4. p.411]{AsmBook} based on which the r.v. $Y$ stochastically dominates the r.v. $X$ and we write $X\leq_{st}Y$, if $\mathbb{P}\left[X>t\right]\leq \mathbb{P}\left[Y>t\right]$ for all $t$.

\begin{Lem}
\label{StochOrd1}
Given $\mu_1=\mu_2=\mu:=\frac{r^{\beta}}{p}$, and the two r.v.'s $G\sim\exp\left(\mu\right)$ and $Z_{r,r}/2$ from (\ref{RVz}), the following stochastic ordering inequality holds
\begin{eqnarray}
\label{SOineq}
\begin{tabular}{l l l l}
If $\mu< 1$ & $\Rightarrow$ & $G\leq_{st}\frac{Z_{r,r}}{2}$. & 
\end{tabular}
\end{eqnarray}
\end{Lem}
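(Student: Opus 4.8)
The plan is to prove the stochastic ordering directly from its definition, i.e. to show that the complementary cumulative distribution functions (CCDFs) satisfy $\mathbb{P}\left[G>t\right]\leq \mathbb{P}\left[Z_{r,r}/2>t\right]$ for every $t\geq 0$. Since $G\sim\exp(\mu)$, the left-hand side is simply $e^{-\mu t}$. For the right-hand side I would first rewrite the full-cooperation signal in a convenient form: with $\mu_1=\mu_2=\mu=r^\beta/p$, the two amplitudes $A_i:=\sqrt{G_i r^{-\beta}}$ are i.i.d. and, because $\mathbb{P}\left[A_i>x\right]=\mathbb{P}\left[G_i>x^2 r^\beta\right]=e^{-\mu x^2}$, each is Rayleigh with parameter $\mu$. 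Hence $Z_{r,r}/2=\frac{1}{2}(A_1+A_2)^2$ and $\mathbb{P}\left[Z_{r,r}/2>t\right]=\mathbb{P}\left[A_1+A_2>\sqrt{2t}\right]$.

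The core computation is to obtain this last probability in closed form. I would condition on $A_1$: the event holds automatically when $A_1\geq\sqrt{2t}$ and otherwise has conditional probability $e^{-\mu(\sqrt{2t}-A_1)^2}$. Carrying out the resulting one-dimensional integral (completing the square in the combined exponent $-\mu((\sqrt{2t}-x)^2+x^2)$, which factors out $e^{-\mu t}$, and recognising the remaining Gaussian integral) yields
\[
\mathbb{P}\left[Z_{r,r}/2>t\right]=e^{-2\mu t}+\sqrt{\pi\mu t}\,e^{-\mu t}\,\mathrm{erf}\left(\sqrt{\mu t}\right).
\]
This closed form, together with the formula for $\mathbb{E}\left[Z\right]$ in (\ref{EZpdf}), serves as a useful consistency check. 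Substituting into the CCDF inequality and dividing by $e^{-\mu t}>0$, the claim reduces to the single scalar inequality $\psi(u)\geq 1$ for all $u\geq0$, where $u:=\mu t$ and $\psi(u):=e^{-u}+\sqrt{\pi u}\,\mathrm{erf}(\sqrt u)$.

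Finally I would settle $\psi(u)\geq 1$ by a monotonicity argument: $\psi(0)=1$, and a direct differentiation gives $\psi'(u)=\frac{\sqrt\pi}{2\sqrt u}\mathrm{erf}(\sqrt u)\geq 0$ (the two $\pm e^{-u}$ contributions from the chain rule cancel), so $\psi$ is nondecreasing and stays at or above its value at the origin. This establishes $G\leq_{st}Z_{r,r}/2$.

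I expect the main obstacle to be the closed-form evaluation of $\mathbb{P}\left[A_1+A_2>\sqrt{2t}\right]$: the bookkeeping of the error function and the separate contribution of the region $\{A_1\geq\sqrt{2t}\}$ (which produces the $e^{-2\mu t}$ term) must be tracked carefully, whereas the concluding monotonicity step is short. I would also remark that the reduced inequality $\psi(u)\geq 1$ is independent of $\mu$, so the argument in fact delivers the ordering for every $\mu>0$; under the rate convention for $\exp(\cdot)$ fixed by (\ref{EZpdf}), the hypothesis $\mu<1$ is therefore sufficient but not necessary.
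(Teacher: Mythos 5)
Your proposal is correct, and although it follows the same broad strategy as the paper's own proof (compute the complementary c.d.f. of $Z_{r,r}$ in closed form and compare it with $e^{-\mu t}$), the execution differs at the decisive step, and yours is the sound one. The paper integrates the density (\ref{ConvX1X2}) and reports
\[
\mathbb{P}\left[Z_{r,r}>2t\right]=\frac{e^{-2\mu t}}{\mu^4}\left(e^{\mu t}\sqrt{\pi\mu t}\,\mathrm{erf}\left(\sqrt{\mu t}\right)+1\right),
\]
whose value at $t=0$ is $\mu^{-4}\neq 1$, so it cannot be a valid complementary c.d.f.; the factor $1/\mu^4$ is a normalization slip. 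Your conditioning computation gives the correct expression
\[
\mathbb{P}\left[Z_{r,r}>2t\right]=e^{-2\mu t}+\sqrt{\pi\mu t}\,e^{-\mu t}\,\mathrm{erf}\left(\sqrt{\mu t}\right),
\]
i.e.\ exactly the paper's formula multiplied by $\mu^4$ (and consistent with (\ref{EZpdf}), as your check confirms). The discrepancy matters: in the paper, the sign of the difference of the two tails at $t=0$ is $\mu^{-4}-1$, which is precisely where the hypothesis $\mu<1$ enters and what necessitates the subsequent no-level-crossing argument comparing derivatives; with the correct formula the problem reduces, as you show, to $\psi(u):=e^{-u}+\sqrt{\pi u}\,\mathrm{erf}\left(\sqrt{u}\right)\geq 1$ with $\psi(0)=1$ and $\psi'(u)=\frac{\sqrt{\pi}}{2\sqrt{u}}\mathrm{erf}\left(\sqrt{u}\right)\geq 0$, a two-line monotonicity argument. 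Your closing remark is also correct and worth stressing: the ordering $G\leq_{st}Z_{r,r}/2$ holds for every $\mu>0$, so the restriction $\mu<1$ in the statement is sufficient but not necessary, and the two-regime behaviour (a finite crossing point $t^*$ for $\mu\geq 1$) asserted in the appendix version of this lemma (Lemma \ref{StochOrd}) is an artifact of the same normalization error rather than a genuine property of the distributions.
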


We further use the notion of the \textit{Laplace-Stieltjes transform ordering} \cite{LaplaceOrd91} based on which, the r.v. $Y$ dominates the r.v. $X$ and we write $X\leq_L Y$, if 
$\mathcal{L}_X\left(s\right)\geq \mathcal{L}_Y\left(s\right)$, for all $s\geq 0$.

\begin{Lem}
\label{LaplaceOrd}
Given $\mu_1=\mu_2=\mu:=\frac{r^{\beta}}{p}$, and the two r.v.'s $G\sim\exp\left(\mu\right)$ and $Z_{r,r}/2$ from (\ref{RVz}), the following Laplace-Stieltjes transform ordering inequality holds
\begin{eqnarray}
\label{LaplOrdIneq}
G\leq_L \frac{Z_{r,r}}{2}.
\end{eqnarray}
\end{Lem}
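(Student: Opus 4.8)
The plan is to work directly with Laplace transforms rather than with distributions, since the definition of Laplace–Stieltjes ordering reduces the claim $G \leq_L Z_{r,r}/2$ to the single analytic inequality $\mathcal{L}_G(s) \geq \mathcal{L}_{Z/2}(s)$ for every $s \geq 0$. The left-hand side is immediate: since $G \sim \exp(\mu)$, we have $\mathcal{L}_G(s) = \mu/(\mu+s) = 1/(1+s/\mu)$. For the right-hand side I would use $\mathcal{L}_{Z/2}(s) = \mathbb{E}\left[e^{-(s/2)Z_{r,r}}\right] = \mathcal{L}_Z(s/2)$ and specialize the general expression (\ref{LTz}) to the symmetric case $\mu_1 = \mu_2 = \mu$.

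The decisive step is the algebraic simplification of (\ref{LTz}). Setting $\mu_1 = \mu_2 = \mu$ gives $\sqrt{1/(\mu_1\mu_2)} = 1/\mu$ and $\sqrt{\mu_1/\mu_2} = \sqrt{\mu_2/\mu_1} = 1$, so the two arctangent terms coincide and $g(s) = \sqrt{1 + 2s/\mu}$. The key observation is that evaluating at $s/2$ produces the clean value $g(s/2) = \sqrt{1 + s/\mu}$, which matches the denominator of $\mathcal{L}_G$. Writing $w := \sqrt{1+s/\mu}$, so that $w \geq 1$ for all $s \geq 0$ and $s/\mu = w^2 - 1$, the right-hand side collapses to
\begin{eqnarray*}
\mathcal{L}_{Z/2}(s) = \frac{\tfrac{w^2-1}{2}\left(2\arctan w - \pi\right) + w}{w^3},
\end{eqnarray*}
while $\mathcal{L}_G(s) = 1/w^2$.

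The inequality $\mathcal{L}_G(s) \geq \mathcal{L}_{Z/2}(s)$ can then be cleared of denominators by multiplying through by $w^3 > 0$; after cancelling the common term $w$ it reduces to the elementary statement
\begin{eqnarray*}
\frac{w^2-1}{2}\left(2\arctan w - \pi\right) \leq 0.
\end{eqnarray*}
This holds for every admissible $w$, since $w^2 - 1 \geq 0$ because $w \geq 1$, and $2\arctan w - \pi \leq 0$ because $\arctan w \leq \pi/2$ for all real $w$. The product is therefore non-positive, which establishes the claim for all $s \geq 0$.

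I expect the only genuine work to be the reduction of (\ref{LTz}); once one notices that passing from $s$ to $s/2$ turns $g$ into $\sqrt{1+s/\mu}$, matching the pole structure of $\mathcal{L}_G$, the comparison becomes transparent and the residual inequality is trivial. It is worth recording that for $\mu < 1$ the conclusion already follows from Lemma \ref{StochOrd1} together with the general fact that $X \leq_{st} Y$ forces $\mathbb{E}\left[e^{-sX}\right] \geq \mathbb{E}\left[e^{-sY}\right]$ (because $x \mapsto e^{-sx}$ is non-increasing); the virtue of the direct Laplace-transform argument is that it uses only $w \geq 1$, so it extends the ordering to the entire range $\mu > 0$, including the regime where stochastic dominance need not hold.
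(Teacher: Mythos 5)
Your proof is correct and follows essentially the same route as the paper: both use the scaling identity $\mathcal{L}_{Z/2}(s)=\mathcal{L}_Z(s/2)$, specialize (\ref{LTz}) to $\mu_1=\mu_2=\mu$, and conclude from $\arctan(\cdot)\leq \pi/2$ that the resulting expression is dominated by $\mathcal{L}_G(s)$ — the paper factors out $\mathcal{L}_G(s)$ and bounds the remaining parenthesis by $1$, while you clear denominators and show the residual term $\tfrac{w^2-1}{2}\left(2\arctan w-\pi\right)$ is non-positive, which is the same computation. Your closing remark relating the result to Lemma \ref{StochOrd1} for $\mu<1$ is a nice supplementary observation not in the paper, but the core argument is identical.
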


\subsection{Interference as Shot Noise}
\label{ParagInterference}
In the current subsection, we provide an expression for the random variable of the interference experienced at the typical location, together with its Laplace transform. To do this we describe the interference as a shot-noise field \cite[Ch.2]{BaccelliBookStoch} 
generated by a point process outside a ball of radius $r_2$. The details of the analysis together with proofs and extra material is available in Appendix H. More specifically, given a realization of the point process, 
the interference received at the origin is equal to the expression in (\ref{SINRui2a3})

\begin{eqnarray}
\label{GenINtera}
\mathcal{I}\left(\mathbf{a}_{-o},r_2\right) & := & \sum_{n:\mathbf{u}_n\in\mathcal{V}_1\left(\mathbf{z}_n\right),\mathbf{z}_n\in\phi\setminus\left\{\mathbf{b}_1\right\}}h_{n1}\left(1-a_n\right) + h_{n2}a_n.\nonumber
\end{eqnarray}
The downlink signals to all users other than the tagged one are considered as interference. So, the summation above is done over all primary users, related to some BS other than the first closest BS to the origin. Furthermore, for each user $\mathbf{u}_n$, the choice of the cooperation parameter $a_n$ splits the signal power $P$ between the two closest BSs.

As already mentioned, the user-optimal decision policy suggested is binary ($\mathrm{No\ Coop}$ or $\mathrm{Full\ Coop}$) for the entire set of users. 
The geometry is illustrated in Fig. \ref{fig:Intref}. If we substitute for $a_n=0$ and $a_n=\frac{1}{2}$ respectively for these two actions, the interference takes the expression

\begin{figure}[h]
\centering
\includegraphics[trim = 10mm 100mm 10mm 10mm, clip, width=0.55\textwidth]{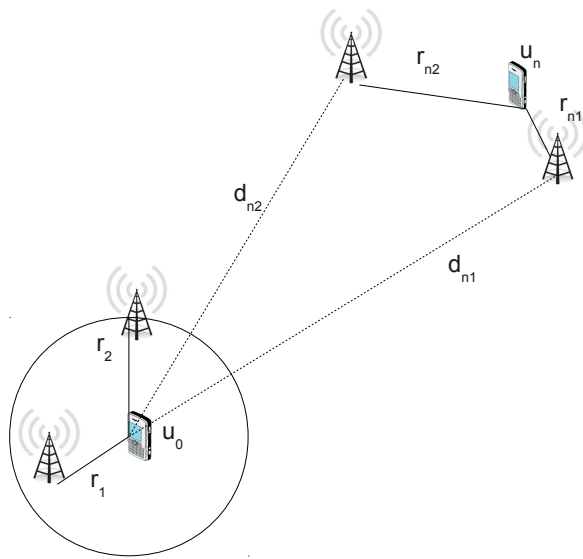}
\caption{Illustration of the interference model with a single interfering pair lying outside the ball of radius $r_2$.}
\label{fig:Intref}
\end{figure}

\begin{eqnarray}
\label{INtrfRHO}
\mathcal{I}\left(\rho,r_2\right) & = & \sum_{n:\mathbf{u}_n\in\mathcal{V}_1\left(\mathbf{z}_n\right),\mathbf{z}_n\in\phi\setminus\left\{\mathbf{b}_1\right\}}h_{n1} \mathbbm{1}_{\left\{r_{n1}\leq \rho r_{n2}\right\}} + \frac{h_{n1}+h_{n2}}{2} \mathbbm{1}_{\left\{r_{n2}\geq r_{n1}>\rho r_{n2}\right\}}.\nonumber
\end{eqnarray}

To derive a more tractable expression for the interference, which takes into account the random positioning of the users, suppose that each atom $\mathbf{z}_n$
of the BS point process has a primary user $\mathbf{u}_n$ who, with probability $\mathbb{P}\left[r_{n1}\leq \rho r_{n2}\right]$ lies within the $\mathrm{No\ Coop}$ region. From Lemma \ref{Lemrho} this probability is equal to $\rho^2$, whereas with probability $1-\rho^2$ the user requests for $\mathrm{Full\ Coop}$. Then we consider that each BS is associated with a \textbf{binary} r.v. $B_n\sim \mathrm{Bernoulli(\rho^2)}$ such that

\begin{eqnarray}
\label{Bernou}
B_n & = & \left\{
\begin{tabular}{l l l}
$1$ & with prob. $\rho^2$ & ($\mathrm{No\ Coop}$)\\
$0$ & with prob. $1-\rho^2$ & ($\mathrm{Full\ Coop}$)
\end{tabular}
\right..
\end{eqnarray}
Each BS is related to an independent $B_n$. Then:

\begin{itemize}
\item If $B_n=1$ an independent mark $\mathcal{M}_n$ is associated with the BS. The mark is equal to $\mathcal{M}_n:= d_{n}^{-\beta}G_{n}$, and this signal is treated as interference at the typical location, where $G_n\sim\exp\left(1/p\right)$. The signal has to traverse a distance of $d_n=d_{n1}$ from the closest neighbour of user $\mathbf{u}_n$, as shown in Fig. \ref{fig:Intref}.
\item If $B_n=0$ an independent mark $\mathcal{N}_n$ is associated with the BS. This is the case of $\mathrm{Full\ Coop}$, where the interfering signal due to user $\mathbf{u}_n$ is jointly emitted from its two closest neighbours. Here, we make the \textbf{far-field approximation} $d_{n1}\approx d_{n2} = d_n$, where the distances of the two cooperating atoms to the typical location are treated as equal. Such a heuristic is allowed, since the two closest neighbours of user $\mathbf{u}_n$ are expected to 
lie ¨close¨ to each other and at the same time, outside of the ball of radius $r_2$. In this sense we can expect that the error is very small. Based on this approximation, BSs with primary users requiring $\mathrm{Full\ Coop}$, are considered to emit the entire signal $\mathcal{N}_n := d_n^{-\beta}\frac{G_{n1}+G_{n2}}{2}$, $G_{n1},G_{n2}\sim\exp\left(1/p\right)$.

\end{itemize}
%
It is interesting to notice that, the r.v. $G_n$, related to the mark $\mathcal{M}_n$ follows the exponential distribution, or equivalently the $\Gamma\left(1,p\right)$ distribution, with expected value $p$, whereas the r.v. $\frac{G_{n1}+G_{n2}}{2}$ related to $\mathcal{M}_n$, follows the $\Gamma\left(2,\frac{p}{2}\right)$, with the same expected value $p$. In other words, the path-loss of the interfering signals is in expectation equal in both cases (of either $\mathrm{Full\ Coop}$ or $\mathrm{No\ Coop}$ for the interfering user). In this far-field approximation, the interference random variable takes the expression

\begin{eqnarray}
\label{Interference1}
\mathcal{I}\left(\rho,r_2\right) & := & r_{2}^{-\beta} G_{2}B_2 + r_2^{-\beta}\frac{G_1+G_2}{2} \left(1-B_2\right) \nonumber\\
& + & \sum_{\mathbf{z}_n\in\phi\setminus\left\{\mathbf{b}_1,\mathbf{b}_2\right\}} d_{n}^{-\beta}G_{n} B_n + d_n^{-\beta}\frac{G_{n1}+G_{n2}}{2} \left(1-B_n\right),
\end{eqnarray}
where the first two terms come from the interference created by the second neighbour lying on the 
boundary of the ball $\mathcal{B}\left(\mathbf{u}_o,r_2\right)$. Finally we derive the LT of the Interference variable $\mathcal{L}_{\mathcal{I}}\left(s,\rho,r_2\right)$.

\begin{Theorem}
\label{THlaplI}
The LT of the Interference random variable for the model under study, with exponential fading channel power (Rayleigh fading), is equal to
\begin{eqnarray}
\label{LTinterfD}
\mathcal{L}_{\mathcal{I}}\left(s,\rho,r_2\right) & = & \mathcal{L}_{\mathcal{J}}\left(s,\rho,r_2\right)\cdot e^{-2\pi\lambda\int_{r_2}^{\infty} \!  \left(1-\mathcal{L}_{\mathcal{J}}\left(s,\rho,r\right)\right)r \, \mathrm{d} r },
\end{eqnarray}
where
\begin{eqnarray}
\label{EachID}
\mathcal{L}_{\mathcal{J}}\left(s,\rho,r\right)& = & \rho^2\frac{1}{1+sd_n^{-\beta}p}+\left(1-\rho^2\right)\frac{1}{\left(1+sd_n^{-\beta}\frac{p}{2}\right)^2}.
\end{eqnarray}
\end{Theorem}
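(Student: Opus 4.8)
The plan is to read $\mathcal{I}(\rho,r_2)$ in (\ref{Interference1}) as the independent sum of a single boundary interferer sitting at the fixed distance $r_2$ and a Poisson shot-noise field generated by the atoms strictly outside the ball $\mathcal{B}(\mathbf{u}_o,r_2)$, and then to combine a direct Laplace computation for the former with the probability generating functional (PGFL) of the Poisson point process for the latter. I work conditionally on the second-neighbour distance $r_2$, since $\mathcal{L}_{\mathcal{I}}$ is a function of $r_2$. The first step, and the crux of the argument, is to observe that conditioning on $r_2$ fixes only the configuration inside $\mathcal{B}(\mathbf{u}_o,r_2)$ (exactly $\mathbf{b}_1$ in the interior and $\mathbf{b}_2$ on the boundary); by the independence of a p.p.p.\ on disjoint regions, the atoms in $\{\mathbf{x}:|\mathbf{x}|>r_2\}$ therefore remain an \emph{unconditioned} homogeneous p.p.p.\ of intensity $\lambda$. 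Together with the model assumption that the Bernoulli marks $B_n$ and the fading powers are independent across atoms, this makes the boundary term and the exterior sum independent, so their Laplace transforms multiply.

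Next I would evaluate the transform of the boundary term $r_2^{-\beta}G_2 B_2 + r_2^{-\beta}\frac{G_1+G_2}{2}(1-B_2)$ directly. Averaging first over $B_2\sim\mathrm{Bernoulli}(\rho^2)$ separates a No~Coop contribution (weight $\rho^2$) from a Full~Coop one (weight $1-\rho^2$). In the former a single $G_2\sim\exp(1/p)$ yields $\frac{1}{1+s r_2^{-\beta}p}$; in the latter the independence of $G_1,G_2\sim\exp(1/p)$ gives $\mathbb{E}[e^{-s r_2^{-\beta}(G_1+G_2)/2}]=(1+s r_2^{-\beta}\frac{p}{2})^{-2}$. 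Reassembling the two weighted pieces reproduces exactly $\mathcal{L}_{\mathcal{J}}(s,\rho,r_2)$ of (\ref{EachID}), the first factor of (\ref{LTinterfD}).

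For the exterior field I would invoke the PGFL: the Laplace transform of a Poisson shot noise $\sum_n f(\mathbf{x}_n)$ of intensity $\lambda$ equals $\exp\!\left(-\lambda\int(1-\mathbb{E}[e^{-s f(\mathbf{x})}])\,\mathrm{d}\mathbf{x}\right)$. An exterior atom at distance $r=|\mathbf{x}|$ carries a mark equal to $r^{-\beta}G_n$ with probability $\rho^2$ and to $r^{-\beta}(G_{n1}+G_{n2})/2$ with probability $1-\rho^2$, so the per-atom transform $\mathbb{E}[e^{-s f(\mathbf{x})}]$ is precisely $\mathcal{L}_{\mathcal{J}}(s,\rho,r)$ by the same exponential-gain computation as above (here the distance argument is the integration radius $r$). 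Passing to polar coordinates with $\mathrm{d}\mathbf{x}=2\pi r\,\mathrm{d}r$ over $r\in(r_2,\infty)$ produces the factor $e^{-2\pi\lambda\int_{r_2}^{\infty}(1-\mathcal{L}_{\mathcal{J}}(s,\rho,r))\,r\,\mathrm{d}r}$, and multiplying it with the boundary transform gives the product form (\ref{LTinterfD}).

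The hard part will be the conditioning step in the first paragraph: one has to argue cleanly that, given $r_2$, the exterior process is a plain homogeneous p.p.p.\ of intensity $\lambda$ and is independent of the boundary contribution, rather than a Palm-conditioned version; this rests on the void/independence structure of the Poisson process across the disjoint regions $\{|\mathbf{x}|\leq r_2\}$ and $\{|\mathbf{x}|>r_2\}$. Everything else is routine, and it is the memoryless (exponential) fading that keeps both the boundary transform and the per-atom transform in the elementary closed form $\mathcal{L}_{\mathcal{J}}$; a non-exponential fading law would replace $\mathcal{L}_{\mathcal{J}}$ by a less explicit expression but leave the product-times-PGFL architecture unchanged. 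Note finally that the far-field approximation $d_{n1}\approx d_{n2}$ is already built into (\ref{Interference1}), so it needs no further justification inside this proof.
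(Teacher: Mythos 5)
Your proposal is correct and follows essentially the same route as the paper's proof: the per-atom transform $\mathcal{L}_{\mathcal{J}}$ is obtained as the same Bernoulli$(\rho^2)$ mixture of exponential-fading Laplace transforms, the boundary atom at distance $r_2$ contributes the prefactor, and the exterior atoms are handled via the Poisson Laplace functional (PGFL) in polar coordinates over $(r_2,\infty)$. The only difference is presentational: you explicitly isolate and justify the conditioning step (that, given $r_2$, the exterior process is an unconditioned homogeneous p.p.p.\ independent of the boundary term), which the paper takes for granted when it restricts the Laplace functional integral to radii beyond $r_2$.
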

It can be shown that the LT of the Interference random variable $\mathcal{L}_{\mathcal{I}}\left(s,\rho,r_2\right)$ in (\ref{LTinterfD}) is a non-decreasing function in $\rho$ and $r_2$ and non-increasing function in $s$. A direct consequence of this is that the LT of the interference takes its maximal value for $\rho=1$. This is reasonable based on the Laplace-Stieltjes transform ordering. The argument is that for any $0\leq\rho_a<\rho_b\leq 1$, $\mathcal{L}_{\mathcal{I}}\left(s,\rho_a,r_2\right)\leq \mathcal{L}_{\mathcal{I}}\left(s,\rho_b,r_2\right)$ $\Rightarrow$ $\mathcal{I}\left(\rho_a,r_2\right)\geq_L \mathcal{I}\left(\rho_b,r_2\right)\geq_L\mathcal{I}\left(1,r_2\right)$. The larger $\rho$ the smaller the interference. Similarly, the larger the distance to the second neighbour $r_2$, the smaller the interference, because a larger empty ball of interferers around the typical location is guaranteed. Finally, we give the expression of the expected value for the Interference r.v.

\begin{eqnarray}
\label{EIgeneral}
\mathbb{E}\left[\mathcal{I}\left(\rho,r_2,\beta,p,\lambda \right)\right] & = & \frac{p}{\left(\beta-2\right)r_2^{\beta}}\left(\beta-2+2\pi\lambda r_2^2\right).
\end{eqnarray} 
The expected value of the interference r.v. is \textbf{independent of the parameter $\rho$}. This means that all scenarios of cooperation regions for any $\rho\in\left[0,1\right]$ are compared to each other, with reference to an interference field with the same expected value given in (\ref{EIgeneral}).

\subsection{Second Neighbour Interference Elimination - Dirty Paper Coding}

The second geographic BS neighbour is the most influential on the interference power, due to 
its proximity to the typical location. In this subsection we will consider an ideal scenario, where the interference created by the second closest BS can be cancelled out perfectly in the case of full cooperation, by means of coding. This requires precise \textbf{knowledge} by the first neighbour of the interfering signal from the primary user and all possible secondary users served by $\mathbf{b}_2$, which is extra information for the system. If such information is available, the encoding procedure for the signal related to the typical location can be projected on the orthogonal signal space of $\mathbf{b}_2$, achievable by Dirty Paper Coding so that the effect of $\mathbf{b}_2$ on interference is eliminated (see \cite{CostaDP83} - however notice that we do not refer here to the Zero Forcing and Dirty Paper coding scheme in \cite{CaireDPC03} because such scheme would require full channel state information feedback). If the elimination is perfect, the $\mathrm{SINR}$ for the typical location should be rewritten as follows

\begin{eqnarray}
\mathrm{SINR}\left(\rho,r_1,r_2\right) & = & \frac{g_1 r_1^{-\beta}}{\sigma^2+ \mathcal{I}\left(\rho,r_2\right)} \mathbbm{1}_{\left\{r_1\leq \rho r_2\right\}} + \frac{\frac{\left(\sqrt{g_1 r_1^{-\beta}}+\sqrt{g_2 r_2^{-\beta}}\right)^2}{2}}{\sigma^2 + \mathcal{I}_{DPC}\left(\rho,r_2\right)} \mathbbm{1}_{\left\{r_1> \rho r_2\ \&\ r_1\leq r_2\right\}},\nonumber
\label{SINRo2a}
\end{eqnarray}
where the interference in the case of full cooperation has been substituted by the r.v. $\mathcal{I}_{DPC}$. The latter is derived exactly as the r.v. $\mathcal{I}$ by just omitting the interference part from the second closest BS neighbour. Its expression and Laplace Transform are thus

\begin{eqnarray}
\label{Interference1}
\mathcal{I}_{DPC}\left(\rho,r_2\right) & := & \sum_{\mathbf{z}_n\in\phi\setminus\left\{\mathbf{b}_1,\mathbf{b}_2\right\}} d_{n}^{-\beta}G_{n} B_n + d_n^{-\beta}\frac{G_{n1}+G_{n2}}{2} \left(1-B_n\right),\nonumber
\end{eqnarray}
and
\begin{eqnarray}
\label{LTinterfDcancel}
\mathcal{L}_{\mathcal{I}_{DPC}}\left(s,\rho,r_2\right) & = &  e^{-2\pi\lambda\int_{r_2}^{\infty} \!  \left(1-\mathcal{L}_{\mathcal{J}}\left(s,\rho,r\right)\right)r \, \mathrm{d} r },\nonumber
\end{eqnarray}
where $\mathcal{L}_{\mathcal{J}}\left(s,\rho,r\right)$ is given in (\ref{EachID}).  Notice that if the user chooses $\mathrm{No\ Coop}$, the elimination is not possible. The expected value of the interference without the influence of $\mathbf{b}_2$ is smaller than the expression in (\ref{EIgeneral}) by $\frac{\beta-2+2\pi\lambda r_2^2}{2\pi\lambda r_2^2}>1$, since $\beta>2$, as the following formula shows

\begin{eqnarray}
\label{EIgeneralcan}
\mathbb{E}\left[\mathcal{I}_{DPC}\left(\rho,r_2,\beta,p,\lambda \right)\right] & = & \frac{p}{\left(\beta-2\right)r_2^{\beta}}2\pi\lambda r_2^2.
\end{eqnarray} 
The expected value of DPC interference is as well independent of the cooperation parameter $\rho$.

\subsection{General Coverage Probability}

In the current subsection we derive the coverage probability expression for the model under study. The proof of the related theorem that follows can be found in Appendix I.

\begin{Theorem}
\label{CovProb}
For the cooperation scenario under study and for a given set of system values $\left\{T,\lambda,\beta,P\right\}$, the coverage probability at the typical location as a function of the parameter $\rho\in\left[0,1\right]$ is equal to
\begin{eqnarray}
q_c\left(\rho\right) 	& = & 	q_{c,1}\left(\rho\right) + q_{c,2}\left(\rho\right)\nonumber\\
\label{Tint1}
						& = & 	\int_0^{\infty} \! \int_{\frac{r_1}{\rho}}^{\infty} \!  \left(2\lambda\pi\right)^2 r_1 r_2e^{-\lambda \pi r_2^2} \cdot e^{-\frac{r_1^{\beta}}{p} T\sigma^2}\mathcal{L}_{\mathcal{I}}\left(\frac{r_1^{\beta}}{p} T,\rho,r_2\right)\,  \mathrm{d} r_2 \,  \mathrm{d} r_1\\
\label{Tint2}
						& + & 	\int_0^{\infty} \! \int_{r_1}^{\frac{r_1}{\rho}} \!  \left(2\lambda\pi\right)^2 r_1 r_2e^{-\lambda \pi r_2^2} \int_{-\infty}^{\infty} \! e^{-2j\pi\sigma^2 s} \mathcal{L}_{\mathcal{I}}\left(2j\pi s,\rho,r_2\right)\frac{\mathcal{L}_{Z}\left(-j\pi s/T,\frac{r_1^{\beta}}{p},\frac{r_2^{\beta}}{p}\right)-1}{2j\pi s}\,  \mathrm{d} s\,  \mathrm{d} r_2 \,  \mathrm{d} r_1\nonumber\\
\end{eqnarray}
In this expression, $\mathcal{L}_{\mathcal{I}}\left(s,\rho,r_2\right)$ is the LT of $\mathcal{I}$ given in (\ref{LTinterfD}), (\ref{EachID}) and $\mathcal{L}_Z\left(s,\mu_1,\mu_2\right)$ is the LT of the general fading r.v. $Z_{r_1,r_2}$ given in (\ref{LTz}). For the case of Dirty Paper Coding, $\mathcal{L}_{\mathcal{I}}$, should be replaced by $\mathcal{L}_{\mathcal{I}_{DPC}}$ given in (\ref{LTinterfDcancel}).
\end{Theorem}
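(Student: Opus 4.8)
The plan is to condition on the distances $(r_1,r_2)$ to the two nearest base stations, whose joint density is supplied by Lemma~\ref{LemJointD}, and then to exploit the fact that the two indicator events in the $\mathrm{SINR}$ expression~(\ref{SINRo2}) are disjoint. Writing $q_c(\rho)=\mathbb{E}_{r_1,r_2}[\mathbb{P}[\mathrm{SINR}>T\mid r_1,r_2]]$, the region $\{r_1\le\rho r_2\}$ contributes only the No Coop term~(\ref{NoC}) and the region $\{\rho r_2<r_1\le r_2\}$ only the Full Coop term~(\ref{FC}), so no cross terms arise and the split $q_c=q_{c,1}+q_{c,2}$ follows by additivity over the two disjoint distance regions. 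The structural fact I would use throughout is that, conditioned on $(r_1,r_2)$, the desired-signal fadings ($g_1$, or $(g_1,g_2)$ in the cooperative case) are independent of the interference field $\mathcal{I}(\rho,r_2)$, which is generated by the point process outside the ball of radius $r_2$ and whose Laplace transform is provided by Theorem~\ref{THlaplI}; this independence is what lets the joint probability factorise into a signal factor and an interference factor.

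For $q_{c,1}$ (the easy branch), the signal is $g_1 r_1^{-\beta}$ with $g_1\sim\exp(1/p)$, so the exponential tail yields the standard Rayleigh-fading identity
\[
\mathbb{P}\!\left[g_1 r_1^{-\beta}>T(\sigma^2+\mathcal{I})\,\big|\,r_1,r_2,\mathcal{I}\right]=e^{-\frac{r_1^\beta}{p}T\sigma^2}\,e^{-\frac{r_1^\beta}{p}T\,\mathcal{I}}.
\]
Taking the expectation over $\mathcal{I}$ turns the second factor into $\mathcal{L}_{\mathcal{I}}(\frac{r_1^\beta}{p}T,\rho,r_2)$, and integrating against the joint density~(\ref{pdfN2f}) over $r_2\ge r_1/\rho$ reproduces~(\ref{Tint1}) verbatim.

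The term $q_{c,2}$ is the genuine difficulty, because the cooperative signal $Z_{r_1,r_2}/2$ is \emph{not} exponential and the Rayleigh trick is unavailable. Setting $W:=\sigma^2+\mathcal{I}$, I would write $\mathbb{P}[Z/2>TW]=\mathbb{E}_W[\bar F_Z(2TW)]$ and represent the complementary c.d.f.\ through its Laplace transform $\int_0^\infty e^{-\xi u}\bar F_Z(u)\,\mathrm{d}u=\frac{1-\mathcal{L}_Z(\xi)}{\xi}$, where $\mathcal{L}_Z$ is taken from Lemma~\ref{LemZ}. Inverting along a Bromwich contour, interchanging the contour integral with $\mathbb{E}_W$, and evaluating $\mathbb{E}_W[e^{2T\xi W}]=e^{2T\xi\sigma^2}\mathcal{L}_{\mathcal{I}}(-2T\xi,\rho,r_2)$ produces a single contour integral. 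Deforming onto the imaginary axis via the substitution $\xi=-j\pi s/T$ converts $e^{2T\xi\sigma^2}$ into $e^{-2j\pi\sigma^2 s}$, the interference factor into $\mathcal{L}_{\mathcal{I}}(2j\pi s,\rho,r_2)$, and $\frac{1-\mathcal{L}_Z(\xi)}{\xi}$ — together with the inversion prefactor $\tfrac{1}{2\pi j}$ and the Jacobian of the substitution — into $\frac{\mathcal{L}_Z(-j\pi s/T,\,r_1^\beta/p,\,r_2^\beta/p)-1}{2j\pi s}$. This is exactly the inner $s$-integral of~(\ref{Tint2}); integrating against~(\ref{pdfN2f}) over $r_1\le r_2\le r_1/\rho$ closes the argument. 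The DPC variant follows by replacing $\mathcal{L}_{\mathcal{I}}$ with $\mathcal{L}_{\mathcal{I}_{DPC}}$, since only the cooperative branch is affected.

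The main obstacle is analytic rather than conceptual: justifying the deformation of the Bromwich contour onto the imaginary axis and the Fubini interchange between the inversion integral and $\mathbb{E}_W$. This requires decay control on $\mathcal{L}_{\mathcal{I}}$ and on $\frac{\mathcal{L}_Z(\xi)-1}{\xi}$ along the contour; the square-integrability of the density of $Z_{r_1,r_2}$ asserted in Lemma~\ref{LemZ} is precisely what guarantees convergence of the inversion integral. Care is also needed at $s=0$, where the integrand must be read as a principal value: the ``$-1$'' in the numerator of the $\mathcal{L}_Z$-factor removes the unit jump of $\bar F_Z$ at the origin and cancels the $1/s$ singularity, so that the $s$-integral in~(\ref{Tint2}) is well defined.
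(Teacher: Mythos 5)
Your proposal is correct, and its skeleton coincides with the paper's proof: you condition on $(r_1,r_2)$ with the joint density of Lemma~\ref{LemJointD}, split over the two disjoint events $\left\{r_1\leq\rho r_2\right\}$ and $\left\{\rho r_2<r_1\leq r_2\right\}$, and treat the $\mathrm{No\ Coop}$ branch by the exponential tail of $G_1$ together with independence of the fading from $\mathcal{I}$, recovering $e^{-\frac{r_1^{\beta}}{p}T\sigma^2}\mathcal{L}_{\mathcal{I}}\bigl(\frac{r_1^{\beta}}{p}T,\rho,r_2\bigr)$ exactly as in (\ref{Tint1}). The genuine divergence is in the cooperative branch. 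The paper does not derive the inner $s$-integral at all: it invokes a ready-made coverage formula for general (non-exponential) fading, namely \cite[Prop.~5.4]{BaccelliBookStoch} (equivalently \cite[Prop.~2.2]{BaccelliOpAloha}), and its proof reduces to checking that result's hypotheses --- $Z_{r_1,r_2}$ has finite mean and a square-integrable density (Lemma~\ref{LemZ}), and $\mathcal{I}$ admits a square-integrable density. You instead re-derive that proposition from scratch: the identity $\int_0^{\infty}e^{-\xi u}\bar F_Z(u)\,\mathrm{d}u=\bigl(1-\mathcal{L}_Z(\xi)\bigr)/\xi$, Bromwich inversion, interchange with $\mathbb{E}_W$, and deformation onto the imaginary axis via $\xi=-j\pi s/T$ (whose orientation reversal is what turns $1-\mathcal{L}_Z$ into the $\mathcal{L}_Z-1$ appearing in (\ref{Tint2})). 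What each approach buys: yours is self-contained and makes visible exactly where the $1/s$ singularity cancels and why square integrability matters, but it carries the burden of rigorously justifying the contour deformation and Fubini steps --- precisely the analytic content that the cited proposition packages away; the paper's route is shorter and delegates that analysis, though it leaves the square integrability of the density of $\mathcal{I}$ asserted rather than proved, a gap your version shares in the guise of unverified ``decay control'' on $\mathcal{L}_{\mathcal{I}}$. Both arguments then close identically by integrating against (\ref{pdfN2f}), and both dispose of the DPC case by the same substitution of $\mathcal{L}_{\mathcal{I}_{DPC}}$ for $\mathcal{L}_{\mathcal{I}}$.
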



\section{Pros and Cons of BS-Pair Conferencing}
\label{SectionIV}

%
%
%


The coverage probability expression in (\ref{Tint1})+(\ref{Tint2}) can provide intuition on the change in coverage probability when applying different $\rho$-dependent 2-cell cooperation policies. In the following we summarize the pros and cons of cooperation, given the model of the previous sections. We would like to emphasize here, that all benefits result from pairwise cooperation with use of limited feedback exchange between the transmitters (only channel phase).

\textbf{Pros:} The gains of the cooperation scheme result from four reasons.
\begin{itemize}
\item The specific coding scheme enabled by conferencing, provides an extra term of $2a_i p\sqrt{h_{i1}h_{i2}}$ $\stackrel{a_i=1/2}{=}$ $p\sqrt{h_{i1}h_{i2}}$ for the beneficial signal in (\ref{SINRui2a2}). Given that the channel fast fading r.v.'s follow exponential distribution, and for the symmetric case $r_1=r_2=r$ the resulting $\mathrm{Full\ Coop}$ variable $Z_{r,r}/2$ is larger than $G$ in the stochastic ordering, as shown in Lemmas \ref{StochOrd1} and \ref{LaplaceOrd}. From an engineering perspective, the user transmission power is equally divided between the two closest BS neighbours, who transmit the same signal with power $p/2$. Since the signals add-up coherently at the receiver at the typical location, this can provide an increase in the received signal power, but depends also on the distances of the two BSs from the user. Such benefit can be achieved only when the two cooperating BSs transmit exactly the same data in a synchronous way to the receiver. The codeword of the second neighbour is linearly dependent to the codeword of the first neighbour.

Evaluating the expression in (\ref{EZpdf}) for $p=1$ and $r_1=r_2=1$, we get an expected value $\frac{\mathbb{E}\left[Z_{r,r}\right]}{2}=1+\pi/4\approx 1.8>1=\mathbb{E}\left[G\right]$. However, the opposite effect will occur when the second neighbour is very far from the origin ($r_2>r_1$), due to the path-loss exponent $\beta>2$. In such a case, the power split between the two neighbours will result in a received signal less than that of $\mathrm{No\ Coop}$. 

\item The knowledge of the second closest neighbour exact position (related to the typical location), guarantees an open ball of radius $r_2>r_1$ to be interference free.
The expected radius $\mathbb{E}\left[r_2\right]=\frac{3}{4\sqrt{\lambda}}$ is strictly larger than $\mathbb{E}\left[r_1\right]=\frac{1}{2\sqrt{\lambda}}$ so that the area of the open ball is guaranteed by Lemmas \ref{Lemr1} and \ref{LemJointD} to be larger in expectation than the one defined from just the first closest BS where the typical user is assigned. This means that there will be a factor of improvement on interference, related to the case of no cooperation.

\item Cooperation is in favor of the cell-edge areas and offers coverage, when $\mathrm{No\ Coop}$ fails to do so. Fig. \ref{fig:CovEXE1a} illustrates the coverage areas for an instance of BS positions and when no channel fading is taken into account (so that smooth contours are visible). Observe furthermore, the obvious increase of the coverage lobe in all cells when $\rho=0$ and $\mathrm{Full\ Coop}$ everywhere is applied.

The parameter $\rho$, which varies within the interval $\left[0,1\right]$,  can define the shape of the coverage areas. A change of the parameter results in a change of the shape, so that certain planar points not covered when $\rho=1$ can be covered for some other value $\rho<1$ and the other way round. In this way, by adapting the parameter $\rho$, we can shape the coverage of the cell depending on the service needs. 

\item The parameter $\rho$ can be chosen optimally as $\rho^*$, so that the geometric policy maximizes coverage. The optimal value of $\rho$ depends on the set of model parameters $\left(T,\lambda,\beta,p\right)$. The expression in (\ref{Tint1})+(\ref{Tint2}) cannot be shown to be concave in $\rho$, so that existence of local optima cannot be excluded. The global optimum can be found by numerical evaluation, as shown in Fig. \ref{fig:CovT01full} for a specific example. Fig. \ref{fig:Zona06} exhibits the optimal planar cooperation policy for the considered example.

\end{itemize}

\begin{figure}[th]
\centering
\includegraphics[trim = 0mm 40mm 0mm 30mm, clip, width=0.6\textwidth]{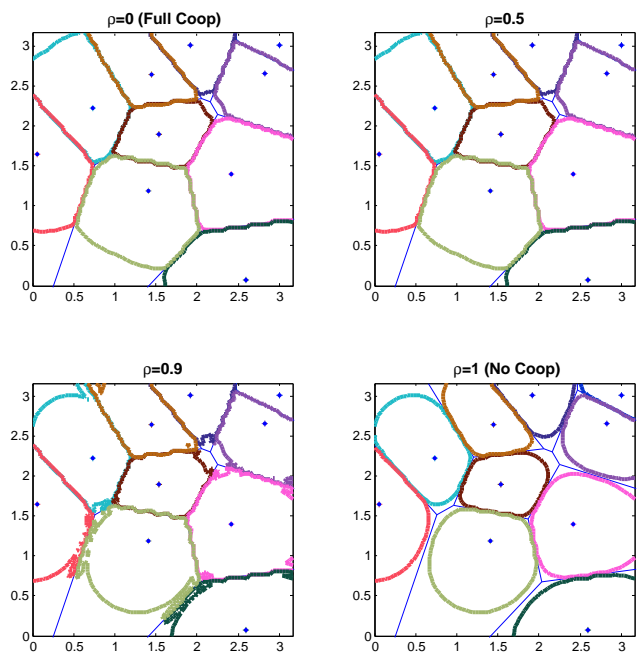}
\caption{Example of coverage regions for different values of the parameter $\rho$ and $T=0.8$. Case without fast fading. For the chosen value of threshlod, the 
maximum coverage is achieved in the case of $\rho=0.5$, but the difference with the case of $\mathrm{Full\ Coop}$ (everywhere) for $\rho=0$ is trivial. For these two values, it can be seen that most areas at the cell edges are covered, whereas the case of $\mathrm{No\ Coop}$ with $\rho=1$ fails to do so. In general, intermediate values of the cooperation paremeter, e.g. for $\rho=0.5$ or $\rho=0.9$ as shown in the above plots, allow a combination of policies $\mathrm{Full\ Coop}$ and $\mathrm{No\ Coop}$ on the plane. In this way, a higher coverage benefit can be achieved, by serving points closer to base stations without cooperation and transmitting cooperatively for the points at the cell borders.}
\label{fig:CovEXE1a}
\end{figure}


\begin{figure}[ht]    
\centering  
\label{fig:VoronoiREG}
	 	\subfigure[Probability of coverage for different values of $\rho$ and $T=0.8$. The optimal value $\rho^*=0.5$.]{          
           \includegraphics[trim = 0mm 40mm 0mm 0mm, clip, width=0.35\textwidth]{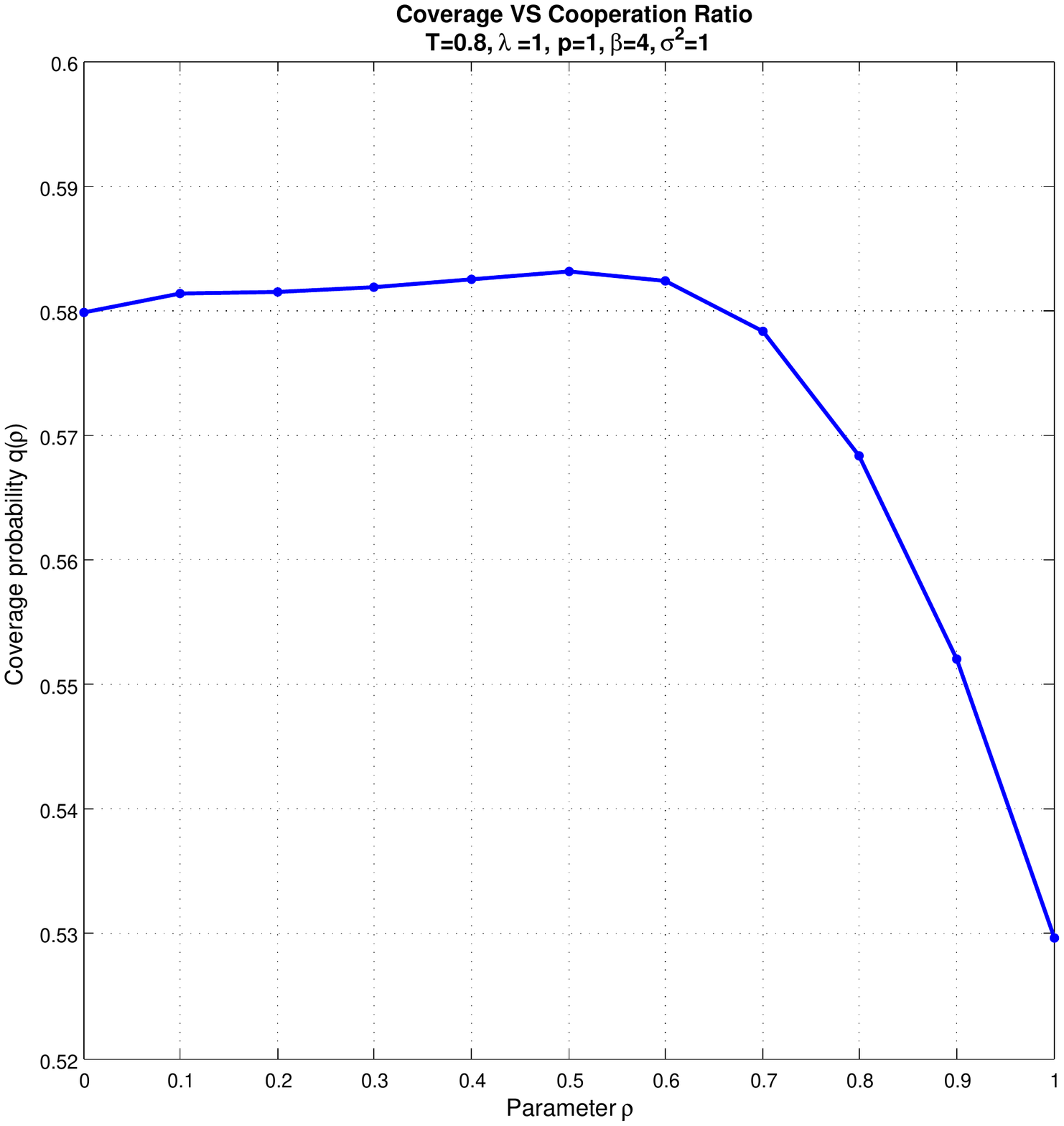}
           \label{fig:CovT01full}
           }
            \subfigure[Cooperation regions for $T=0.8$ and $\rho^*= 0.5$.]{          
           \includegraphics[trim = 0mm 50mm 0mm 0mm, clip, width=0.4\textwidth]{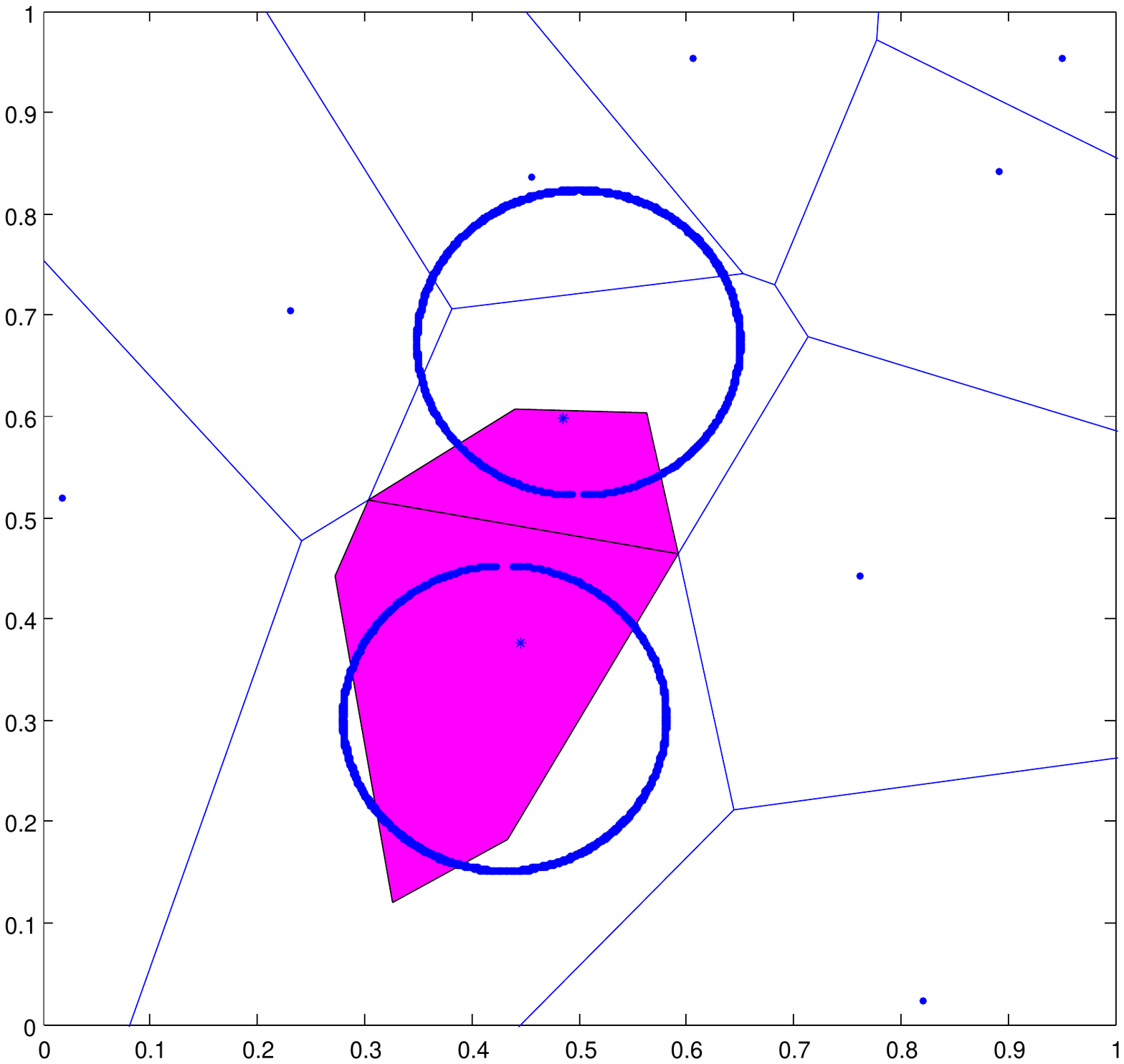}
           \label{fig:Zona06}
           }
           \caption{(a) Evaluation of the integrals in (\ref{Tint1})+(\ref{Tint2}) with threshold $T=0.8$ and varying the cooperation parameter $\rho$. It can be seen that the average coverage probability reaches a maximum for $\rho^*= 0.5$, although the maximum achievable coverage is almost the same, for the case $\rho=0$ related to $\mathrm{Full\ Coop}$ everywhere. Hence, the benefits of cooperation can be achieved, even when cooperation is applied to a smaller percentage of the served users. The resulting optimal $\mathrm{Full\ Coop}$ regions are shown in (b) which include the locus of planar points inside the 2-Voronoi cell but outside the discs.}
\end{figure}

%
%
%

\textbf{Cons:} The negative effects of the cooperation scheme result from two reasons.
\begin{itemize}
\item The parameter $\rho$ of the geometric policy may enforce a user at a planar point $\mathbf{z}$ to ask for cooperation, even when $g_1r_1^{-\beta}>\frac{\left(\sqrt{g_1r_1^{-\beta}}+\sqrt{g_2r_2^{-\beta}}\right)^2}{2}$. The reason is that geometric policies do not adapt to the actual fast-fading realization and depend only on the ratio $\frac{r_1}{r_2}$. Since $\rho$ is a design parameter, the optimal $\rho^*$ is expected to adapt the policy to such events.

\item By comparing the tail probabilities of the $\mathrm{Full\ Coop}$ random variable $\frac{Z_{r_1,r_2}}{2}$ and the exponential $\mathrm{No\ Coop}$ random variable $G$ we can find many examples of the pair $\left(r_1,r_2\right)$, such that there exist values of the signal threshold $T$, above which it holds
\begin{eqnarray}
\label{Tail1}
\mathbb{P}\left[Z_{r_1,r_2}/2>T\right]<\mathbb{P}\left[G>T\right].\nonumber
\end{eqnarray}
In other words, above a certain threshold, the contribution of the exponential $\mathrm{No\ Coop}$ signal power to the coverage probability, may be more considerable than the $\mathrm{Full\ Coop}$ signal power. This is very important if we add also the fact that, as shown in subsection \ref{ParagInterference}, the interference r.v. has an expected value which is independent of the cooperation parameter $\rho$. A consequence is that, there are values of the threshold $T$, above which the coverage probability of the $\mathrm{No\ Coop}$ everywhere scheme is optimal and cooperation turns out to be suboptimal. An example plot over the threshold $T$, with parameter values $p=1$, $r_1^{\beta}=1$ and $r_2^{\beta}=\frac{1}{2}$ is shown in Fig. \ref{fig:TailProb}. It is clearly shown how the tail probability of the $\mathrm{No\ Coop}$ signal power is higher than the $\mathrm{Full\ Coop}$ one, for $T>3$. 

\begin{figure}[h]
\centering
\includegraphics[trim = 0mm 0mm 0mm 0mm, clip, width=0.4\textwidth]{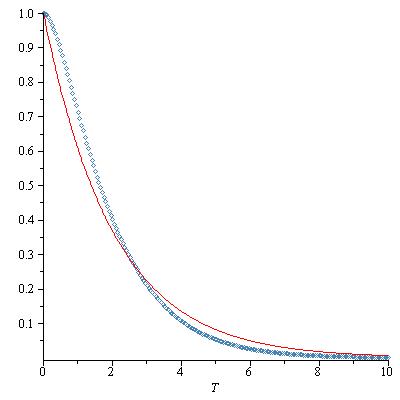}
\caption{Tail probability of the $\mathrm{Full\ Coop}$ and $\mathrm{No\ Coop}$ signal power, with $p=1$, $r_1^{\beta}=1$ and $r_2^{\beta}=\frac{1}{2}$. The figure, resulting by the chosen values of user-power and distances, illustrates that there can be values of the threshold $T$ above which the tail probability of the $\mathrm{No\ Coop}$ signal power is higher than that of the $\mathrm{Full\ Coop}$ signal power.}
\label{fig:TailProb}
\end{figure}

\end{itemize}

\section{Numerical Evaluation and Simulations}
\label{SectionV}

The coverage probability given in Theorem \ref{CovProb} as a sum of the two integrals in (\ref{Tint1}) and (\ref{Tint2}) can be numerically evaluated using MATLAB. 
The evaluation uses routines of nested integration for triple and quadraple integrals. 

To guarantee the validity of the analysis and of the derived results, we have further developed a simulator,
which builds a p.p.p. of intensity $\lambda_{sim} = 1\ \mathrm{atom/m^2}$ for the positions of the BSs, 
within a finite square observation window of surface $W=20\ \mathrm{m^2}$ (size of the x axis $\sqrt{20}\ \mathrm{m}$). For each realization,
a Poisson generator produces a random number of atoms $N$ with expected value $\mathbb{E}\left[N\right]=\lambda_{sim} W = 20\ \mathrm{atoms}$, which are further positioned 
uniformly over the square area. This configuration of points coincides with the distribution of points of the p.p.p. on a bounded 
window as shown in \cite[pp.14-15]{BaccelliBookStoch}. The approximations and analytical evaluation of the integrals is validated by these simulations. A total number of ten thousand uniformly drawn scenarios is averaged to derive the estimation on the simulated coverage probability. 

A first round of comparisons between analytical and simulation results is performed by assuming the 
approximation over the interference - presented in section \ref{ParagInterference}.
In other words, it is assumed in Fig. \ref{fig:SimulB4} ($\beta=4$), Fig. \ref{fig:SimulB2p5} ($\beta=3.2$) and Fig. \ref{fig:SimulB3p2} ($\beta=2.5$) that the distances 
$d_{n1}\approx d_{n2}$ between the typical location and the first and second closest Base Station neighbour of some interfering user are approximately equal. In these figures, the three curves ($\mathrm{No\ Coop}$, $\mathrm{Full\ Coop}$, Optimal Cooperation) derived by numerically evaluating 
the integrals are compared to the three curves from the simulations with the approximation.

The evaluation shows that the analytical results match relatively well the simulated ones. A gap between analytical and simulated curves can be observed. The gap gradually increases for the case of $\beta=3.2$ and $\beta=2.5$ compared to $\beta =4$. This gap occurs due to the fact that the 
expected number of simulated points ($\mathbb{E}\left[N\right]=20$) is low, so that the simulations do not consider the interference created by atoms outside the 
window of size $W=20\ \mathrm{m^2}$. The lower the path-loss exponent $\beta$, the stronger the interfering signal for large distances from the emitting Base Station. 
In this sense, a small simulation window underestimates the total interference 
created. When the size of the window - and consequently the expected number of point process atoms - increases it can be verified that the gap diminishes. 

In Fig. \ref{fig:SimulTHETAcurves} the coverage probability curves using the two different $\mathrm{SINR}$ models (\ref{SINRui2a1th})-(\ref{SINRui2a4th}) and (\ref{SINRui2a1})-(\ref{SINRui2a3}) are compared. The curves are produced by the simulator, in order to evaluate the quality of the approximation, when the expectation over $\theta$ is taken at the interference term. It is evident from the figure that the two models produce almost identical results and hence the approximation is valid for the model under study, since it can simplify the analysis, without considerably affecting the results.

In Fig. \ref{fig:SimulB24noApproax}, we further evaluate, for $\beta=4$, the difference between numerical coverage curves and simulated without this approximation, in order to get a comparison with the coverage of the original model. It can be observed that although the curve for $\mathrm{No\ Coop}$ is the same as the curve with the approximation in Fig. \ref{fig:SimulB4} (as expected) this 
does not hold for the Full and Optimal Cooperation simulation curves. The coverage probability deteriorates faster than the analytical results for higher thresholds, which suggests that  
the approximation underestimates the effect of interference. The main reason is the fact that, in the case $d_{n1}\neq d_{n2}$ for some user $n$ not at the typical location, it can occur in our model, that \textbf{it can be served as secondary user by the BS closest to the typical location ($\mathbf{b}_1$)}. In such a case, its signal creates the highest possible interference at the typical location, since it stems from the closest possible distance, i.e. $d_{n2}=r_1=d\left(\mathbf{b}_{1},\mathbf{u}_0\right)$. Since the first neighbour interference has such a negative effect, it is therefore reasonable to prefer cooperation policies, which diminishes it. This can be achieved by increasing the information at the transmitter side and applying Dirty Paper Coding (DPC) techniques.

\subsection{Coverage $q\left(\rho\right)$ Versus Threshold $T$, $\beta=4$}

In the plots of Fig. \ref{fig:SimulB4} and Fig. \ref{fig:SimulB24noApproax}, the coverage probability is plotted with respect to the threshold value $T$. The threshold $T$ varies between $\left[0.01,\  10\right]$. The absolute difference percentage is shown in Fig. \ref{fig:CovDiffGain} and the optimal value of 
$\rho^*$ for each threshold $T$ in Fig. \ref{fig:OptThresh}. From the above plots the gains from geometric cooperation policies between pairs of neighbouring BSs are already significant. The gain reaches a maximum of over $10\%$ between $T=0.1$ to $0.5$, while the gains disappear for values of threshold $T>2$ and the optimal policy is $\mathrm{No\ Coop}$ in the entire plane. The reason that $\mathrm{Full\ Coop}$ is not optimal for high $T$ can be found in the previous section of Pros and Cons and specifically Fig. \ref{fig:TailProb} which shows that the tail probability of the $\mathrm{No\ Coop}$ signal can be greater than that of the $\mathrm{Full\ Coop}$, from some value of $T$ and above. This depends also on the specific values of $r_1$ and $r_2<r_1$.


\begin{figure}[h!]
\centering
\includegraphics[trim = 0mm 30mm 0mm 20mm, clip, width=0.5\textwidth]{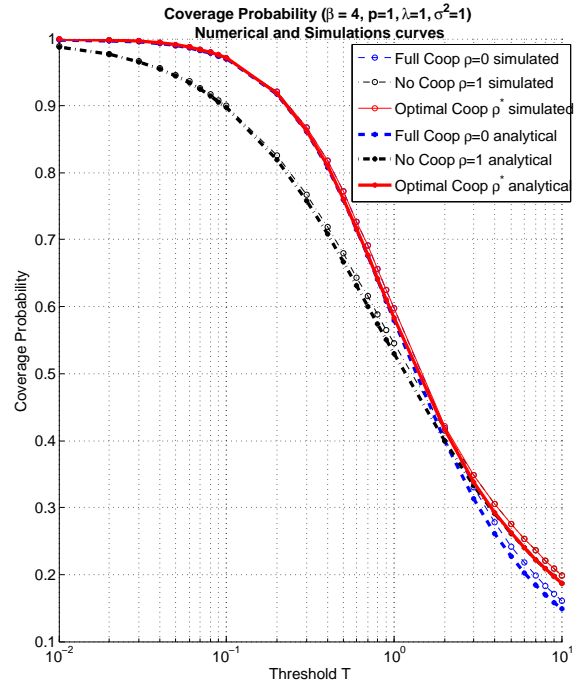}
\caption{Comparison between analytical and simulation results. Expected number of uniformly produced atoms $\mathbb{E}\left[N\right]=20$ in an area of $20\ \mathrm{m^2}$ with density $\lambda=1\ \mathrm{atom/m^2}$. Path-loss exponent $\beta=4$.}
\label{fig:SimulB4}
\end{figure}

\begin{figure}[h!]    
\centering  
\label{fig:B4gain}
	 	\subfigure[Coverage difference between Optimal and $\mathrm{No\ Coop}$.]{          
           \includegraphics[trim = 0mm 28mm 0mm 20mm, clip, width=0.4\textwidth]{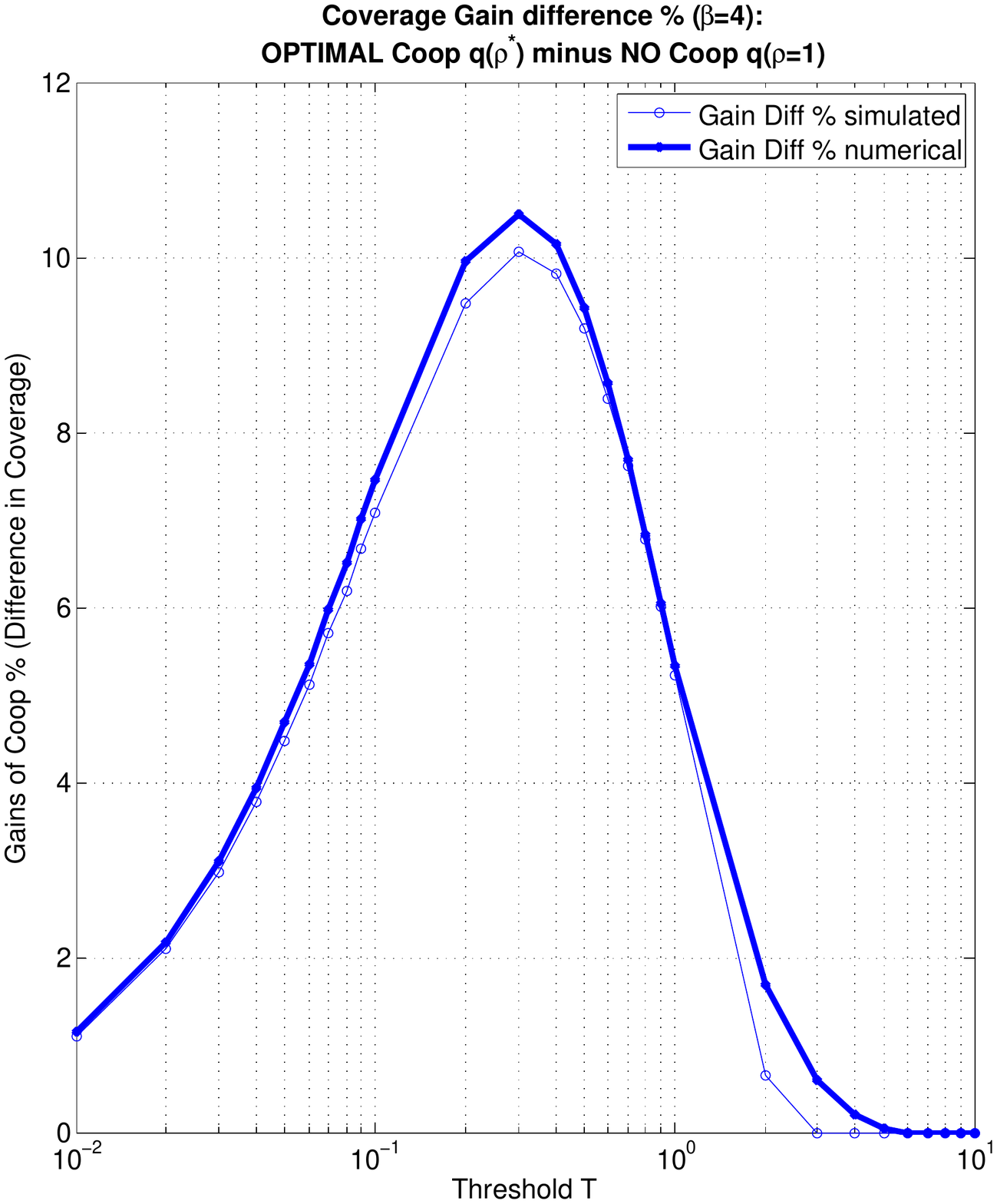}
           \label{fig:CovDiffGain}
           }
            \subfigure[Optimal value of parameter $\rho^*$ versus value $T$.]{          
           \includegraphics[trim = 0mm 30mm 0mm 25mm, clip, width=0.4\textwidth]{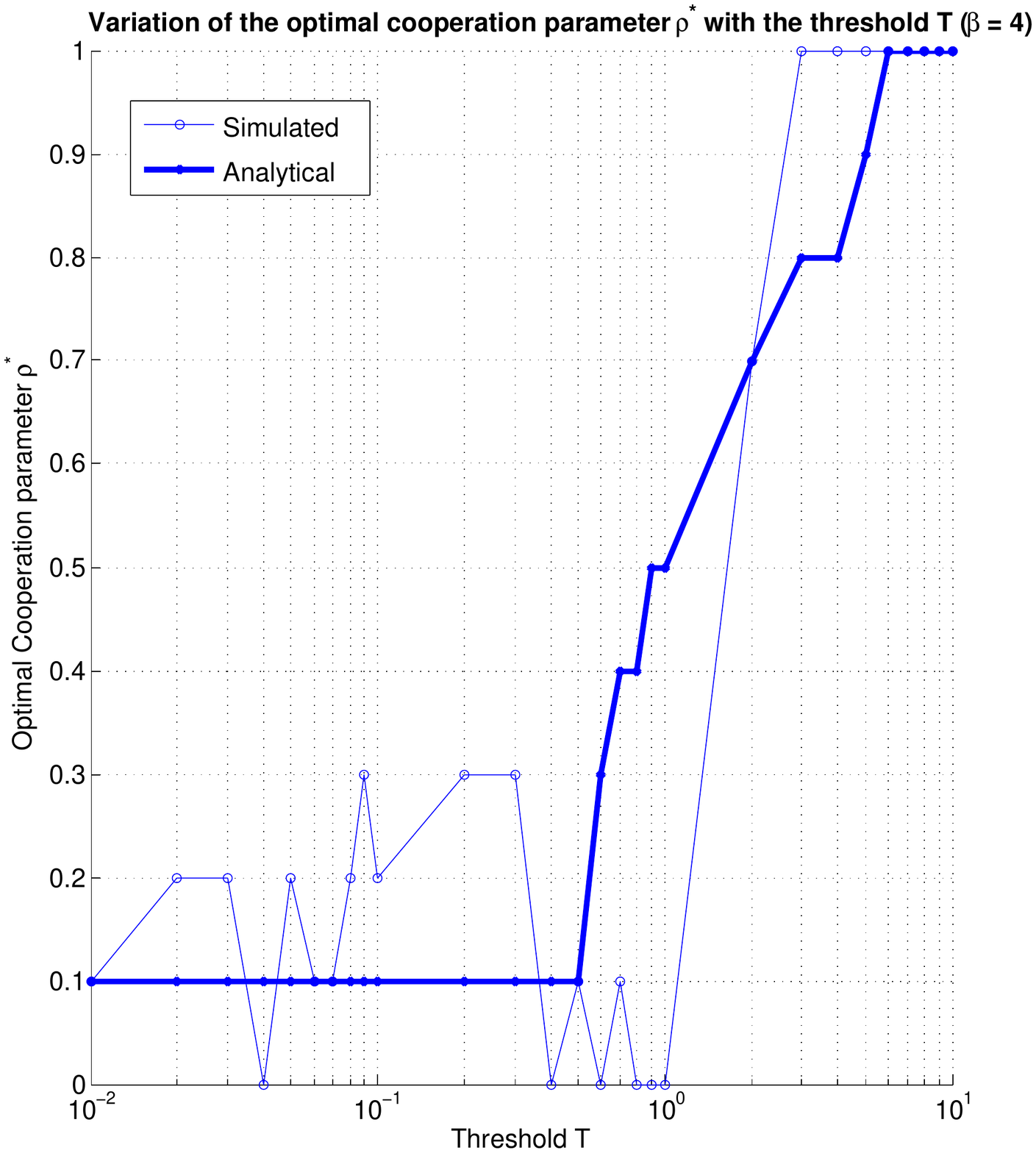}
           \label{fig:OptThresh}
           }
           \caption{Coverage improvement by use of cooperation geometric policies for $\beta=4$. Optimal value of design parameter $\rho^*$ for varying $T$.}
\end{figure}

\begin{figure}[h!]    
\centering  
\label{fig:B4gain}
	 	\subfigure[Comparison of the $\mathrm{SINR}$ model approximation with and without expectation over $\theta$ at the interference.]{          
  \includegraphics[trim = 0mm 45mm 0mm 20mm, clip, width=0.45\textwidth]{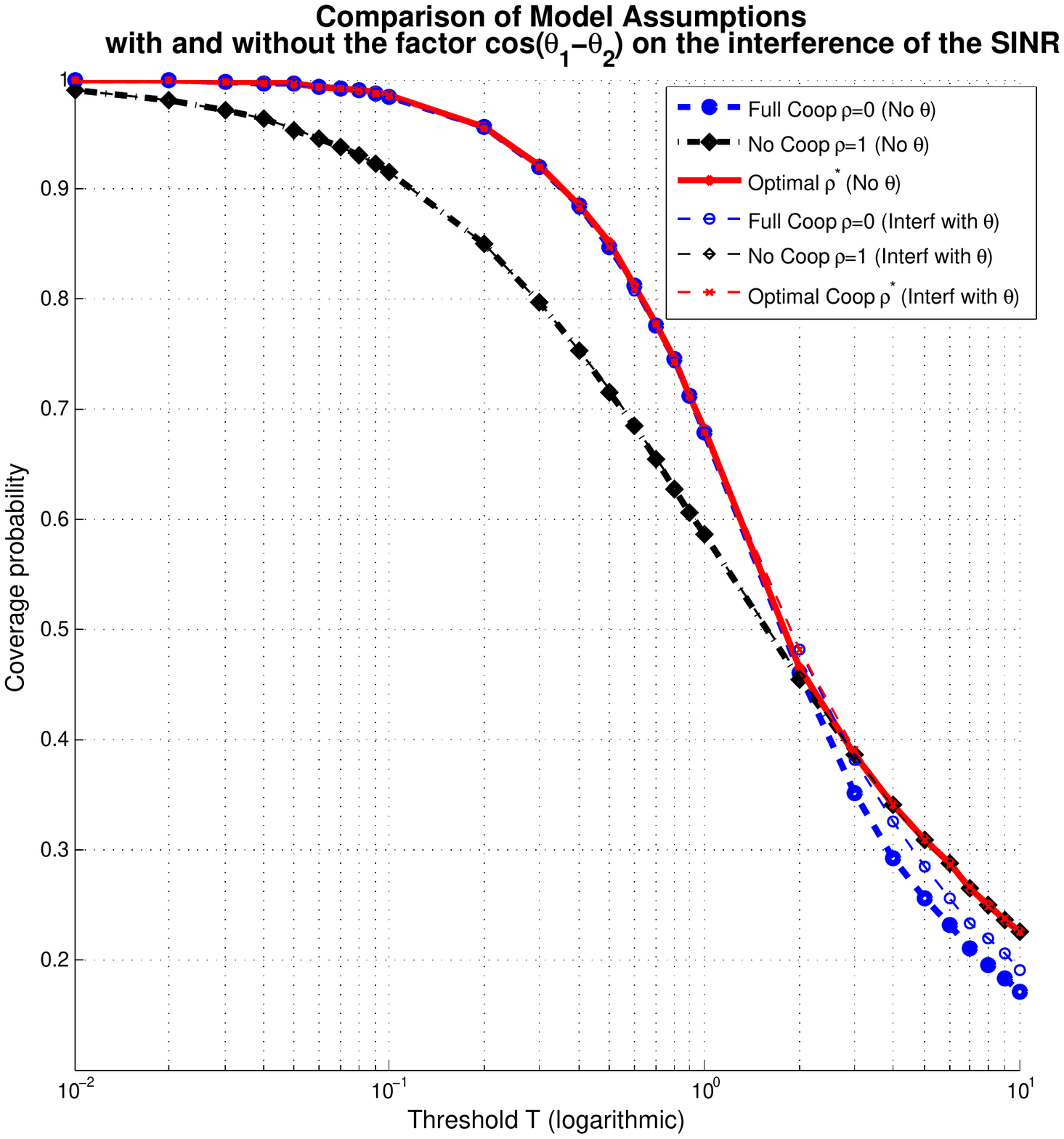}          
\label{fig:SimulTHETAcurves}
           }
            \subfigure[Comparison of the $\mathrm{SINR}$ model with and without the far-field approximation.]{          
\includegraphics[trim = 0mm 35mm 0mm 20mm, clip, width=0.4\textwidth]{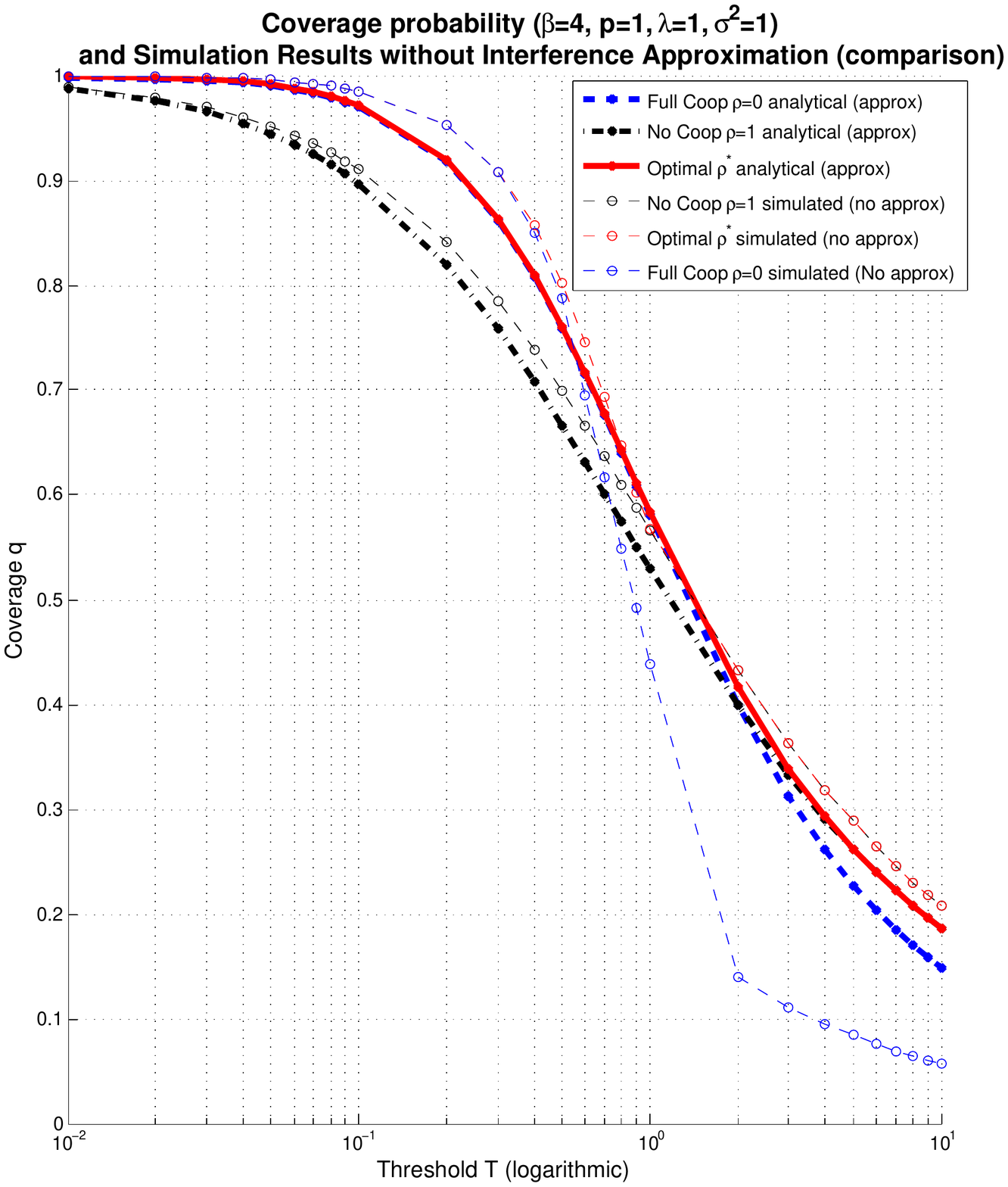}
\label{fig:SimulB24noApproax}
           }
           \caption{(a) Comparison of the two $\mathrm{SINR}$ models in  (\ref{SINRui2a1th})-(\ref{SINRui2a4th}) and (\ref{SINRui2a1})-(\ref{SINRui2a3}) by simulation results for the coverage probability. The two models produce curves that are almost identical to each other. The plots justify the chosen approximation on the interference. (b) Comparison between analytical and simulation results for the case of no approximation ($d_{n1}\neq d_{n2}$) on the interference. Expected number of uniformly produced atoms $\mathbb{E}\left[N\right]=20$ in an area of $20\ \mathrm{m^2}$ with density $\lambda =1\ \mathrm{atom/m^2}$. Fading exponent $\beta=4$.}
\end{figure}

\subsection{Coverage $q\left(\rho\right)$ Versus Threshold $T$, $\beta=3.2$ $\beta=2.5$}
In Fig. \ref{fig:SimulB3p2}, the coverage probability is plotted in function of $T$ for $\beta=3.2$ and in Fig. \ref{fig:SimulB2p5} for $\beta=2.5$. The results are similar to the case of $\beta=4$. The difference between the analytical and simulated values in both is due to the reduced value of the path-loss exponent $\beta$. 
This suggests that the number $\mathbb{E}\left[N\right]=20$ of simulated atoms and the given planar area is not enough to sufficiently approximate tha analytical curves, and a larger area should be considered. This gives a further arguement in favor of the use of point processes and analytical/numerical evaluation, compared to simulations, since the latter may provide erroneous results due to edge effects.

\begin{figure}[h!]    
\centering  
\label{fig:B2p5}
	 	\subfigure[Coverage with interference approximation. Case $\beta=3.2$.]{          
           \includegraphics[trim = 0mm 35mm 0mm 20mm, clip, width=0.45\textwidth]{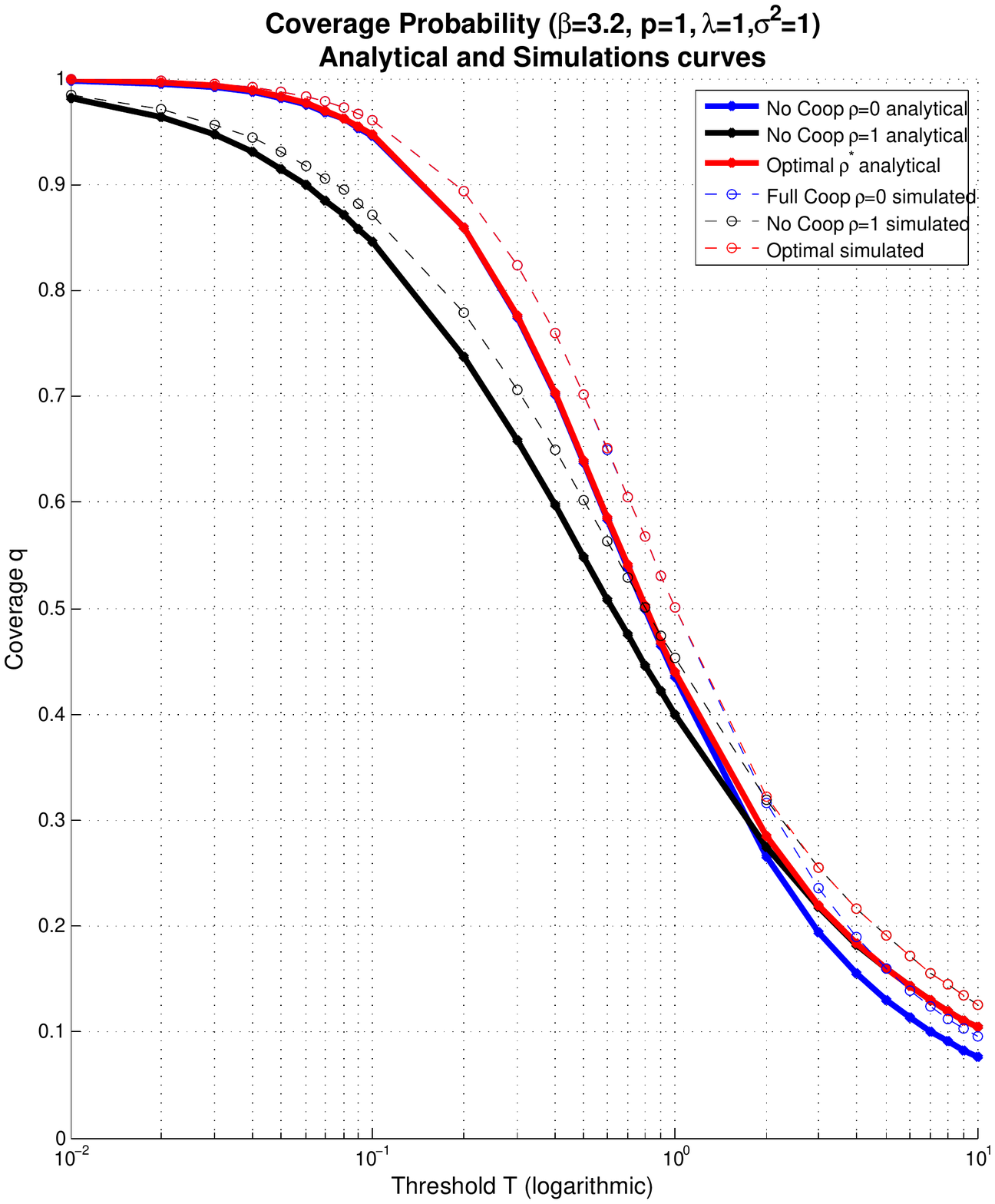}
           \label{fig:SimulB2p5}
           }
            \subfigure[Coverage with interference approximation. Case $\beta=2.5$.]{          
           \includegraphics[trim = 0mm 35mm 0mm 20mm, clip, width=0.45\textwidth]{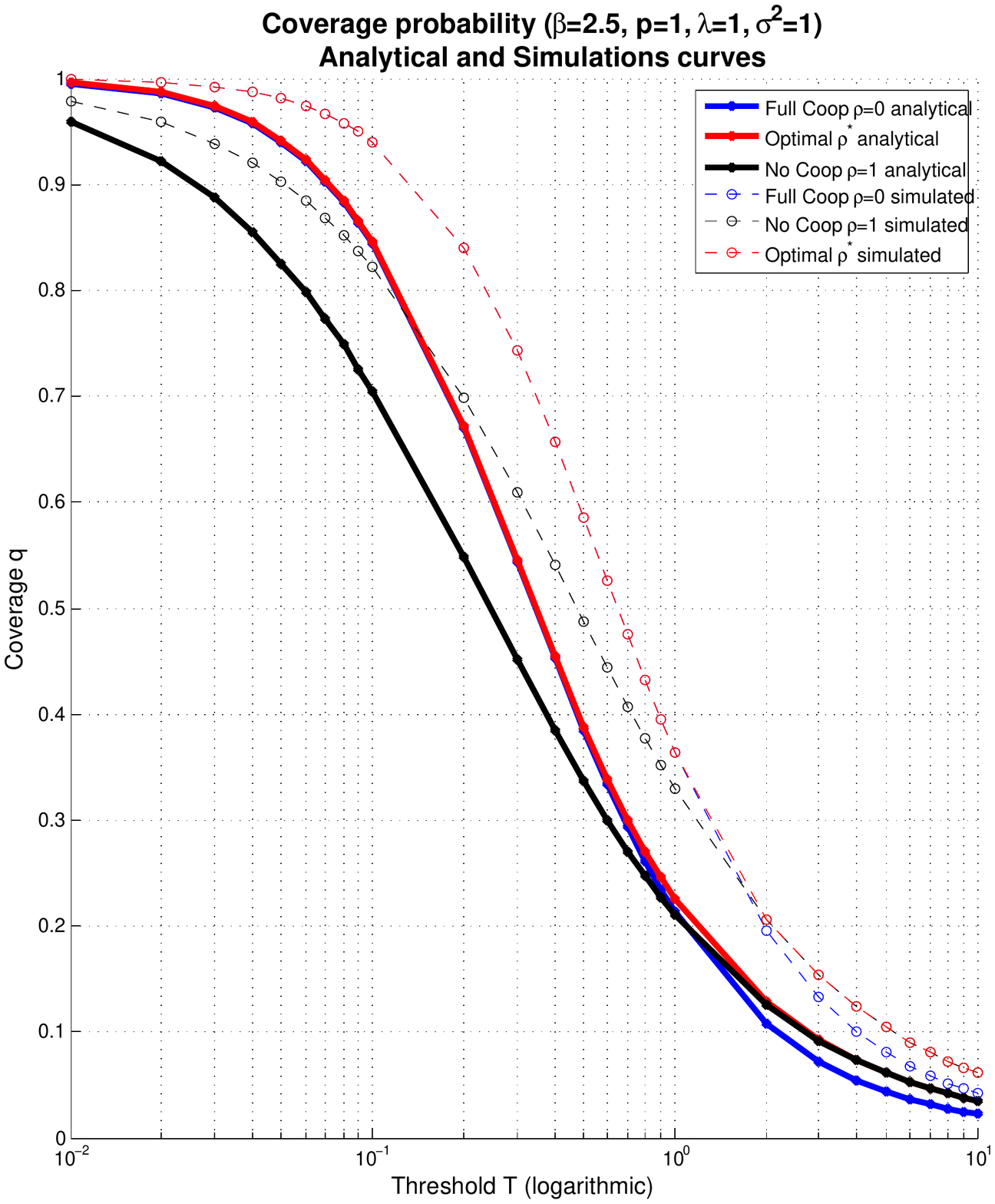}
           \label{fig:SimulB3p2}
           }
           \caption{(a) Comparison between analytical and simulation results. Expected number of uniformly produced atoms $\mathbb{E}\left[N\right]=20$ in an area of $10\ \mathrm{m^2}$ with density $\lambda =1\ \mathrm{atom/m^2}$. Path-loss exponent $\beta=3.2$. (b) Path-loss exponent $\beta=2.5$.}
\end{figure}

%

\subsection{Second Neighbour Interference Elimination - Dirty Paper Coding (DPC)}

Significantly higher gains are obtained in the case where some further system knowledge is available at the BS side. Specifically, the information over second neighbour interference may eliminate the negative effects from $\mathbf{b}_2$ (and obviously $\mathbf{b}_1$) and result in high gains in coverage, in the entire domain of $T$. This is illustrated in the plot of coverage in Fig. \ref{fig:CovGainInterf}. From the plot we get a maximum gain in coverage of over 16\% between $T=0.2$ to $1$. The absolute difference percentage is shown in Fig. \ref{fig:CovDiffGainInterf} and the optimal value of $\rho^*$ for each threshold $T$ in Fig. \ref{fig:OptThreshInterf}.

\begin{figure}[h!]
\centering
\includegraphics[trim = 0mm 30mm 0mm 20mm, clip, width=0.50\textwidth]{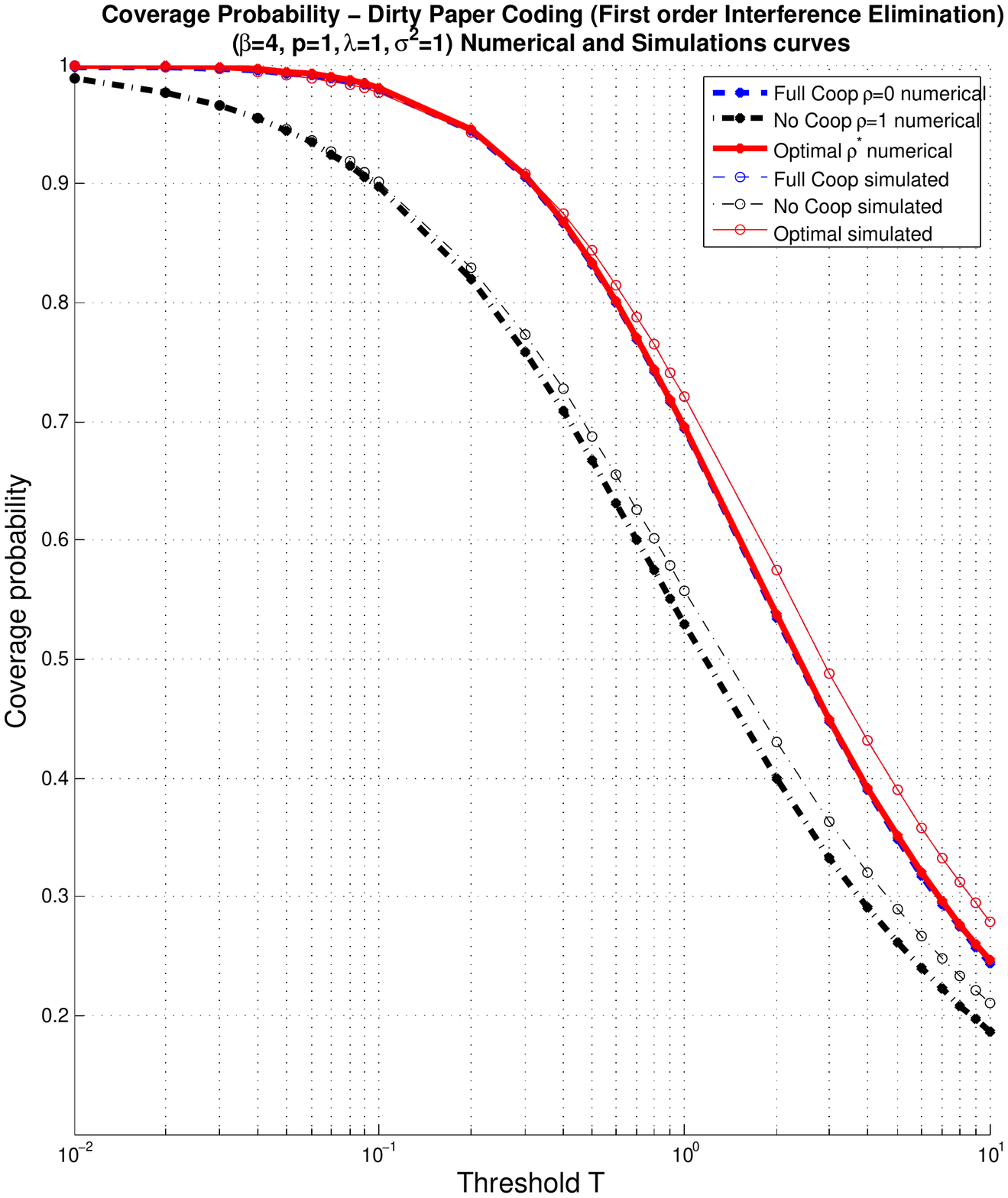}
\caption{Coverage probability using Optimal cooperation, compared to the case with $\mathrm{Full\ Coop}$ and $\mathrm{No\ Coop}$. The case of second neighbour interference cancellation. Path-loss exponent $\beta=4$.}
\label{fig:CovGainInterf}
\end{figure}

\begin{figure}[h!]    
\centering  
\label{fig:DPCgain}
	 	\subfigure[Coverage difference between Optimal and $\mathrm{No\ Coop}$. DPC and $\beta=4$.]{          
           \includegraphics[trim = 0mm 30mm 0mm 20mm, clip, width=0.40\textwidth]{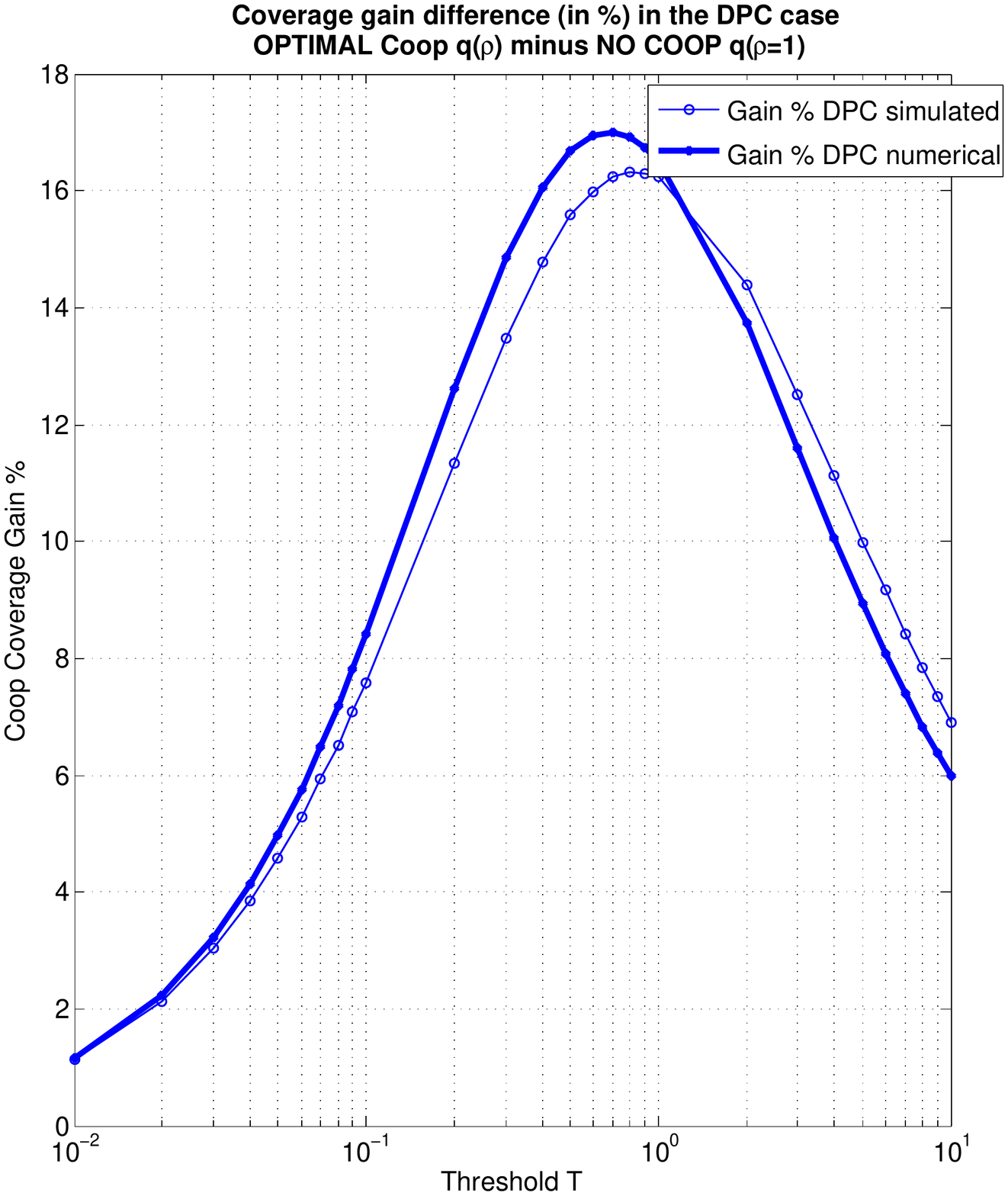}
           \label{fig:CovDiffGainInterf}
           }
            \subfigure[Optimal value of parameter $\rho^*$ versus threshold value $T$. DPC and $\beta=4$.]{          
           \includegraphics[trim = 0mm 30mm 0mm 20mm, clip, width=0.40\textwidth]{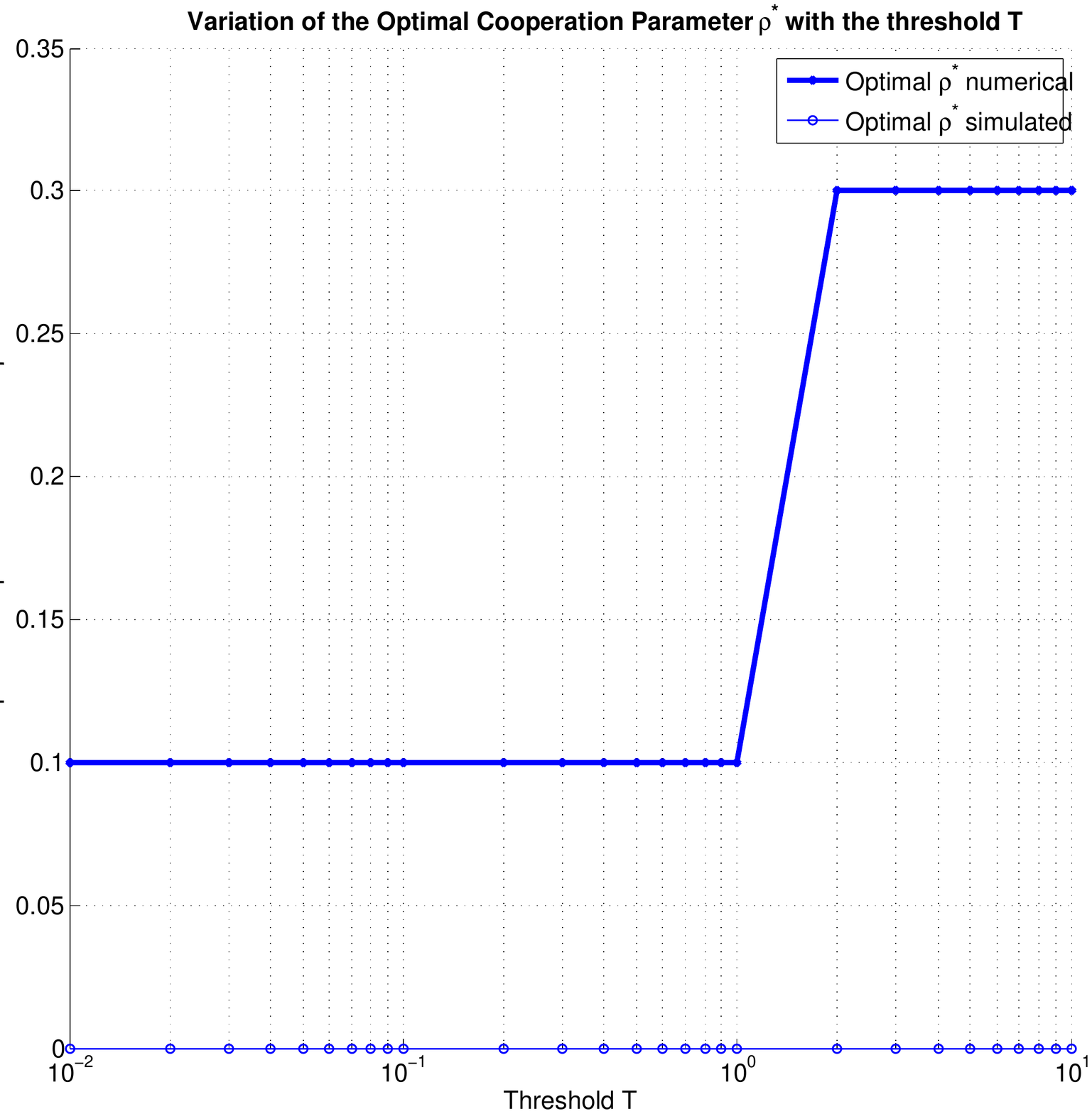}
           \label{fig:OptThreshInterf}
           }
           \caption{Coverage improvement by use of cooperation geometric policies for $\beta=4$. Optimal value of design parameter $\rho^*$ over $T$.}
\end{figure}

\section{Conclusions and Future Work}
\label{SectionVI}

In this paper, we have proposed a general methodology to treat problems of cooperation in cellular networks, in the case where data exchange is allowed only between pairs of nodes.
The framework developed uses tools from stochastic geometry to model the random positions of nodes on the plane and calculate the network performance with reference to 
a typical location. 

In our work the focus has been on the evaluation of the coverage probability, using a specific cooperation scheme. This scheme is based on the idea of conferencing by Willems, where two transmitters encode a common message 
after exchange of information over the backhaul. We give to the service of each user two possibilities. Either to be served by its closest BS or 
to ask from the two closest ones to cooperate. The latter choice splits the total user power between these two BSs, which encode the same common message. The scheme results in 
an additional term on the beneficial received signal due to the correlated nature of the two transmissions. 

The treated cooperation scheme assumes only the knowledge of the fast-fading angular shift at the cooperation pair, so that transmission is done in a way that the two signals add up in-phase at the receiver and the maximum signal power benefit is obtained. The purpose of this assumption on limited channel state information exchange was to investigate possible coverage benefits, that can result without the necessity of exhaustive channel state feedback and full-channel beamformer adaptation. 

The stochastic geometry framework introduce here, can be applied to the evaluation of other types of cooperation, as is the case where the transmitters have full knowledge of the instantaneous channel gains and adapt their functionality to it. Treatment of such cases is very important, since it may reveal higher performance benefits, at the cost of additional information at the transmitter side, and is our future research direction.

To investigate cooperation, we have defined a set of geometric user-defined policies for the user actions. Depending on the ratio $\frac{r_1}{r_2}$ and some design parameter $\rho\in\left[0,1\right]$, a user may ask for BS cooperation or not. It was shown that the cooperation regions are regions close to the cell borders. The width of such zones is determined by the parameter $\rho$. Outside such zones, the user always chooses not to cooperate.

The evaluation of coverage, given that the positions of BS atoms follow a p.p.p. and the fast fading from a single BS to a user follows an exponential distribution, has shown gains when geometric policies of cooperation with conferencing are used, compared to the traditional cellular cell-centered architectures. The expression of $q\left(\rho\right)$ in (\ref{Tint1})+(\ref{Tint2}) with interference r.v. $\mathcal{I}$ and $\mathcal{I}_{DPC}$ respectively, for the cases without and with interference elimination from the second closest BS (by application of Dirty Paper Coding) quantifies the increase in gains with the increase in available information. Without any information and based just on power splitting, the gains are visible for $T<2$ and reach a maximum of over $10\%$ improvement in coverage area. The situation improves considerably when DPC is applied  (but no Zero Forcing) to adapt to the interfering signals coming from the serving pair,  and the gains are visible for the entire domain of $T$. An improvement of up to $17\%$ is apparent in coverage.

Another important benefit has been on the change of the shapes of coverage areas by an appropriate choice of the parameter $\rho$. Cell boundary regions are favored by cooperation since the absolute gain in coverage is concentrated there. Furthermore, the geometric, user-located policies suggested here are not necessarily the optimal ones. Better policies can most probably be obtained when the instantaneous channel fast-fading is known or when the relative influence of the action choice ($\mathrm{Full\ Coop}$ or $\mathrm{No\ Coop}$) on the neighbouring cells is taken into consideration. Optimal policies are expected to be non-geometric.

The issue which requires intensive research and is the direction for future investigations, is the choice of the number of cooperating BSs as well as the type of cooperation. In our work we have chosen to apply cooperation in the sense of power-splitting of the user message between two cooperating transmitters, as well as data exchange and linearly dependent encoding after conferencing. 
DPC is applied when the knowledge over second neighbour BS interference is available. However, the cooperation will have a different form if complex beamformers are available. Knowledge over the channel quality can allow for adaptive transmission by use of a Zero-Forcing or other type of precoder. Additional to that, a larger number of cooperating BSs (greater than two), will as well have different results on the received $\mathrm{SINR}$ at the receiver side.

The above discussions show that there is a potential for improvement on the results presented here and many further steps can be taken to analyze alternative cooperation schemes. The basic methodology presented in this work can provide the analytical framework to facilitate later extensions to the directions mentioned above. The results derived from our model, which assumes limited channel knowledge at the transmission pair, show that cooperation can be a very profitable scheme, which improves system performance, without exploitation of further network resources (frequency, time or power) by appropriate information data exchange between network nodes. The coverage gains from such cooperation strictly between pairs of neighbouring nodes range up to $17\%$ and favor the areas close to the cell edge.

\newpage

%

\begin{thebibliography}{10}


\bibitem{CommMag12CoMP}
Daewon Lee, Hanbyul Seo, Bruno Clerckx, Eric Hardouin, David Mazzarese, Satoshi
  Nagata, and Krishna Sayana.
\newblock Coordinated multipoint transmission and reception in {LTE}-advanced:
  {D}eployment scenarios and operational challenges.
\newblock {\em IEEE Communications Magazine}, 50, issue: 2:148--155.

\bibitem{KarakayaliJou}
M.~Kemal Karakayali, Gerard~J. Foschini, and Reinaldo~A. Valenzuela.
\newblock Network coordination for spectrally efficient communications in
  cellular systems.
\newblock {\em IEEE Wireless Communications Mag.}, 13, no.4:56--61, Aug. 2006.

\bibitem{GesbertTutor}
David Gesbert, Stephen Hanly, Howard Huang, Shlomo~Shamai Shitz, Osvaldo
  Simeone, and Wei Yu.
\newblock Multi-cell {MIMO} cooperative networks: {A} new look at interference.
\newblock {\em IEEE JSAC}, 28, no.9:1--29, Dec. 2010.

\bibitem{WillemsConf83}
Frans~M.J. Willems.
\newblock The discrete memoryless {M}ultiple {A}ccess {C}hannel with partially
  cooperating encoders.
\newblock {\em IEEE Trans. on Information Theory}, IT-29:441--445, May 1983.

\bibitem{BrossLapWigg08}
Shraga~I. Bross, Amos Lapidoth, and Michele~A. Wigger.
\newblock The {G}aussian {MAC} with conferencing encoders.
\newblock {\em IEEE International Symposium on Information Theory (ISIT)},
  pages 2702--2706, 2008.

\bibitem{PapadogiannisDyn08}
A.~Papadogiannis, D.~Gesbert, and E.~Hardouin.
\newblock A dynamic clustering approach in wireless networks with multi-cell
  cooperative processing.
\newblock {\em IEEE ICC}, 2008.

\bibitem{TutCoMPYang13}
Chenyang Yang, Shengqian Han, Xueying Hou, and Andreas~F. Molisch.
\newblock How do we design {C}o{MP} to achieve its promised potential?
\newblock {\em IEEE Wireless Communications}, pages 67--74, Feb. 2013.

\bibitem{deKerretTutorCoMP13}
Paul de~Kerret and David Gesbert.
\newblock {CSI} sharing strategies for transmitter cooperation in wireless
  networks.
\newblock {\em IEEE Wireless Communications}, pages 43--49, Feb. 2013.

\bibitem{JungnickICC13}
V.~Jungnickel, K.~Manolakis, S.~Jaeckel, M.~Lossow, P.~Farkas, M.~Schlosser,
  and V.~Braun.
\newblock Backhaul requirements for {I}nter-site cooperation in heterogeneous
  {LTE}-advanced networks.
\newblock {\em IEEE ICC, Budapest, Hungary}, June 2013.

\bibitem{CoMPTutIrmer11}
Ralf Irmer, Heinz Droste, Patrick Marsch, Michael Grieger, Gerhard Fettweis,
  Sefan Brueck, Hans-Peter Mayer, Lars Thiele, and Volker Jungnickel.
\newblock Coordinated multipoint: Concepts, performance, and field trial
  results.
\newblock {\em IEEE Communications Magazine}, pages 102--111, Feb. 2011.

\bibitem{MoritzIT11}
Moritz Wiese, Holger Boche, Igor Bjelakovic, and Volker Jungnickel.
\newblock The {C}ompound {M}ultiple {A}ccess {C}hannel with partially
  cooperating encoders.
\newblock {\em IEEE Trans. on Information Theory}, 57, no. 5:3045--3066, May
  2011.

\bibitem{MaricYaKaramer05}
Ivana Maric, Roy~D. Yates, and Gerhard Kramer.
\newblock The discrete memoryless {C}ompound {M}ultiple {A}ccess {C}hannel with
  conferencing encoders.
\newblock {\em IEEE International Symposium on Information Theory (ISIT)},
  pages 407--410, 2005.

\bibitem{PQMAC11}
Ali Haghi, Reza Khosravi-Farsani, Maohammad~Reza Aref, and Farokh Marvasti.
\newblock The capacity region of p-transmitter/q-receiver {M}ultiple-{A}ccess
  {C}hannels with common information.
\newblock {\em IEEE Trans. on Information Theory}, 57, no.11:7359--7376, Nov.
  2011.

\bibitem{WiggKramer09}
Michele~A. Wigger and Gerhard Kramer.
\newblock Three-user {MIMO} {MAC}s with cooperation.
\newblock {\em Information Theory Workshop (ITW)}, 2009.

\bibitem{JindalISIT03}
Nihar Jindal, Urbashi Mitra, and Andrea Goldsmith.
\newblock Capacity of {A}d-{H}oc networks with node cooperation.
\newblock {\em IEEE International Symposium on Information Theory}, 2003.

\bibitem{ZakhourGesbSP11}
Randa Zakhour and David Gesbert.
\newblock Optimized data sharing in multicell {MIMO} with finite backhaul
  capacity.
\newblock {\em IEEE Trans. on Signal Processing}, 59, iss. 12:6102--6111, Dec.
  2011.

\bibitem{MaricYatesKram07}
Ivana Maric, Roy~D. Yates, and Gerhard Kramer.
\newblock Capacity of interference channels with partial transmitter
  cooperation.
\newblock {\em IEEE Trans. on Inf. Theory}, 53, no.10:3536--3548, Oct. 2007.

\bibitem{MarschFett08}
Patrick Marsch and Gerhard Fettweis.
\newblock On base station cooperation schemes for downlink network {MIMO} under
  a constrained backhaul.
\newblock {\em Proc. IEEE GLOBECOM}, 2008.

\bibitem{NgGold04}
Chris~T.K. Ng and Andrea~J. Goldsmith.
\newblock Transmitter cooperation in ad-hoc wireless networks: {D}oes
  dirty-paper coding beat relaying?
\newblock {\em IEEE Information Theory Workshop (ITW)}, pages 277 -- 282, Oct.
  2004.

\bibitem{fettGLOBECOM08}
Patrick Marsch and Gerhard Fettweis.
\newblock On base station cooperation schemes for downlink network {MIMO} under
  a constrained backhaul.
\newblock {\em IEEE Globecom}, 2008.

\bibitem{GiovaWCNC12}
Anastasios Giovanidis, Jonatan Krolikowski, and Stefan Brueck.
\newblock A 0-1 program to form minimum cost clusters in the downlink of
  cooperating base stations.
\newblock {\em Proc. of the Wireless Communications and Networking Conference
  (WCNC), Paris, France}, Apr. 2012.

\bibitem{deKerrGesbert12}
Paul de~Kerret and David Gesbert.
\newblock Sparse precoding in multicell {MIMO} systems.
\newblock {\em Proc. of the Wireless Communications and Networking Conference
  (WCNC), Paris, France}, April 2012.

\bibitem{HardjVuce09}
Wibowo Hardjawana, Branka Vucetic, and Yonghui Li.
\newblock Multi-user cooperative base station systems with joint precoding and
  beamforming.
\newblock {\em IEEE Journal on Selected Topics in Signal Processing}, 3, no.
  6:1079--1093, Dec. 2009.

\bibitem{AkoumHeathSPA12}
Salam Akoum and Jr. Robert W.~Heath.
\newblock Multi-cell coordination: {A} {S}tochastic {G}eometry approach.
\newblock {\em 13th International Workshop on Signal Processing Advances in
  Wireless Communications (SPAWC)}, 2012.

\bibitem{CaireDPC03}
Giuseppe Caire and Shlomo~Shamai (Shitz).
\newblock On the achievable throughput of a multiantenna {G}aussian broadcast
  channel.
\newblock {\em IEEE Trans. on Information Theory}, 49, no. 7:1691--1706, July
  2003.

\bibitem{CostaDP83}
Max H.~M. Costa.
\newblock Writing on {D}irty {P}aper.
\newblock {\em IEEE Trans. on Information Theory}, IT-29, no. 3:439--441, May
  1983.

\bibitem{CompGeomBook}
Mark de~Berg, Otfried Cheong, Marc van Kreveld, and Mark Overmars.
\newblock {\em Computational Geometry: Algorithms and Applications}.
\newblock Springer-Verlag, 3rd rev. ed. 2008.

\bibitem{LeekVoronoi}
Der-Tsai Lee.
\newblock On k-nearest neighbor {V}oronoi diagrams in the plane.
\newblock {\em IEEE Trans. on Computers}, c-31, no. 6:478--487, June 1982.

\bibitem{BaccelliOpAloha}
Francois Baccelli, Bartlomiej Blaszczyszyn, and Paul Muhlethaler.
\newblock Stochastic analysis of spatial and opportunistic {A}loha.
\newblock {\em IEEE JSAC}, 27, no.7:1105--1119, Sept. 2009.

\bibitem{JindalAndrewsComm11}
Nihar Jindal, Jeffrey~G. Andrews, and Steven Weber.
\newblock Multi-{A}ntenna communication in ad hoc networks: Achieving {MIMO}
  gains with {SIMO} transmission.
\newblock {\em IEEE Trans. on Communications}, 59, no. 2:529--540, Feb. 2011.

\bibitem{VenkaPIMRCa07}
Sivarama Venkatesan.
\newblock Coordinating base stations for greater uplink spectral efficiency in
  a cellular network.
\newblock {\em 18th annual IEEE symposium on Personal, Indoor and Mobile Radio
  Communications (PIMRC)}, 2007.

\bibitem{BaccelliBookStoch}
Francois Baccelli and Bartlomiej Blaszczyszyn.
\newblock {\em Stochastic Geometry and Wireless Networks}.
\newblock now Publishers Inc. - Foundations and Trends in Networking 3:3-4,
  2009.

\bibitem{AndrewsCoverage}
Jeffrey~G. Andrews, Francois Baccelli, and Radha~Krishna Ganti.
\newblock A tractable approach to coverage and rate in cellular networks.
\newblock {\em IEEE Trans. on Communications}, 59, issue: 11:3122--3134, 2011.

\bibitem{AsmBook}
Soren Asmussen.
\newblock {\em Applied Probability and Queues}.
\newblock Springer-Verlag, 2003.

\bibitem{LaplaceOrd91}
Abdulhamid Alzaid, Jee~Soo Kim, and Frank Proschan.
\newblock Laplace ordering and its applications.
\newblock {\em J. Appl. Prob.}, 28:116--130, 1991.

\bibitem{Dill94}
Michael~B. Dillencourt.
\newblock Finding {H}amiltonian cycles in {D}elaunay triangulations is
  {NP}-complete.
\newblock {\em Discrete Applied Mathematics Elsevier Science Publishers B. V.
  Amsterdam, The Netherlands}, 64, Issue 3:207--217, 1996.

\bibitem{ThomasCoverIT}
Thomas~M. Cover and Joy~A. Thomas.
\newblock {\em Elements of Information Theory}.
\newblock Wiley, 1991.

\bibitem{GantiSuperPo10}
R.K. Ganti, Z.~Gong, M.~Haenggi, C.~Lee, S.~Srinivasa, D.~Tisza, S.~Vanka, and
  P.~Vizi.
\newblock Implementation and experimental results of superposition coding on
  software radio.
\newblock {\em IEEE International Conference on Communications (ICC)}, 2010.

\bibitem{SuperPoCodACM07}
Li~(Erran) Li, Richard Alimi, Ramachandran Ramjee, Jingpu Shi, Yanjun Sun,
  Harish Viswanathan, and Yang~Richard Yang.
\newblock {\em Proc. 13th annual ACM international conference on Mobile
  computing and networking (MobiCom07)}, 2007.

\bibitem{RossBook96}
Sheldon~M. Ross.
\newblock {\em Stochastic Processes}.
\newblock Wiley Series in Probability and Mathematical Statistics, 1996.

\end{thebibliography}

\bibliographystyle{unsrt}

\newpage


\section*{Appendix}
\footnotesize

\subsection{Geometry aspects}
\label{AppA}

In the current work, the locations of base stations (BSs) of a wireless cellular network on the $\mathbb{R}^2$ plane are determined by 
some realization of a Poisson point process (Poisson p.p.) $\Phi$ with constant density $\lambda$, 
denoted by $\phi=\left\{\mathbf{z}_i\right\}$. Once, the positions are fixed, the model under study is deterministic. We will refer to the elements of $\phi$ as \textit{atoms}.

Each atom $\mathbf{z}_i\in\phi$ is represented as an ordered pair $\left(x_i,y_i\right)$. 
The Euclidean distance between $\mathbf{z}_i$ and any point $\mathbf{z}=\left(x,y\right)\in\mathbb{R}^2$ of the 
plane is defined as the $\ell^2$-norm of their difference

\begin{eqnarray}
\label{distanceM}
d\left(\mathbf{z}_i,\mathbf{z}\right) := \left|\mathbf{z}_i-\mathbf{z}\right|_2 = \sqrt{\left(x_i-x\right)^2+\left(y_i-y\right)^2}.
\end{eqnarray}

Based on the positions of the atoms, the plane can be subdivided geometrically into 
cells, one for each $\mathbf{z}_i$, such that each cell contains all planar points closer to 
its atom than to any other atoms of $\phi$.

\begin{Def}{(\textbf{1-Voronoi cell} \cite{LeekVoronoi}, \cite[pp.148-149]{CompGeomBook})}\\
The 1-Voronoi cell $\mathcal{V}_1\left(\mathbf{z}_i\right)$ associated with $\mathbf{z}_i$ is the locus of all points in $\mathbb{R}^2$ which are closer to $\mathbf{z}_i$ 
than to any other atom of $\phi$. Then, 

\begin{eqnarray}
\label{Voronoi1Set}
\mathcal{V}_1\left(\mathbf{z}_i\right)  = \left\{\mathbf{z}\in\mathbb{R}^2|\ d\left(\mathbf{z}_i,\mathbf{z}\right)\leq d\left(\mathbf{z}_k,\mathbf{z}\right), \forall \mathbf{z}_k\in\phi\setminus\left\{\mathbf{z}_i\right\}\right\}.\nonumber\\
\end{eqnarray}


Alternatively, given any two distinct atoms $\mathbf{z}_i,\mathbf{z}_k\in\phi$, the locus of points closer to $\mathbf{z}_i$ than $\mathbf{z}_k$ 
denoted by $h\left(\mathbf{z}_i,\mathbf{z}_k\right)$, is the one of two closed half-planes, determined by the perpendicular bisector 
of the line segment $\overline{\mathbf{z}_i\mathbf{z}_k}$, which contains $\mathbf{z}_i$. The 1-Voronoi cell can be defined as the intersection of all half-planes associated with $\mathbf{z}_i$
\begin{eqnarray}
\label{Voronoi1HP}
\mathcal{V}_1\left(\mathbf{z}_i\right) & = & \bigcap_{k:k\neq i} h\left(\mathbf{z}_i,\mathbf{z}_k\right).
\end{eqnarray}
\label{DefV1}
\end{Def}
The collection of all cells $\mathcal{V}_1\left(\mathbf{z}_i\right)$ constitutes 
the 1-Voronoi diagram (or planar tessellation) $\mathcal{V}_1\left(\phi\right)$ associated with the set of points $\phi$. The Voronoi diagram is a graph and its dual can be constructed, which is the so called Delaunay graph (or triangulation) of the set $\phi$.

\begin{Pro}(\textbf{1-Voronoi vertices and edges} \cite[pp.150-151, Th.7.4]{CompGeomBook})\\
A planar point $\mathbf{z}$ is a \textit{vertex} of the diagram if and only if the largest empty open ball with $\mathbf{z}$ as center contains at least three atoms on its boundary. 

An edge of $\mathcal{V}_1\left(\mathbf{z}\right)$ is related to exactly two atoms $\mathbf{z}_i$ and $\mathbf{z}_n$ and is the set of planar points
\begin{eqnarray}
\label{edgeV1}
e\left(\mathbf{z}_i,\mathbf{z}_n\right) & = & \left\{\mathbf{z}\in\mathbb{R}^2|\ d\left(\mathbf{z}_i,\mathbf{z}\right)=d\left(\mathbf{z}_n,\mathbf{z}\right) < d\left(\mathbf{z}_k,\mathbf{z}\right),\right.\nonumber\\
& & \left.\mathbf{z}_k\in\phi\setminus\left\{\mathbf{z}_i,\mathbf{z}_n\right\}\right\}
\end{eqnarray}
The edge lies on the bisector of the line segment $\overline{\mathbf{z}_i\mathbf{z}_n}$ and each $\mathbf{z}\in e\left(\mathbf{z}_i,\mathbf{z}_n\right)$ - conditioned that the above set is not empty - 
is the center of an empty open ball with the two related atoms and no other on its boundary.
\label{VertexEdgeV1}
\end{Pro}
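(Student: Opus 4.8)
The plan is to reduce both claims to a single correspondence between the cell structure and the \emph{largest empty ball} of a point. For a planar point $\mathbf{z}$ I would set $r_{\mathbf{z}} := \min_{\mathbf{z}_k \in \phi} d(\mathbf{z}_k, \mathbf{z})$ and consider the open ball $\mathcal{B}(\mathbf{z}, r_{\mathbf{z}})$. By construction this ball is empty of atoms in its interior, while an atom $\mathbf{z}_i$ lies on its boundary exactly when $d(\mathbf{z}_i, \mathbf{z}) = r_{\mathbf{z}}$. Comparing with (\ref{Voronoi1Set}), this condition is precisely $\mathbf{z} \in \mathcal{V}_1(\mathbf{z}_i)$. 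Hence the atoms on the boundary of $\mathcal{B}(\mathbf{z}, r_{\mathbf{z}})$ are in bijection with the cells whose closed region contains $\mathbf{z}$, so counting boundary atoms is the same as counting cells meeting at $\mathbf{z}$. Every subsequent step will appeal to this identification.

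For the edge statement I would first fix two atoms $\mathbf{z}_i, \mathbf{z}_n$ and unpack the set on the right-hand side of (\ref{edgeV1}). The equality $d(\mathbf{z}_i,\mathbf{z}) = d(\mathbf{z}_n,\mathbf{z})$ forces $\mathbf{z}$ onto the perpendicular bisector of $\overline{\mathbf{z}_i\mathbf{z}_n}$, which is a line; the strict inequalities $d(\mathbf{z}_i,\mathbf{z}) < d(\mathbf{z}_k,\mathbf{z})$ intersect this line with the open half-planes $h(\mathbf{z}_i,\mathbf{z}_k)$ opposite the remaining atoms, leaving a relatively open, possibly empty, portion of the bisector. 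By the correspondence above, each such point carries exactly $\mathbf{z}_i$ and $\mathbf{z}_n$ on the boundary of its empty ball and no other atom; thus it lies in the common boundary of $\mathcal{V}_1(\mathbf{z}_i)$ and $\mathcal{V}_1(\mathbf{z}_n)$ and in no third cell, which is exactly the description of a point in the relative interior of the edge. Reversing the reasoning shows every such point is of the stated form, completing the edge characterization.

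For the vertex statement I would argue both implications through the same device. If $\mathcal{B}(\mathbf{z}, r_{\mathbf{z}})$ carries at least three atoms $\mathbf{z}_1, \mathbf{z}_2, \mathbf{z}_3$ on its boundary, these are equidistant to $\mathbf{z}$ and are the closest atoms, so $\mathbf{z}$ lies simultaneously in $\mathcal{V}_1(\mathbf{z}_1)$, $\mathcal{V}_1(\mathbf{z}_2)$ and $\mathcal{V}_1(\mathbf{z}_3)$ and is therefore a meeting point of at least three cells. Conversely, if $\mathbf{z}$ is a vertex --- a point where at least three edges abut, hence belonging to at least three cells --- the correspondence returns at least three atoms on the boundary of its empty ball.

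The delicate point, which I expect to be the main obstacle, is to exclude degeneracy in the vertex direction: I must check that three equidistant closest atoms actually pin $\mathbf{z}$ down to an isolated ($0$-dimensional) point rather than a segment, so that it genuinely qualifies as a vertex. Here I would use that the locus equidistant from $\mathbf{z}_1, \mathbf{z}_2$ and the locus equidistant from $\mathbf{z}_1, \mathbf{z}_3$ are two bisector lines that are parallel precisely when $\mathbf{z}_1, \mathbf{z}_2, \mathbf{z}_3$ are collinear --- a case in which no point is equidistant from all three, so the configuration cannot occur --- while otherwise the two lines meet in the single circumcenter of the triangle $\mathbf{z}_1\mathbf{z}_2\mathbf{z}_3$. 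This forces $\mathbf{z}$ to be isolated and confirms it is a vertex of the tessellation, closing the equivalence.
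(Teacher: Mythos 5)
Your proposal is correct, but there is nothing in the paper to compare it against: the paper states Proposition \ref{VertexEdgeV1} only with a citation to \cite[pp.150-151, Th.7.4]{CompGeomBook} and supplies no proof of its own (unlike Propositions \ref{CircleV1}, \ref{CircleV2} and \ref{ProV12}, which it does prove). What you have written is essentially the standard argument behind the cited textbook theorem: the bijection between atoms on the boundary of the largest empty ball $\mathcal{B}\left(\mathbf{z},r_{\mathbf{z}}\right)$ and the closed cells $\mathcal{V}_1\left(\mathbf{z}_i\right)$ containing $\mathbf{z}$, the resulting trichotomy (one closest atom for cell interiors, exactly two for edge points, three or more for vertices), and the collinearity/circumcenter argument showing that equidistance from three distinct atoms pins down at most one point, so the vertex stratum is genuinely zero-dimensional. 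Two caveats if you want the write-up to be airtight. First, a characterization of ``vertex'' and ``edge'' presupposes a definition of those words, and you quietly use two different ones in the two directions of the vertex claim: in the forward direction a vertex is an isolated point lying in at least three closed cells, while in the converse it is a point where at least three edges abut. These are equivalent in this setting --- your edge characterization identifies the points lying in exactly two cells with the edges, so a point in three or more cells lies in no cell interior and in no edge, and your isolation argument shows what remains is a set of isolated points --- but you should fix one definition and derive the other property from it rather than alternate. Second, since $\phi$ is an infinite configuration, the step where the strict inequalities $d\left(\mathbf{z}_i,\mathbf{z}\right)<d\left(\mathbf{z}_k,\mathbf{z}\right)$ against \emph{all} remaining atoms carve out a relatively open subset of the bisector implicitly uses local finiteness of $\phi$; this holds almost surely for a Poisson realization but deserves a sentence.
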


\begin{Def} (\textbf{Delaunay graph} \cite[p.196]{CompGeomBook}))\\
The Delaunay graph $\mathcal{D}\left(\phi\right)$ is the dual of the 1-Voronoi diagram $\mathcal{V}_1\left(\phi\right)$. In the Delaunay graph each atom $\mathbf{z}_i$ takes the role of a vertex and two vertices $\mathbf{z}_i$, $\mathbf{z}_n$ are connected by an edge $\mathbf{z}_i\mathbf{z}_n$ if the set $e\left(\mathbf{z}_i,\mathbf{z}_n\right)$ in (\ref{edgeV1}) is non-empty, i.e. if the two 1-Voronoi cells $\mathcal{V}_1\left(\mathbf{z}_i\right)$ and $\mathcal{V}_1\left(\mathbf{z}_n\right)$ share an edge.
\label{Delaunay}
\end{Def}

\begin{Pro} (\textbf{Delaunay edges} \cite[p.198, Th.9.6]{CompGeomBook})\\

Two points $\mathbf{z}_i,\mathbf{z}_n\in\phi$ form an edge of the Delaunay graph if and only if there exists an empty ball with only $\mathbf{z}_i$ and $\mathbf{z}_n$ on its boundary.

Three atoms $\mathbf{z}_i$, $\mathbf{z}_n$ and $\mathbf{z}_m$ are vertices of the same face of the Delaunay graph if and only if the circle which passes through them is the boundary of a ball, empty of other atoms in $\phi$.
\label{DelaunayCond}
\end{Pro}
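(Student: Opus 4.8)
The plan is to derive both statements directly from the duality between the $1$-Voronoi diagram and the Delaunay graph (Definition \ref{Delaunay}) combined with the empty-ball characterisation of Voronoi edges and vertices in Proposition \ref{VertexEdgeV1}. Since the Delaunay edge $\mathbf{z}_i\mathbf{z}_n$ exists precisely when the edge set $e(\mathbf{z}_i,\mathbf{z}_n)$ of (\ref{edgeV1}) is non-empty, and since each point of that set is the centre of an empty open ball carrying $\mathbf{z}_i,\mathbf{z}_n$ and no other atom on its boundary, the two notions are almost by definition two sides of the same coin; the work is to phrase this carefully as two equivalences.

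For the first claim I would prove both implications. For the forward direction, assume $\mathbf{z}_i\mathbf{z}_n$ is a Delaunay edge; then by Definition \ref{Delaunay} the set $e(\mathbf{z}_i,\mathbf{z}_n)$ is non-empty, and choosing any point $\mathbf{z}$ in it, Proposition \ref{VertexEdgeV1} guarantees that the open ball $\mathcal{B}(\mathbf{z},d(\mathbf{z}_i,\mathbf{z}))$ is empty of atoms and has exactly $\mathbf{z}_i,\mathbf{z}_n$ on its boundary, which is the required ball. For the converse, suppose such an empty ball $\mathcal{B}(\mathbf{z},r)$ exists; then $d(\mathbf{z}_i,\mathbf{z})=d(\mathbf{z}_n,\mathbf{z})=r$ while every other atom $\mathbf{z}_k$ lies strictly outside, i.e. $d(\mathbf{z}_k,\mathbf{z})>r$, so $\mathbf{z}$ satisfies the defining inequalities of (\ref{edgeV1}) and therefore $e(\mathbf{z}_i,\mathbf{z}_n)\neq\emptyset$, giving a Delaunay edge.

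The second claim is handled analogously, now at the level of Voronoi vertices rather than edges. For the forward direction, a common triangular Delaunay face dualises to a single Voronoi vertex $\mathbf{v}$ at which the three cells $\mathcal{V}_1(\mathbf{z}_i),\mathcal{V}_1(\mathbf{z}_n),\mathcal{V}_1(\mathbf{z}_m)$ meet; by Proposition \ref{VertexEdgeV1} the largest empty open ball centred at $\mathbf{v}$ carries these three atoms on its boundary, so the circle through $\mathbf{z}_i,\mathbf{z}_n,\mathbf{z}_m$ bounds a ball empty of other atoms. Conversely, if the circumcircle of the three atoms is empty, its centre $\mathbf{v}$ is equidistant (at the circumradius $R$) from all three while every other atom lies at distance greater than $R$; hence $\mathbf{v}$ is equidistant from and closest to exactly these three atoms, is therefore a Voronoi vertex incident to their three cells, and by duality the three atoms span a Delaunay face.

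The main obstacle I anticipate is not the logic of the equivalences but the treatment of degenerate configurations. Proposition \ref{VertexEdgeV1} states only that a Voronoi vertex has \emph{at least} three atoms on its boundary, so when four or more atoms are cocircular the corresponding empty ball touches more than three atoms and the dual face is a polygon rather than a triangle; strictly speaking the ``three atoms / triangular face'' statement then needs the general-position assumption that no four atoms are cocircular. For the Poisson point process used throughout the paper this event has probability zero, so the clean triangulated form holds almost surely; I would note this explicitly and otherwise carry out both equivalences under the general-position hypothesis as in \cite[Th.9.6]{CompGeomBook}.
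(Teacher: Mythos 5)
The paper gives no proof of this proposition at all --- it is quoted, with citation, from de Berg et al. \cite[p.198, Th.9.6]{CompGeomBook} --- so there is no internal argument to compare against; your proof is a correct reconstruction of that textbook argument, deriving both equivalences from Definition \ref{Delaunay} together with the empty-ball characterisation of 1-Voronoi edges and vertices in Proposition \ref{VertexEdgeV1}. One minor remark: your concern about cocircular degeneracies is harmless but unnecessary, since the proposition only asserts that the three atoms are vertices of the \emph{same face} (not of a triangle), an equivalence that survives four-point cocircularity; in any case the paper disposes of that event elsewhere (proof of Proposition \ref{CircleV2}) via the almost-sure ``general quadratic position'' of Poisson atoms.
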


For the description of the topology for the communications scenario under study, we will also make use of the 2-order Voronoi diagram $\mathcal{V}_2\left(\phi\right)$, which is the collection of all non-empty 2-Voronoi cells.

\begin{Def} (\textbf{2-Voronoi cell} \cite{LeekVoronoi})\\
The 2-Voronoi cell $\mathcal{V}_2\left(\mathbf{z}_i,\mathbf{z}_n\right)$ associated with $\mathbf{z}_i,\mathbf{z}_n\in\phi$, $i\neq n$ is the locus of all points in $\mathbb{R}^2$ closer to $\left\{\mathbf{z}_i,\mathbf{z}_n\right\}$ than to any other atom in $\phi$
\begin{eqnarray}
\label{Voronoi2Set}
\mathcal{V}_2\left(\mathbf{z}_i,\mathbf{z}_n\right) & = & \left\{\mathbf{z}\in\mathbb{R}^2|\ d\left(\mathbf{z}_i,\mathbf{z}\right)\leq d\left(\mathbf{z}_k,\mathbf{z}\right) \  \&\ \right.\nonumber\\
								 &  & \left.d\left(\mathbf{z}_n,\mathbf{z}\right)\leq d\left(\mathbf{z}_k,\mathbf{z}\right), \forall \mathbf{z}_k\in\phi\setminus\left\{\mathbf{z}_i,\mathbf{z}_n\right\}\right\}.\nonumber\\
\end{eqnarray}
The 2-Voronoi cell can be defined as the intersection of all half-planes associated with $\left\{\mathbf{z}_i,\mathbf{z}_n\right\}$, that is
\begin{eqnarray}
\label{Voronoi2}
\mathcal{V}_2\left(\mathbf{z}_i,\mathbf{z}_n\right) & = & \bigcap_{\mathbf{z}_l\in\left\{\mathbf{z}_i,\mathbf{z}_n\right\},\mathbf{z}_k\in\phi\setminus\left\{\mathbf{z}_i,\mathbf{z}_n\right\}} h\left(\mathbf{z}_l,\mathbf{z}_k\right).
\end{eqnarray}
\label{DefV2}
\end{Def}

Given any point $\mathbf{z}\in\mathbb{R}^2$, we can find its first closest neighbouring atom and 
the corresponding 1-Voronoi cell it belongs to, by enlarging the radius $r$ of an empty ball $\mathcal{B}\left(\mathbf{z},r\right)$ 
with center $\mathbf{z}$, until at least one atom meets its boundary. Given that a single atom $\mathbf{z}_i$ was met, by further expanding 
the radius until exactly one atom lies in the interior and at least another one meets the boundary, we find the 
second closest neighbouring atom and the corresponding 2-Voronoi cell. This is explicitly stated in the following Propositions.

\begin{Pro}
\label{CircleV1}
A point $\mathbf{z}\in\mathbb{R}^2$ lies within the 1-Voronoi cell $\mathbf{z}\in\mathcal{V}_1\left(\mathbf{z}_i\right)$ if and only if it is the center of an empty ball of radius $r_{i1} = d\left(\mathbf{z}_i,\mathbf{z}\right)$ with $\mathbf{z}_i$ on its boundary. The ball is denoted by $\mathcal{B}\left(\mathbf{z},r_{i1}\right)$.
\end{Pro}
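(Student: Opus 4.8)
The plan is to prove the two directions of the equivalence directly from the set definition of the 1-Voronoi cell in Definition \ref{DefV1}, translating the geometric phrase ``empty ball with $\mathbf{z}_i$ on its boundary'' into the algebraic distance inequalities that define membership. Throughout I fix $r_{i1}:=d(\mathbf{z}_i,\mathbf{z})$ and recall that the open ball $\mathcal{B}(\mathbf{z},r_{i1})$ consists of the points at distance strictly less than $r_{i1}$ from $\mathbf{z}$, so that $\mathbf{z}_i$, being at distance exactly $r_{i1}$, automatically lies on its boundary and never in its interior.

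For the forward direction I would assume $\mathbf{z}\in\mathcal{V}_1(\mathbf{z}_i)$. By Definition \ref{DefV1} this means $d(\mathbf{z}_i,\mathbf{z})\leq d(\mathbf{z}_k,\mathbf{z})$ for every $\mathbf{z}_k\in\phi\setminus\{\mathbf{z}_i\}$, hence $d(\mathbf{z}_k,\mathbf{z})\geq r_{i1}$ for all such $k$. This says precisely that no atom other than $\mathbf{z}_i$ lies strictly inside $\mathcal{B}(\mathbf{z},r_{i1})$; since $\mathbf{z}_i$ sits at distance $r_{i1}$ it is on the boundary and not interior. Thus $\mathcal{B}(\mathbf{z},r_{i1})$ is empty of atoms with $\mathbf{z}_i$ on its boundary, as claimed.

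For the converse I would assume $\mathbf{z}$ is the center of an empty ball $\mathcal{B}(\mathbf{z},r_{i1})$ of radius $r_{i1}=d(\mathbf{z}_i,\mathbf{z})$ with $\mathbf{z}_i$ on its boundary. Emptiness means that no atom of $\phi$ lies in the interior, i.e. $d(\mathbf{z}_k,\mathbf{z})\geq r_{i1}$ for every $\mathbf{z}_k\in\phi\setminus\{\mathbf{z}_i\}$ (the atom $\mathbf{z}_i$ is excluded since it lies on the boundary). Substituting $r_{i1}=d(\mathbf{z}_i,\mathbf{z})$ recovers the defining inequalities $d(\mathbf{z}_i,\mathbf{z})\leq d(\mathbf{z}_k,\mathbf{z})$ of Definition \ref{DefV1}, so $\mathbf{z}\in\mathcal{V}_1(\mathbf{z}_i)$.

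Since the argument is a direct unfolding of definitions, there is no genuine analytic obstacle; the only point requiring care is the open/closed distinction. I would make explicit that ``empty'' refers to the open ball (no atom strictly inside) and that $\mathbf{z}_i$ lying on the boundary is consistent with, and does not contradict, emptiness---otherwise the boundary atom would be miscounted as interior and the equivalence could appear to fail. One might additionally remark that in the Poisson case ties occur with probability zero, so $\mathbf{z}_i$ is almost surely the unique atom realizing the minimal distance and $r_{i1}$ is well defined, though this probabilistic refinement is not needed for the present deterministic statement.
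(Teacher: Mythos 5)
Your proof is correct and takes essentially the same route as the paper: both arguments are a direct unfolding of Definition \ref{DefV1}, translating ``no atom of $\phi\setminus\{\mathbf{z}_i\}$ inside $\mathcal{B}(\mathbf{z},r_{i1})$'' into the inequalities $d(\mathbf{z}_i,\mathbf{z})\leq d(\mathbf{z}_k,\mathbf{z})$. The only cosmetic difference is that the paper establishes the ``only if'' direction via its contrapositive (an atom interior to the ball forces $\mathbf{z}\notin\mathcal{V}_1(\mathbf{z}_i)$), whereas you argue it directly; your explicit care with the open/closed distinction matches the paper's intent.
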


\begin{proof}
If $r_{i1}$ is such that $\mathcal{B}\left(\mathbf{z},r_{i1}\right)$ is empty, then $r_{i1} \leq d\left(\mathbf{z},\mathbf{z}_k\right)$, $\forall \mathbf{z}_k\in\phi\setminus\left\{\mathbf{z}_i\right\}$ and by (\ref{Voronoi1Set}) in Def. \ref{DefV1} we conclude $\mathbf{z}\in\mathcal{V}_1\left(\mathbf{z}_i\right)$. If $r_{i1}$ is such that $\mathcal{B}\left(\mathbf{z},r_{i1}\right)$ contains in its interior at least one atom $\mathbf{z}_k\neq \mathbf{z}_i$, while $\mathbf{z}_i$ lies on 
its boundary, then  $r_{i1} > d\left(\mathbf{z},\mathbf{z}_k\right)$ and $\mathbf{z}\notin\mathcal{V}_1\left(\mathbf{z}_i\right)$.
\end{proof}

\begin{Pro}
\label{CircleV2}
A point $\mathbf{z}\in\mathbb{R}^2$ lies within the 2-Voronoi cell $\mathbf{z}\in\mathcal{V}_2\left(\mathbf{z}_i,\mathbf{z}_n\right)$ if and only if 
it is the center of a ball $\mathcal{B}\left(\mathbf{z},r_{i2}\right)$ of radius $r_{i2} = \max\left\{d\left(\mathbf{z}_i,\mathbf{z}\right), d\left(\mathbf{z}_n,\mathbf{z}\right)\right\}$, 
with the property that it contains one of the two atoms $\mathbf{z}_i,\mathbf{z}_n$ in the interior (say $\mathbf{z}_n$) and no other, while the second atom (say $\mathbf{z}_j$) lies on its boundary. Alternatively, iff both atoms lie on the boundary and the open ball is empty in the case $d\left(\mathbf{z}_i,\mathbf{z}\right)= d\left(\mathbf{z}_n,\mathbf{z}\right)$.
\end{Pro}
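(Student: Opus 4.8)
The plan is to mirror the argument already used for Proposition \ref{CircleV1}, but now tracking two distances simultaneously and being careful about the interior-versus-boundary distinction. The starting point is the set description of the 2-Voronoi cell in (\ref{Voronoi2Set}) of Definition \ref{DefV2}: a point satisfies $\mathbf{z}\in\mathcal{V}_2(\mathbf{z}_i,\mathbf{z}_n)$ if and only if every atom $\mathbf{z}_k\in\phi\setminus\{\mathbf{z}_i,\mathbf{z}_n\}$ obeys both $d(\mathbf{z}_k,\mathbf{z})\geq d(\mathbf{z}_i,\mathbf{z})$ and $d(\mathbf{z}_k,\mathbf{z})\geq d(\mathbf{z}_n,\mathbf{z})$; equivalently $d(\mathbf{z}_k,\mathbf{z})\geq r_{i2}$, where $r_{i2}:=\max\{d(\mathbf{z}_i,\mathbf{z}),d(\mathbf{z}_n,\mathbf{z})\}$. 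The whole proof then reduces to translating this single combined inequality into a statement about the emptiness of the open interior of the ball $\mathcal{B}(\mathbf{z},r_{i2})$.

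For the forward implication I would assume $\mathbf{z}\in\mathcal{V}_2(\mathbf{z}_i,\mathbf{z}_n)$ and set $r_{i2}:=\max\{d(\mathbf{z}_i,\mathbf{z}),d(\mathbf{z}_n,\mathbf{z})\}$. Relabeling the pair if necessary, suppose $d(\mathbf{z}_n,\mathbf{z})=r_{i2}\geq d(\mathbf{z}_i,\mathbf{z})$, so that $\mathbf{z}_n$ lies on the boundary of $\mathcal{B}(\mathbf{z},r_{i2})$ and $\mathbf{z}_i$ lies in its closure. The membership inequalities force $d(\mathbf{z}_k,\mathbf{z})\geq r_{i2}$ for every other atom, hence no atom besides $\mathbf{z}_i$ occupies the open interior. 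If $d(\mathbf{z}_i,\mathbf{z})<r_{i2}$ strictly, then $\mathbf{z}_i$ is the unique interior atom and $\mathbf{z}_n$ sits on the boundary, which is the first alternative; if instead $d(\mathbf{z}_i,\mathbf{z})=d(\mathbf{z}_n,\mathbf{z})=r_{i2}$, both atoms lie on the boundary and the open ball is empty, which is the second alternative.

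For the converse I would start from a ball $\mathcal{B}(\mathbf{z},r_{i2})$ of the stated type. In either alternative one has $d(\mathbf{z}_i,\mathbf{z})\leq r_{i2}$ and $d(\mathbf{z}_n,\mathbf{z})\leq r_{i2}$ with $r_{i2}$ equal to the larger of the two, while the hypothesis that no atom other than $\mathbf{z}_i,\mathbf{z}_n$ lies in the interior yields $d(\mathbf{z}_k,\mathbf{z})\geq r_{i2}$ for all $\mathbf{z}_k\in\phi\setminus\{\mathbf{z}_i,\mathbf{z}_n\}$. Chaining these gives $d(\mathbf{z}_k,\mathbf{z})\geq r_{i2}\geq d(\mathbf{z}_i,\mathbf{z})$ and $d(\mathbf{z}_k,\mathbf{z})\geq r_{i2}\geq d(\mathbf{z}_n,\mathbf{z})$, which are precisely the two defining conditions in (\ref{Voronoi2Set}); therefore $\mathbf{z}\in\mathcal{V}_2(\mathbf{z}_i,\mathbf{z}_n)$.

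The one point demanding care — and the main obstacle — is the consistent handling of the open-versus-closed ball together with the degenerate equidistant case $d(\mathbf{z}_i,\mathbf{z})=d(\mathbf{z}_n,\mathbf{z})$. Because the Voronoi inequalities in (\ref{Voronoi2Set}) are non-strict, I must be explicit about whether ``interior'' excludes the boundary and about which atom is declared to lie on the boundary; the two alternatives in the statement are exactly what is needed to cover the generic strict case and the boundary-tie case without gap or overlap. In the Poisson setting this tie event has probability zero and so does not affect the subsequent analysis, but it still must be accounted for in the purely geometric characterization.
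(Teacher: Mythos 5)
Your proof is correct and follows essentially the same route as the paper's: both directions are obtained by unwinding the set definition (\ref{Voronoi2Set}) into the single inequality $d\left(\mathbf{z}_k,\mathbf{z}\right)\geq r_{i2}=\max\left\{d\left(\mathbf{z}_i,\mathbf{z}\right),d\left(\mathbf{z}_n,\mathbf{z}\right)\right\}$ for all other atoms, i.e.\ emptiness of the open ball, with a WLOG labeling of which atom realizes the maximum and a case split on strict inequality versus the equidistant tie. The only element of the paper's proof you omit is an inessential parenthetical remark that, by the general quadratic position of Poisson atoms, at most a bounded number of additional atoms can lie on the boundary circle almost surely.
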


\begin{proof}
If the ball $\mathcal{B}\left(\mathbf{z},r_{i2}\right)$ has the properties of the proposition, then obviously $d\left(\mathbf{z}_i,\mathbf{z}\right)\leq d\left(\mathbf{z}_n,\mathbf{z}\right)\leq d\left(\mathbf{z}_k,\mathbf{z}\right)$, 
$\forall \mathbf{z}_k\in\phi\setminus\left\{\mathbf{z}_i,\mathbf{z}_n\right\}$ and by (\ref{Voronoi2Set}) in Def. \ref{DefV2} we conclude that $\mathbf{z}\in\mathcal{V}_2\left(\mathbf{z}_i,\mathbf{z}_n\right)$. 

Conversely, if $\mathbf{z}\in\mathcal{V}_2\left(\mathbf{z}_i,\mathbf{z}_n\right)$, then $d\left(\mathbf{z}_i,\mathbf{z}\right)\leq r_{i3}$ and $d\left(\mathbf{z}_n,\mathbf{z}\right)\leq r_{i3}$, where $r_{i3} = \min_{\mathbf{z}_k\in\phi\setminus\left\{\mathbf{z}_i,\mathbf{z}_n\right\}}d\left(\mathbf{z}_k,\mathbf{z}\right)$. Without loss of generality, let $r_{i2} = d\left(\mathbf{z}_n,\mathbf{z}\right)$ and $r_{i2}\leq r_{i3}$, 
hence either $\mathbf{z}_i$ - and no other atom - lies within the ball $\mathcal{B}\left(\mathbf{z},r_{i2}\right)$ and the $\mathbf{z}_n$ lies on the boundary, or both $\mathbf{z}_i,\mathbf{z}_n$ 
lie on the boundary and the ball is empty. (Note: In the first case, at most two $\mathbf{z}_k$, $k\neq i,n$ can also lie on the boundary and in the second, at most one $\mathbf{z}_k$, $k\neq i,n$ can also be on the boundary. The reason is that the atoms of a realization of a p.p.p. are in the so called "general quadratic position" \cite{LeekVoronoi}, which allows at most three atoms to be co-circular almost surely.)
\end{proof}

The second closest neighbouring atom to any planar point $\mathbf{z}\in\mathbb{R}^2$ 
can also be found by artificially removing its first closest neighbour and recalculating the Voronoi diagram. Based on this idea, a 2-Voronoi cell can be constructed using set operations on 1-Voronoi cells.

\begin{Pro} (\cite[p.481, Lemma 1]{LeekVoronoi})
Let $\mathcal{V}_1\left(\mathbf{z}_i;\tilde{\phi}\right)$ be the 1-Voronoi cell of 
atom $\mathbf{z}_i$, when the Voronoi diagram is constructed by the subset $\tilde{\phi}\subseteq\phi$. Given two atoms  
$\mathbf{z}_i,\mathbf{z}_n\in \phi$, their 2-Voronoi cell can be expressed as 
\begin{eqnarray}
\label{ReconV2fromV1}
\mathcal{V}_2\left(\mathbf{z}_i,\mathbf{z}_n\right) & = & \left(\mathcal{V}_1\left(\mathbf{z}_n;\phi\right)\cap\mathcal{V}_1\left(\mathbf{z}_i;\phi\setminus\left\{\mathbf{z}_n\right\}\right)\right)\nonumber\\
& & \bigcup\left(\mathcal{V}_1\left(\mathbf{z}_i;\phi\right)\cap\mathcal{V}_1\left(\mathbf{z}_n;\phi\setminus\left\{\mathbf{z}_i\right\}\right)\right).
\end{eqnarray}
\label{ProV2V1}
\end{Pro}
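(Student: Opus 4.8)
The plan is to prove the set identity by double inclusion, reading each of the two 1-Voronoi intersections on the right-hand side as the locus of points for which $\{\mathbf{z}_i,\mathbf{z}_n\}$ are the two nearest atoms, in a prescribed order. Abbreviate $A := \mathcal{V}_1(\mathbf{z}_n;\phi)\cap\mathcal{V}_1(\mathbf{z}_i;\phi\setminus\{\mathbf{z}_n\})$ and $B := \mathcal{V}_1(\mathbf{z}_i;\phi)\cap\mathcal{V}_1(\mathbf{z}_n;\phi\setminus\{\mathbf{z}_i\})$, so the claim is $\mathcal{V}_2(\mathbf{z}_i,\mathbf{z}_n)=A\cup B$. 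The key observation, which drives both inclusions, is that membership in $A$ forces $\mathbf{z}_n$ to be the closest atom and $\mathbf{z}_i$ the closest among the remaining ones (hence second-closest), and symmetrically for $B$; their union therefore ought to capture exactly the points whose two nearest atoms are $\mathbf{z}_i$ and $\mathbf{z}_n$ in either order, which is the defining property of $\mathcal{V}_2$ in (\ref{Voronoi2Set}).

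For the inclusion $A\cup B\subseteq\mathcal{V}_2(\mathbf{z}_i,\mathbf{z}_n)$, I would take $\mathbf{z}\in A$ and unwind the two defining conditions via (\ref{Voronoi1Set}): $\mathbf{z}\in\mathcal{V}_1(\mathbf{z}_n;\phi)$ gives $d(\mathbf{z}_n,\mathbf{z})\leq d(\mathbf{z}_k,\mathbf{z})$ for all $\mathbf{z}_k\in\phi\setminus\{\mathbf{z}_n\}$, while $\mathbf{z}\in\mathcal{V}_1(\mathbf{z}_i;\phi\setminus\{\mathbf{z}_n\})$ gives $d(\mathbf{z}_i,\mathbf{z})\leq d(\mathbf{z}_k,\mathbf{z})$ for all $\mathbf{z}_k\in\phi\setminus\{\mathbf{z}_i,\mathbf{z}_n\}$. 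Restricting the first family of inequalities to indices in $\phi\setminus\{\mathbf{z}_i,\mathbf{z}_n\}$ and combining it with the second yields precisely the two conditions defining $\mathcal{V}_2(\mathbf{z}_i,\mathbf{z}_n)$, so $\mathbf{z}\in\mathcal{V}_2(\mathbf{z}_i,\mathbf{z}_n)$. The case $\mathbf{z}\in B$ follows verbatim after interchanging the roles of $\mathbf{z}_i$ and $\mathbf{z}_n$.

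For the reverse inclusion, I would begin with $\mathbf{z}\in\mathcal{V}_2(\mathbf{z}_i,\mathbf{z}_n)$, so that both $d(\mathbf{z}_i,\mathbf{z})$ and $d(\mathbf{z}_n,\mathbf{z})$ are bounded by $d(\mathbf{z}_k,\mathbf{z})$ for every $\mathbf{z}_k\in\phi\setminus\{\mathbf{z}_i,\mathbf{z}_n\}$, and then split on the order of these two distances. If $d(\mathbf{z}_i,\mathbf{z})\leq d(\mathbf{z}_n,\mathbf{z})$, then the bound $d(\mathbf{z}_i,\mathbf{z})\leq d(\mathbf{z}_k,\mathbf{z})$ now extends to all $\mathbf{z}_k\in\phi\setminus\{\mathbf{z}_i\}$ (the index $k=n$ being covered by the ordering), giving $\mathbf{z}\in\mathcal{V}_1(\mathbf{z}_i;\phi)$; together with $d(\mathbf{z}_n,\mathbf{z})\leq d(\mathbf{z}_k,\mathbf{z})$ for $\mathbf{z}_k\in\phi\setminus\{\mathbf{z}_i,\mathbf{z}_n\}$, which is exactly $\mathbf{z}\in\mathcal{V}_1(\mathbf{z}_n;\phi\setminus\{\mathbf{z}_i\})$, this places $\mathbf{z}\in B$. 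The opposite ordering symmetrically gives $\mathbf{z}\in A$, whence $\mathcal{V}_2(\mathbf{z}_i,\mathbf{z}_n)\subseteq A\cup B$ and the identity follows.

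The argument is essentially bookkeeping on index sets, so I do not anticipate a genuine obstacle; the only point demanding care is the tie case $d(\mathbf{z}_i,\mathbf{z})=d(\mathbf{z}_n,\mathbf{z})$, where the ordering split is not unique. Because all the defining inequalities are non-strict and the right-hand side is a union, such a point simply lands in both $A$ and $B$, which is harmless. I would also note in passing that one must use the closed-cell convention of (\ref{Voronoi1Set}) consistently, so that boundary points—centers of empty balls carrying two or three co-circular atoms, as described in Prop. \ref{CircleV2}—are assigned correctly; this non-strictness is exactly what keeps both inclusions valid at equality of distances.
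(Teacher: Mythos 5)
Your proof is correct. Note, however, that the paper does not actually prove Proposition \ref{ProV2V1}: it imports the identity verbatim from Lee (\cite[p.481, Lemma 1]{LeekVoronoi}) and only follows it with an informal gloss, so there is no in-paper argument to compare against step by step. Your double-inclusion argument is a valid, self-contained substitute: the forward inclusion works because membership in $\mathcal{V}_1\left(\mathbf{z}_n;\phi\right)\cap\mathcal{V}_1\left(\mathbf{z}_i;\phi\setminus\left\{\mathbf{z}_n\right\}\right)$ yields exactly the two families of inequalities over $\mathbf{z}_k\in\phi\setminus\left\{\mathbf{z}_i,\mathbf{z}_n\right\}$ that define $\mathcal{V}_2\left(\mathbf{z}_i,\mathbf{z}_n\right)$ in (\ref{Voronoi2Set}), and the reverse inclusion follows from the split on which of $d\left(\mathbf{z}_i,\mathbf{z}\right)$, $d\left(\mathbf{z}_n,\mathbf{z}\right)$ is smaller, the case inequality being precisely what upgrades the restricted family of inequalities to the full index set $\phi\setminus\left\{\mathbf{z}_i\right\}$ (or $\phi\setminus\left\{\mathbf{z}_n\right\}$). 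Your treatment of the tie $d\left(\mathbf{z}_i,\mathbf{z}\right)=d\left(\mathbf{z}_n,\mathbf{z}\right)$ is also right: with the paper's closed-cell convention all inequalities are non-strict, so such a point simply lies in both intersections, which a union tolerates. What the citation buys the authors is brevity and deference to a standard reference; what your proof buys is self-containedness at essentially zero cost, since the argument is, as you say, bookkeeping on index sets, and it is consistent with the conventions the paper uses elsewhere (e.g. Prop. \ref{CircleV1} and Prop. \ref{CircleV2}).
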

Thus, the 2-Voronoi cell of $\mathbf{z}_i,\mathbf{z}_n$ can be expressed as the union of two subregions. 
The first subregion can be found by taking the 1-Voronoi cell of $\mathbf{z}_i$ 
and by further considering its intersection with the 1-Voronoi cell of the other atom $\mathbf{z}_n$, when the diagram is 
drawn after removal of $\mathbf{z}_i$. The second subregion is similar by interchanging the roles of $\mathbf{z}_i$ and $\mathbf{z}_n$.

%
%
%
%

We will close this section by proving an important theorem which relates the 1-Voronoi diagram (and its dual) with the 2-Voronoi diagram.

\begin{Pro}
The 2-Voronoi cell $\mathcal{V}_2\left(\mathbf{z}_i,\mathbf{z}_n\right)$ is non-empty if and only if the atoms $\mathbf{z}_i$ and $\mathbf{z}_n$ are Delaunay neighbours. Alternatively iff the 1-Voronoi cells of $\mathbf{z}_i$ and $\mathbf{z}_n$ share a 1-Voronoi edge.
\label{ProV12}
\end{Pro}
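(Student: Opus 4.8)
The plan is to prove both implications by reducing each to the empty-ball characterizations already established. For the easy direction, suppose $\mathbf{z}_i$ and $\mathbf{z}_n$ are Delaunay neighbours. By Definition \ref{Delaunay} this means exactly that the edge set $e(\mathbf{z}_i,\mathbf{z}_n)$ in (\ref{edgeV1}) is non-empty. Picking any $\mathbf{w}\in e(\mathbf{z}_i,\mathbf{z}_n)$ we have $d(\mathbf{z}_i,\mathbf{w})=d(\mathbf{z}_n,\mathbf{w})<d(\mathbf{z}_k,\mathbf{w})$ for every $\mathbf{z}_k\in\phi\setminus\{\mathbf{z}_i,\mathbf{z}_n\}$, so $\mathbf{w}$ trivially satisfies the two defining inequalities of (\ref{Voronoi2Set}). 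Hence $\mathbf{w}\in\mathcal{V}_2(\mathbf{z}_i,\mathbf{z}_n)$ and the cell is non-empty.

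For the converse, I would start from a point $\mathbf{z}\in\mathcal{V}_2(\mathbf{z}_i,\mathbf{z}_n)$ and exhibit an empty ball carrying only $\mathbf{z}_i$ and $\mathbf{z}_n$ on its boundary; Proposition \ref{DelaunayCond} then immediately yields that the two atoms are Delaunay neighbours. By Proposition \ref{CircleV2}, after relabelling so that $d(\mathbf{z}_i,\mathbf{z})\le d(\mathbf{z}_n,\mathbf{z})=:r_2$, the point $\mathbf{z}$ is the centre of the empty ball $\mathcal{B}(\mathbf{z},r_2)$ that has $\mathbf{z}_n$ on its boundary and $\mathbf{z}_i$ in its interior. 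The key construction is then to deform this ball by sliding its centre toward $\mathbf{z}_n$, setting $\mathbf{c}(t)=(1-t)\mathbf{z}+t\mathbf{z}_n$ with radius $\rho(t)=|\mathbf{c}(t)-\mathbf{z}_n|=(1-t)r_2$ for $t\in[0,1]$, so that $\mathbf{z}_n$ stays on the boundary throughout while the ball shrinks to the point $\mathbf{z}_n$.

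The heart of the argument is to verify two things about this family. First, a triangle-inequality estimate shows $\overline{\mathcal{B}(\mathbf{c}(t),\rho(t))}\subseteq\overline{\mathcal{B}(\mathbf{z},r_2)}$ for all $t$, since for any $\mathbf{p}$ in the shrinking ball $|\mathbf{p}-\mathbf{z}|\le|\mathbf{p}-\mathbf{c}(t)|+|\mathbf{c}(t)-\mathbf{z}|\le(1-t)r_2+tr_2=r_2$; because the original ball contains no atom other than $\mathbf{z}_i$ and $\mathbf{z}_n$, neither does any member of the family. Second, tracking the signed quantity $g(t):=|\mathbf{z}_i-\mathbf{c}(t)|-\rho(t)$, one has $g(0)=d(\mathbf{z}_i,\mathbf{z})-r_2\le0$ while $g(1)=d(\mathbf{z}_i,\mathbf{z}_n)>0$, so by continuity there is $t^\ast\in[0,1]$ with $g(t^\ast)=0$, i.e. $\mathbf{z}_i$ lands on the boundary. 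At this $t^\ast$ the ball is empty, has $\mathbf{z}_i$ and $\mathbf{z}_n$ on its boundary and, by the containment just noted, nothing else, which is exactly the hypothesis of Proposition \ref{DelaunayCond}.

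I expect the only genuine subtlety to be the preservation of emptiness under the deformation: one must guarantee that shrinking the ball neither draws a new atom into the interior nor leaves a third atom on the boundary. The containment $\overline{\mathcal{B}(\mathbf{c}(t),\rho(t))}\subseteq\overline{\mathcal{B}(\mathbf{z},r_2)}$ disposes of both concerns at once, so the general-position remark in Proposition \ref{CircleV2} (at most three co-circular atoms almost surely) is needed only to exclude degenerate coincidences. The boundary case $d(\mathbf{z}_i,\mathbf{z})=d(\mathbf{z}_n,\mathbf{z})$ is handled by taking $t^\ast=0$ directly, and it occurs with probability zero for the Poisson process in any event.
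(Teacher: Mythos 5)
Your proof is correct, but your converse direction takes a genuinely different route from the paper's. The forward implication is the same in both: a point of the shared 1-Voronoi edge satisfies the defining inequalities of the 2-Voronoi cell. For the converse, the paper argues by contraposition: assuming $\mathbf{z}_i$ and $\mathbf{z}_n$ are not Delaunay neighbours, it first reduces $\mathcal{V}_2\left(\mathbf{z}_i,\mathbf{z}_n\right)$ to the intersection of the interior of $\mathcal{V}_1\left(\mathbf{z}_i\right)$ with the 1-Voronoi cell of $\mathbf{z}_n$ relative to $\phi\setminus\left\{\mathbf{z}_i\right\}$, and then eliminates every point of $\partial\mathcal{V}_2\left(\mathbf{z}_i,\mathbf{z}_n\right)$ through a four-case contradiction analysis, concluding that the boundary, and hence the closed convex cell, must be empty. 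You instead argue directly and constructively: the witness ball of Prop.~\ref{CircleV2} is slid toward $\mathbf{z}_n$ and shrunk so that $\mathbf{z}_n$ stays on its boundary, and the intermediate value theorem applied to $g(t)=\left|\mathbf{z}_i-\mathbf{c}(t)\right|-\rho(t)$ produces a ball with both atoms on its boundary and no atom inside, so Prop.~\ref{DelaunayCond} applies. Your route is shorter, dispenses with the convexity and case analysis entirely, and exhibits explicitly the empty ball certifying the Delaunay edge; the paper's route yields as a by-product the structural fact that a 2-Voronoi cell of non-neighbours would have to sit strictly inside the interior of $\mathcal{V}_1\left(\mathbf{z}_i\right)$, which your deformation does not expose. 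One step in your write-up deserves a sharper justification: containment of the shrunk closed ball in the original closed ball does not by itself prevent a third atom from lying on both boundary circles; what prevents it, for $t>0$, is that the two circles are internally tangent at $\mathbf{z}_n$, i.e. the equality case of your own triangle inequality forces any atom common to the shrunk closed ball and the original circle to coincide with $\mathbf{z}_n$. With that remark, and your appeal to general position (or to the face statement of Prop.~\ref{DelaunayCond}) for the measure-zero case $t^{\ast}=0$ with a third co-circular atom, the argument is complete.
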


\begin{proof}
If $\mathbf{z}_i$ and $\mathbf{z}_n$ are Delaunay neighbours then $e\left(\mathbf{z}_i,\mathbf{z}_n\right)\neq \emptyset$ in (\ref{edgeV1}) and there exists at least one 
planar point $\mathbf{z}$ with the property $d\left(\mathbf{z}_i,\mathbf{z}\right)=d\left(\mathbf{z}_n,\mathbf{z}\right)<d\left(\mathbf{z}_k,\mathbf{z}\right)$, $\forall k\neq i,n$. By (\ref{Voronoi2Set}) in Def. \ref{DefV2} this 
implies that $\mathbf{z}\in\mathcal{V}_2\left(\mathbf{z}_i,\mathbf{z}_n\right)\neq \emptyset$.

For the converse we will show that if $\mathbf{z}_i$ and $\mathbf{z}_n$ are not Delaunay neighbours then $\mathcal{V}_2\left(\mathbf{z}_i,\mathbf{z}_n\right)= \emptyset$. If the assumption 
holds, then $e\left(\mathbf{z}_i,\mathbf{z}_n\right) = \emptyset$ and there exists no point $\mathbf{z}$ with the property $d\left(\mathbf{z}_i,\mathbf{z}\right)=d\left(\mathbf{z}_n,\mathbf{z}\right)<d\left(\mathbf{z}_k,\mathbf{z}\right)$, $\forall k\neq i,n$. Take the definition of the 2-Voronoi cell in (\ref{Voronoi2Set}), Def. \ref{DefV2}. With the above restriction, 

\begin{eqnarray}
\label{Voronoi2SetR1}
\mathcal{V}_2\left(\mathbf{z}_i,\mathbf{z}_n\right) & 	= & \left\{\mathbf{z}\in\mathbb{R}^2|\ d\left(\mathbf{z}_i,\mathbf{z}\right) > d\left(\mathbf{z}_n,\mathbf{z}\right)\ or\ d\left(\mathbf{z}_i,\mathbf{z}\right)< d\left(\mathbf{z}_n,\mathbf{z}\right) \right.\nonumber\\
											&  & \ \&\  d\left(\mathbf{z}_i,\mathbf{z}\right)\leq d\left(\mathbf{z}_k,\mathbf{z}\right) \nonumber\\
								 			&  & \ \&\ \left.d\left(\mathbf{z}_n,\mathbf{z}\right)\leq d\left(\mathbf{z}_k,\mathbf{z}\right), \forall \mathbf{z}_k\in\phi\setminus\left\{\mathbf{z}_i,\mathbf{z}_n\right\}\right\}.\nonumber
\end{eqnarray}
Since the 2-Voronoi cell is defined in (\ref{Voronoi2}) as intersection of half-planes, it is a convex set, hence exactly one of the two inequalities $d\left(\mathbf{z}_i,\mathbf{z}\right)< d\left(\mathbf{z}_n,\mathbf{z}\right) \ or\  d\left(\mathbf{z}_i,\mathbf{z}\right)> d\left(\mathbf{z}_n,\mathbf{z}\right)$ can hold, say $d\left(\mathbf{z}_i,\mathbf{z}\right)< d\left(\mathbf{z}_j,\mathbf{z}\right)$ without loss of generality. Then

\begin{eqnarray}
\label{Voronoi2SetR2}
\mathcal{V}_2\left(\mathbf{z}_i,\mathbf{z}_n\right) 	& = & \left\{\mathbf{z}\in\mathbb{R}^2|\ d\left(\mathbf{z}_i,\mathbf{z}\right)< d\left(\mathbf{z}_n,\mathbf{z}\right) \right.\nonumber\\
											&   & \ \&\  d\left(\mathbf{z}_i,\mathbf{z}\right)\leq d\left(\mathbf{z}_k,\mathbf{z}\right) \nonumber\\
								 			&   & \ \&\ \left.d\left(\mathbf{z}_n,\mathbf{z}\right)\leq d\left(\mathbf{z}_k,\mathbf{z}\right), \forall \mathbf{z}_k\in\phi\setminus\left\{\mathbf{z}_i,\mathbf{z}_n\right\}\right\}.\nonumber
\end{eqnarray}
However, the second inequality of the above set cannot hold with equality for some $k\neq i,n$ since in such case $d\left(\mathbf{z}_i,\mathbf{z}\right)= d\left(\mathbf{z}_k,\mathbf{z}\right)< d\left(\mathbf{z}_n,\mathbf{z}\right)$ and the $\mathbf{z}_i,\mathbf{z}_k$ are closer to $\mathbf{z}$ than $\mathbf{z}_n$. Hence the set of inequalities can 
be re-written more compactly as $d\left(\mathbf{z}_i,\mathbf{z}\right)< d\left(\mathbf{z}_k,\mathbf{z}\right)$, $\forall \mathbf{z}_k\in\phi\setminus\left\{\mathbf{z}_i\right\}$ and $d\left(\mathbf{z}_n,\mathbf{z}\right)\leq d\left(\mathbf{z}_k,\mathbf{z}\right)$, $\forall \mathbf{z}_k\in\phi\setminus\left\{\mathbf{z}_i,\mathbf{z}_n\right\}$.

Using the Definition of the 1-Voronoi cell (\ref{Voronoi1Set}) and the expression in (\ref{ReconV2fromV1})
\begin{eqnarray}
\label{Voronoi2SetR5}
\mathcal{V}_2\left(\mathbf{z}_i,\mathbf{z}_n\right) 	& = & \textbf{int}(\mathcal{V}_1\left(\mathbf{z}_i\right))\bigcap\mathcal{V}_1\left(\mathbf{z}_n;\phi\setminus\left\{\mathbf{z}_i\right\}\right).
\end{eqnarray}
Here $\textbf{int}(\mathcal{V}_1\left(\mathbf{z}_i\right))$ denotes the interior of the 1-Voronoi cell of $\mathbf{z}_i$ (i.e. without including any edges or vertices resulting from $d\left(\mathbf{z}_i,\mathbf{z}\right)= d\left(\mathbf{z}_k,\mathbf{z}\right)$ for some $\mathbf{z}_k$). 

The above (\ref{Voronoi2SetR5}) implies that the 2-Voronoi cell:
\begin{itemize}
\item either is strictly contained in the interior of $\mathcal{V}_1\left(\mathbf{z}_i\right)$
\item or the intersection of the two sets is empty.
\end{itemize}

Assume that the 2-Voronoi cell is not empty and consider a point on its \textit{boundary} $\mathbf{z}\in\partial{\mathcal{V}_2\left(\mathbf{z}_i,\mathbf{z}_j\right) }$. 
There are 4 cases:

\begin{enumerate}
\item There exists an atom $\mathbf{z}_m\in\phi$, $m\neq i,n$ such that
\begin{eqnarray}
\label{BCaseI}
d\left(\mathbf{z}_i,\mathbf{z}\right) = d\left(\mathbf{z}_m,\mathbf{z}\right) & \& & d\left(\mathbf{z}_n,\mathbf{z}\right) = d\left(\mathbf{z}_m,\mathbf{z}\right).\nonumber
\end{eqnarray}
Then the atoms $\mathbf{z}_i$, $\mathbf{z}_n$ and $\mathbf{z}_m$ are co-circular with center $\mathbf{z}$ and the circle is empty of other atoms in $\phi$, hence there exists by Prop. \ref{DelaunayCond} 
an edge connecting them in the Delaunay graph, which contradicts the assumption that $\mathbf{z}_i$ and $\mathbf{z}_n$ are not Delaunay neighbours.
\item There exist two atoms $\mathbf{z}_m,\mathbf{z}_l\in\phi$, $m,l\neq i,n$ such that
\begin{eqnarray}
\label{BCaseII}
d\left(\mathbf{z}_i,\mathbf{z}\right) = d\left(\mathbf{z}_m,\mathbf{z}\right) & \& & d\left(\mathbf{z}_n,\mathbf{z}\right) = d\left(\mathbf{z}_l,\mathbf{z}\right).\nonumber
\end{eqnarray}
Then the following inequality holds $d\left(\mathbf{z}_i,\mathbf{z}\right) = d\left(\mathbf{z}_m,\mathbf{z}\right) < d\left(\mathbf{z}_n,\mathbf{z}\right) = d\left(\mathbf{z}_l,\mathbf{z}\right)$, which contradicts the assumption $\mathbf{z}\in\partial{\mathcal{V}_2\left(\mathbf{z}_i,\mathbf{z}_n\right) }$. (We consider strict inequality, since at most 3 atoms can be co-circular - due to the "general quadratic position" of the atoms - and the case of equality is not allowed.) 
\item There exists one atom $\mathbf{z}_m\in\phi$, $m\neq i,n$ such that
\begin{eqnarray}
\label{BCaseIII}
d\left(\mathbf{z}_i,\mathbf{z}\right) = d\left(\mathbf{z}_m,\mathbf{z}\right) & \& & d\left(\mathbf{z}_n,\mathbf{z}\right) < d\left(\mathbf{z}_m,\mathbf{z}\right).\nonumber
\end{eqnarray}
which leads to $d\left(\mathbf{z}_i,\mathbf{z}\right) > d\left(\mathbf{z}_n,\mathbf{z}\right)$, that is $\mathbf{z}\in\mathcal{V}_1\left(\mathbf{z}_n\right)$, which contradicts (\ref{Voronoi2SetR5}).
\item There exists one atom $\mathbf{z}_m\in\phi$, $m\neq i,n$ such that
\begin{eqnarray}
\label{BCaseIV}
d\left(\mathbf{z}_i,\mathbf{z}\right) < d\left(\mathbf{z}_m,\mathbf{z}\right) & \& & d\left(\mathbf{z}_n,\mathbf{z}\right) = d\left(\mathbf{z}_m,\mathbf{z}\right).\nonumber
\end{eqnarray}
Then $\mathbf{z}\in e\left(\mathbf{z}_n,\mathbf{z}_m\right)\subset\mathcal{V}_1\left(\mathbf{z}_n\right)\ \& \ \mathbf{z}\in\mathbf{int}\left(\mathcal{V}_1(\mathbf{z}_i)\right)$, which is impossible.
\end{enumerate}
The above show that $\partial{\mathcal{V}_2\left(\mathbf{z}_i,\mathbf{z}_n\right)}=\emptyset\Rightarrow\mathcal{V}_2\left(\mathbf{z}_i,\mathbf{z}_n\right)=\emptyset$.
\end{proof}

\newpage
\subsection{Special Case for User Position}
\label{AppB}

A special case of the general cooperation scenario described in Section \ref{SecIIA} is when the cardinality of the secondary user set 
per BS is exactly one for each $\mathbf{z}_i\in\phi$, that is 

\begin{eqnarray}
\left|\mathcal{N}^s\left(\mathbf{z}_i\right)\right| = 1, & & \forall \mathbf{z}_i\in\phi.
\label{CoopSpecial}
\end{eqnarray}
The above simply means that the following special scenario is considered

\begin{Def} (\textbf{Special Cooperation Scenario})\\
\begin{itemize}
\item The infrastructure for the BSs as well as the cooperation scenario follows Section \ref{SecIIA}.
\item Each BS $\mathbf{z}_i\in\phi$ has exactly one primary user who lies within its 1-Voronoi cell and exactly one secondary user 
who lies within the 1-Voronoi cell of one of its Delaunay neighbours. 
\end{itemize}
\label{CoopDefS}
\end{Def}
In other words, the positions of users are such that only one out of all possible 2-Voronoi cells related to $\mathbf{z}_i$ is 
nonempty. In what follows we will discuss questions regarding the 
feasibility of such a Special Cooperation Scenario. To better pose this question mathematically we will introduce the notion of direction on the Delaunay graph.

\begin{Def}
We say than an edge $\mathbf{z}_i\mathbf{z}_n$ of the Delaunay diagram has \textbf{direction} from vertex $\mathbf{z}_i$ to vertex $\mathbf{z}_n$
 if the primary user of $\mathbf{z}_i$ is the secondary user for $\mathbf{z}_n$. We denote the ordered pair by  
$\left(\mathbf{z}_i,\mathbf{z}_n\right)$ and it is called an arc on the directed Delaunay graph.
\label{DirectDel}
\end{Def}


\begin{Pro}
The Special Cooperation Scenario in Def. \ref{CoopDefS} is feasible if each vertex of the Delaunay graph 
has exactly one incoming (secondary user) and one outgoing (primary user) arc. The problem coincides 
with the \textbf{Travelling Salesman Problem}, where a node visited cannot be revisited and the salesman 
should proceed necessarily to one of its Delaunay neighbours.
\label{FeasibilityQ}
\end{Pro}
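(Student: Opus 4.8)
The plan is to recast the feasibility question as a statement about orientations of the Delaunay graph and then identify it with a Hamiltonicity requirement. First I would record the two degree constraints carried by the directed Delaunay graph of Definition~\ref{DirectDel}. By the standing model assumption each atom $\mathbf{z}_i$ hosts exactly one primary user $\mathbf{u}_i$, and, almost surely (the atoms are in general position), this user has a unique second-closest atom $\mathbf{b}_{i2}$, which by the non-emptiness criterion of Proposition~\ref{ProV12} is necessarily a Delaunay neighbour of $\mathbf{z}_i$. Hence the single arc $(\mathbf{z}_i,\mathbf{b}_{i2})$ is forced, so every vertex has out-degree exactly $1$ unconditionally. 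The in-degree of a vertex $\mathbf{z}_n$, on the other hand, counts exactly the users for which $\mathbf{z}_n$ is the second-closest atom, i.e.\ the secondary users $|\mathcal{N}^s(\mathbf{z}_n)|$. The Special Cooperation Scenario of Definition~\ref{CoopDefS} is by definition the demand $|\mathcal{N}^s(\mathbf{z}_n)|=1$ for all $n$, which is precisely the requirement that every vertex have in-degree $1$.

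Next I would show that the out-arc of each vertex may be steered to any prescribed Delaunay neighbour by placing its primary user appropriately, so that the orientation is a genuinely free choice. Fix $\mathbf{z}_i$ and a Delaunay neighbour $\mathbf{z}_n$. By Definition~\ref{Delaunay} their $1$-Voronoi cells share a common edge $e(\mathbf{z}_i,\mathbf{z}_n)$, whose interior points are, by Proposition~\ref{VertexEdgeV1}, equidistant from $\mathbf{z}_i$ and $\mathbf{z}_n$ and strictly closer to them than to any other atom. Displacing such a point slightly into $\mathbf{int}(\mathcal{V}_1(\mathbf{z}_i))$ produces a set of positive area in which $\mathbf{z}_i$ is closest and $\mathbf{z}_n$ is second-closest; this is exactly the non-empty piece $\mathcal{V}_1(\mathbf{z}_i;\phi)\cap\mathcal{V}_1(\mathbf{z}_n;\phi\setminus\{\mathbf{z}_i\})$ entering the decomposition \eqref{ReconV2fromV1}. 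Placing $\mathbf{u}_i$ there realises the arc $(\mathbf{z}_i,\mathbf{z}_n)$, so a feasible user configuration amounts to selecting, for each $i$, a map $\sigma$ with $\sigma(i)$ the second-closest atom of $\mathbf{u}_i$ and $\sigma(i)$ adjacent to $\mathbf{z}_i$ in the Delaunay graph.

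With this reduction, the proposition follows. The out-degree constraint makes $\sigma$ a well-defined self-map of the atom set whose displacements run along Delaunay edges, and the Special-Scenario in-degree constraint states that every atom is the image $\sigma(i)$ of exactly one $i$, i.e.\ that $\sigma$ is both injective and surjective, hence a permutation of the atoms. Restricting to a finite collection of atoms, as in a bounded observation window, such a permutation decomposes into vertex-disjoint directed cycles that cover all atoms and use only Delaunay edges; this is exactly a route in which no atom is revisited and each step passes to a Delaunay neighbour. When the permutation consists of a single cycle it is a Hamiltonian cycle of the Delaunay graph, that is, a Travelling-Salesman tour, which establishes the claimed coincidence. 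The main obstacle is not the reduction above, which is elementary, but the existence problem it exposes: a permutation respecting Delaunay adjacency need not exist, and even when one does its cycles need not fuse into a single tour. Indeed, deciding whether a Delaunay triangulation admits a Hamiltonian cycle is NP-complete \cite{Dill94}, so the statement is best read as pinning down the combinatorial nature of feasibility rather than asserting that a feasible placement always exists.
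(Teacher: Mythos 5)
Your proof is correct and in fact supplies an argument where the paper gives none: Proposition \ref{FeasibilityQ} is stated in Appendix B without any proof, followed only by the remark that finding Hamiltonian cycles in Delaunay triangulations is NP-complete \cite{Dill94}. Your three steps are precisely the formalization the paper leaves implicit. The degree bookkeeping (out-degree $1$ forced via Proposition \ref{ProV12}, in-degree $1$ being exactly the content of Definition \ref{CoopDefS}) matches the intent of Definition \ref{DirectDel}. The genuinely substantive addition is your steering lemma: by Proposition \ref{VertexEdgeV1} and the decomposition (\ref{ReconV2fromV1}), every Delaunay-adjacent choice of second neighbour is realizable by a positive-area placement of the primary user, so the combinatorial object (an adjacency-respecting permutation of the atoms) is not merely necessary for feasibility but equivalent to it; without this step the proposition would reduce to a one-way implication. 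Your closing observation --- that a feasible configuration corresponds to a vertex-disjoint directed cycle cover of the Delaunay graph, and coincides with a Travelling-Salesman tour exactly when that cover is a single cycle --- is a fair and useful sharpening of the paper's loose phrase that the problem ``coincides with'' the TSP: feasibility per se needs only the cycle cover, while the NP-completeness cited in \cite{Dill94} pertains to the single-tour (Hamiltonian) reading. Two minor caveats are worth keeping in mind. First, the cycle decomposition of a permutation requires restricting to finitely many atoms (which you do note); on the infinite Poisson configuration a permutation respecting Delaunay adjacency can have bi-infinite orbits, so the cycle picture is only valid in a bounded window. Second, it may be worth stating explicitly that fixed points are impossible (a user's second-closest atom cannot be its closest one), so all cycles have length at least two, which is what permits the reciprocal pairing of two neighbouring cells as a legitimate feasible component.
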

A relevant work in the field from Dillencourt \cite{Dill94} shows that, \textbf{finding Hamiltonian Cycles in Delaunay triangulations is NP-Complete}. A Hamiltonian path 
is a path on an undirected graph, which visits each node exactly once and is also a cycle.


\newpage

\subsection{The Multiple Access Channel with Conferencing Encoders}
\label{AppC}

Let us first study a model in isolation, where two transmitters $\mathbf{z}_1$ and $\mathbf{z}_2$ of a discrete memoryless (d.m.) Multiple Access Channel (MAC)
encode messages to send to a single receiver $\mathbf{u}$. The two transmitters are connected via two separate finite 
capacity directed connections, which can transport at most $C_{12}$ [bits/sec] of information from $\mathbf{z}_1$ to $\mathbf{z}_2$ and 
$C_{21}$ [bits/sec] from $\mathbf{z}_2$ to $\mathbf{z}_1$. \textit{Conferencing} denotes the 
exchange of information between the two encoders. The procedure is summarized as follows:\\
As a first step, each source $i\in\left\{1,2\right\}$ generates a message $M_i$ 
independently and with uniform distribution from a finite set $\mathcal{M}_i \in \left\{1,\ldots,e^{N\cdot R_i}\right\}$. The 
rate of transmission per BS equals $R_i$ nats per channel use.
The encoder considers the information from the other transmitter to produce 
the codeword $\mathbf{x}_i\in\mathcal{X}_i$ from a finite set. The codelength is $N\geq 1$ and thus the 
codewords for channel input are $N$-length vectors $\mathbf{x}_i = \left(x_{i,1},\ldots,x_{i,N}\right)$. The messages are transmitted over the wireless medium and the 
probability of output $\mathbf{y}\in\mathcal{Y}$, $\mathbf{y} = \left(y_1,\ldots,y_N\right)$ at the receiver $\mathbf{u}$ is given by $\mathbb{P}\left(\mathbf{y}|\mathbf{x}_1,\mathbf{x}_2\right)$. The decoding 
function at the receiver uses the information over the output vector $\mathbf{y}$ to produce the message estimates $\hat{M}_i$, $i\in\left\{1,2\right\}$. The capacity 
region of the described channel has been derived by Willems in \cite{WillemsConf83}. Random variables (r.v.'s) are denoted 
by capital letters: $X_1,X_2,Y$ and $Q$.

\begin{Theorem}
(\textbf{Capacity Region of d.m. MAC with Conferencing} \cite{WillemsConf83})\\
For the d.m. MAC denoted by $\left(\mathcal{X}_1\times \mathcal{X}_2,\mathbb{P}\left(\mathbf{y}|\mathbf{x}_1,\mathbf{x}_2\right),\mathcal{Y}\right)$ 
with encoders connected by communication links with capacities $C_{12}$ and $C_{21}$, let us consider a distribution
$\mathbb{P}\left(\mathbf{q},\mathbf{x}_1,\mathbf{x}_2,\mathbf{y}\right)=\mathbb{P}\left(\mathbf{q}\right)\mathbb{P}\left(\mathbf{x}_1|\mathbf{q}\right)\mathbb{P}\left(\mathbf{x}_2|\mathbf{q}\right)\mathbb{P}\left(\mathbf{y}|\mathbf{x}_1,\mathbf{x}_2\right)$, with 
$|\mathcal{Q}| \leq \min\left\{|\mathcal{X}_1|\cdot |\mathcal{X}_2|+2,|\mathcal{Y}|+3\right\}$. 
The capacity region equals the union over all rate regions 
\begin{eqnarray}
\mathcal{C}_{conf}\left(C_{12},C_{21}\right) = \bigcup_{\mathbb{P}\left(\mathbf{q}\right),\mathbb{P}\left(\mathbf{x}_1|\mathbf{q}\right),\mathbb{P}\left(\mathbf{x}_2|\mathbf{q}\right)}\mathcal{R}_{conf}\left(C_{12},C_{21}\right)
\end{eqnarray}
where, 
\begin{eqnarray}
\mathcal{R}_{conf}\left(C_{12},C_{21}\right) & := & \left\{\left(R_1,R_2\right): 0\leq R_1\leq I\left(X_1;Y|X_2,Q\right)+ C_{12},\right.\nonumber\\
& & 0\leq R_2 \leq I\left(X_2;Y|X_1,Q\right)+ C_{21},\nonumber\\
& & 0\leq R_1+R_2\leq \min\left\{ I\left(X_1,X_2;Y|Q\right)+C_{12}+C_{21},\right.\nonumber\\
& & \left.\left.I\left(X_1,X_2;Y\right)\right\}\right\}.
\label{CapRegConf}
\end{eqnarray}
$I\left(X;Y|Z\right) = H\left(X|Z\right)-H\left(X|Y,Z\right)$ is the mutual information of r.v.'s $X$ and $Y$ conditioned on $Z$ and $H\left(X\right)$ is the 
entropy of $X$.
\label{WillemsConfR}
\end{Theorem}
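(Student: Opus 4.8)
The plan is to establish the region through its two halves. For \textbf{achievability} I would use the conference to synthesize common information and then superpose the private codewords on it. Split $M_1=\left(M_{10},M_{11}\right)$ and $M_2=\left(M_{20},M_{22}\right)$ so that $M_{10}$ carries $C_{12}$ and $M_{20}$ carries $C_{21}$ nats per channel use; in one conference round encoder~$1$ ships $M_{10}$ to encoder~$2$ over the $C_{12}$ link and encoder~$2$ ships $M_{20}$ to encoder~$1$ over the $C_{21}$ link, after which $\left(M_{10},M_{20}\right)$ is common knowledge of total rate $C_{12}+C_{21}$. Drawing one cloud center $\mathbf{q}$ per common pair from $\mathbb{P}\left(\mathbf{q}\right)$ and, conditionally, satellites $\mathbf{x}_1\sim\prod_n\mathbb{P}\left(x_1\mid q\right)$ indexed by $M_{11}$ and $\mathbf{x}_2\sim\prod_n\mathbb{P}\left(x_2\mid q\right)$ indexed by $M_{22}$, the receiver decodes $\left(M_{10},M_{20},M_{11},M_{22}\right)$ by joint typicality.

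The error analysis is the routine part: bounding the probability that a wrong codeword is jointly typical with $\mathbf{y}$ gives $R_1-C_{12}\le I\left(X_1;Y\mid X_2,Q\right)$ (only the private index of user~$1$ wrong), the symmetric $R_2-C_{21}\le I\left(X_2;Y\mid X_1,Q\right)$, the pair bound $\left(R_1-C_{12}\right)+\left(R_2-C_{21}\right)\le I\left(X_1,X_2;Y\mid Q\right)$ (both private indices wrong, $\mathbf{q}$ correct), and $R_1+R_2\le I\left(X_1,X_2;Y\right)$ (the cloud center also wrong). Rearranging reproduces exactly the four inequalities of (\ref{CapRegConf}); taking the union over all $\mathbb{P}\left(\mathbf{q}\right)\mathbb{P}\left(\mathbf{x}_1\mid\mathbf{q}\right)\mathbb{P}\left(\mathbf{x}_2\mid\mathbf{q}\right)$ completes this direction.

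For the \textbf{converse} I would begin from Fano, $H\left(M_i\mid\mathbf{y}\right)\le N\epsilon_N$, and denote by $V_1,V_2$ the full conference transcripts on the two links, so that $H\left(V_1\right)\le NC_{12}$, $H\left(V_2\right)\le NC_{21}$, while $\mathbf{x}_1=f_1\left(M_1,V_2\right)$ and $\mathbf{x}_2=f_2\left(M_2,V_1\right)$; observe that in any protocol, even a multi-round interactive one, $V_2$ is a function of $\left(M_2,V_1\right)$ and $V_1$ a function of $\left(M_1,V_2\right)$. Choosing the single, symmetric auxiliary $Q_i:=\left(V_1,V_2\right)$ together with a time-sharing index, the first bound runs
\begin{eqnarray}
NR_1 &\le& I\left(M_1;\mathbf{y}\mid M_2\right)+N\epsilon_N\nonumber\\
&\le& I\left(V_1;\mathbf{y}\mid M_2\right)+I\left(M_1;\mathbf{y}\mid M_2,V_1,V_2\right)+N\epsilon_N\nonumber\\
&\le& NC_{12}+\sum_i I\left(X_{1i};Y_i\mid X_{2i},Q_i\right)+N\epsilon_N,\nonumber
\end{eqnarray}
where the last step uses $H\left(V_1\right)\le NC_{12}$, the memoryless channel (so $Y_i$ depends on the past only through $\left(X_{1i},X_{2i}\right)$), and the fact that $X_{2i}$ is a function of $Q_i$. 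The bound on $R_2$ follows symmetrically, the sum bound $R_1+R_2\le I\left(X_1,X_2;Y\mid Q\right)+C_{12}+C_{21}$ by pushing $\left(M_1,M_2\right)$ through the same chain while introducing both transcript entropies, and the pure bound $R_1+R_2\le I\left(X_1,X_2;Y\right)$ separately from concavity of $I\left(X_1,X_2;Y\right)$ in the input law \cite{ThomasCoverIT}.

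The step I expect to be the real obstacle is justifying that the single-letter law induced by $Q=\left(V_1,V_2\right)$ genuinely has the required product form $\mathbb{P}\left(x_1\mid q\right)\mathbb{P}\left(x_2\mid q\right)$, i.e. $X_1\perp X_2\mid Q$, for otherwise the single-letter point need not lie in the stated region. This is precisely where interactivity bites, since $V_1$ and $V_2$ are entangled across rounds. The resolution I would invoke is the rectangle (Markov) property of a conference transcript---the set of message pairs consistent with a fixed $\left(V_1,V_2\right)$ is a product set---so that $M_1\perp M_2\mid\left(V_1,V_2\right)$ holds even for multi-round protocols; since $X_{1i}$ is then a function of $M_1$ alone given $Q_i$ and $X_{2i}$ a function of $M_2$ alone, conditional independence follows, and as a byproduct interaction yields no enlargement of the region. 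The remaining loose end is the cardinality bound $|\mathcal{Q}|\le\min\left\{|\mathcal{X}_1|\,|\mathcal{X}_2|+2,\,|\mathcal{Y}|+3\right\}$, which I would obtain from a Carath\'eodory-type support lemma \cite{ThomasCoverIT}, trimming the alphabet of $Q$ while preserving the handful of mutual-information functionals that define the facets.
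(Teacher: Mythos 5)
The first thing to note is that the paper contains no proof of this statement: Theorem \ref{WillemsConfR} is imported from Willems \cite{WillemsConf83}, and Appendix C merely restates it before specializing to the cases the coverage analysis actually uses (zero or infinite conferencing capacity, and the Gaussian channel of Theorem \ref{GaussConf}). So the only meaningful comparison is against Willems' original argument, and your proposal is, in substance, a correct reconstruction of it. On the achievability side, rate splitting plus one conferencing round to make $\left(M_{10},M_{20}\right)$ common knowledge, superposition coding with cloud centers $\mathbf{q}\left(M_{10},M_{20}\right)$ and private satellites, and joint typicality decoding indeed produce the four inequalities of (\ref{CapRegConf}) after adding $C_{12}$, $C_{21}$ back; the event where the cloud center is wrong gives $R_1+R_2\le I\left(Q,X_1,X_2;Y\right)=I\left(X_1,X_2;Y\right)$ by the Markov chain $Q-\left(X_1,X_2\right)-Y$. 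On the converse side you correctly isolate the crux: with $Q=\left(V_1,V_2\right)$ plus a time-sharing index, the single-letter distribution has the product form $\mathbb{P}\left(x_1|q\right)\mathbb{P}\left(x_2|q\right)$ only because of the rectangle property of deterministic interactive protocols ($M_1\perp M_2$ given the transcript), which is exactly Willems' key lemma and also the reason multi-round interaction does not enlarge the region. Two small imprecisions, neither fatal: the common sub-message rate should be $\min\left\{R_1,C_{12}\right\}$ rather than $C_{12}$ itself; and in bounding $I\left(M_1;\mathbf{y}\mid M_2,V_1,V_2\right)$ the relevant fact is that $X_{2i}$ is a function of $\left(M_2,V_1\right)$, both present in the conditioning, not that it is ``a function of $Q_i$'' alone. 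Your separate derivation of the unconditioned sum bound is also consistent with the rest: by the Markov chain $U-\left(X_1,X_2\right)-Y$ (equivalently, concavity of mutual information in the input law), that bound holds under the same mixture distribution used for the other three inequalities, so all four constraints are satisfied by a single admissible triple $\left(Q,X_1,X_2\right)$, as the definition of $\mathcal{R}_{conf}$ requires.
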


The following \textbf{three special cases} are of particular interest:

\begin{enumerate}
\item The case where $C_{12} = C_{21} = 0$, when there is no cooperation between encoders, results in the capacity region of the MAC \cite[pp.388-407]{ThomasCoverIT} and the 
rate region in (\ref{CapRegConf}) reduces to the region in \cite[p.397, Th.14.3.3]{ThomasCoverIT}.
\item The case where infinite capacity conferencing links are installed between the two transmitters $C_{12} = C_{21} = \infty$, leads to the simplified capacity region 
\begin{eqnarray}
\label{C12C21infCR}
0\leq & R_1+R_2 & \leq \max_{\mathbb{P}\left(\mathbf{q}\right),\mathbb{P}\left(\mathbf{x}_1|\mathbf{q}\right),\mathbb{P}\left(\mathbf{x}_2|\mathbf{q}\right)} I\left(X_1,X_2;Y\right).
\end{eqnarray}
Here, each transmitter can be fully informed over the message of the second source and transmit at a maximal rate equal to the mutual information of $\left(X_1,X_2\right)$ and $Y$ - when the other transmitter sends at zero rate.
\item The case of channels with additive mean zero Gaussian noise at the receiver $\mathbf{u}$.
\end{enumerate}
Case 3 is of particular interest for wireless telecommunication systems. Specifically, consider the channel fading power gains $h_{1}$ and $h_{2}$ from BS $\mathbf{z}_1$ and $\mathbf{z}_2$ respectively, to user $\mathbf{u}$. The received signal at $\mathbf{u}$ equals

\begin{eqnarray}
\label{TransmSign}
\mathbf{y} & = & \sqrt{h_{1}} \cdot \mathbf{x}_{1} + \sqrt{h_{2}} \cdot \mathbf{x}_{2} + \eta
\end{eqnarray}
where the noise $\eta$ is a realization of the r.v. $\eta \sim \mathcal{N}\left(0,\sigma^2\right)$ follows the normal distribution. 
We impose an average power constraint per BS on the transmitted codeword $\mathbf{x}_i\in\mathcal{X}_i$

\begin{eqnarray}
\frac{1}{N}\sum_{n=1}^N x_{i,n}^2 & \leq & P_i
\label{AvPconstr}
\end{eqnarray}
The capacity region of the channel has been derived in \cite{BrossLapWigg08} and is presented in the following:

\begin{Theorem}
(\textbf{Capacity Region of Gaussian MAC with Conferencing \cite{BrossLapWigg08}})\\
The capacity region of the Gaussian MAC with conferencing encoders and power constraints given in (\ref{AvPconstr})
is equal to 

\begin{eqnarray}
\mathcal{C}_{conf,G}\left(C_{12},C_{21}\right) & = & \bigcup_{0\leq \kappa_1,\kappa_2\leq 1} \Bigg\{\left(R_1,R_2\right):\nonumber\\
& & 0\leq R_1 \leq \frac{1}{2}\log\left(1+\frac{h_{1}\kappa_1P_1}{\sigma^2}\right)+ C_{12},\nonumber\\
& & 0\leq R_2 \leq \frac{1}{2}\log\left(1+\frac{h_{2}\kappa_2P_2}{\sigma^2}\right)+ C_{21},\nonumber\\
& & 0\leq R_1+R_2\leq \min\Big\{\frac{1}{2}\log\left(1+\frac{h_{1}\kappa_1P_1+h_{2}\kappa_2P_2}{\sigma^2}\right)+ C_{12}+C_{21},\nonumber\\
& & \frac{1}{2}\log\left(1+\frac{h_{1}P_1+h_{2}P_2+2\sqrt{h_{1}\left(1-\kappa_1\right)P_1\cdot h_{2}\left(1-\kappa_2\right)P_2}}{\sigma^2}\right) \Big\} \Bigg\}.
\label{CapRegConfG}
\end{eqnarray}
\label{GaussConf}
\end{Theorem}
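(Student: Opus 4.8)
The plan is to establish the region in (\ref{CapRegConfG}) by proving achievability together with a matching converse, using Willems' discrete-memoryless result (Theorem \ref{WillemsConfR}) as the template and specializing it to jointly Gaussian inputs with a coherently transmitted common part. For achievability I would first realize the conferencing by message splitting: write $M_i=(M_i^{(0)},M_i')$ and use the link from $\mathbf{z}_1$ to $\mathbf{z}_2$ (capacity $C_{12}$) to convey $M_1^{(0)}$ and the link from $\mathbf{z}_2$ to $\mathbf{z}_1$ to convey $M_2^{(0)}$, so that after the conference both encoders share a common message $M_0=(M_1^{(0)},M_2^{(0)})$ of total rate at most $C_{12}+C_{21}$. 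Then I would build a Gaussian superposition codebook: a unit-power common codeword $\mathbf{w}(M_0)$ is transmitted by encoder $i$ scaled by $\sqrt{(1-\kappa_i)P_i}$ and superimposed with an independent private codeword of power $\kappa_iP_i$ carrying $M_i'$. Because both encoders emit the same $\mathbf{w}(M_0)$, the common parts add coherently at the receiver and produce the cross term $2\sqrt{h_1(1-\kappa_1)P_1\,h_2(1-\kappa_2)P_2}$, while the private parts add in power. Decoding $(M_0,M_1',M_2')$ by joint typicality and imposing the MAC-with-common-message rate constraints yields exactly the three bounds of (\ref{CapRegConfG}); this amounts to the Gaussian evaluation of Willems' region with $Q\equiv W$ and $X_i=\sqrt{(1-\kappa_i)P_i}\,W+V_i$, where $V_i$ is the private signal.

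For the converse I would start from Fano's inequality, which under vanishing error probability gives $H(M_1,M_2\mid Y^N)\le N\epsilon_N$, and then derive genie-aided single-letter bounds. Revealing $M_2$ to the receiver bounds $R_1$ by a single-letter mutual-information term plus a conference contribution bounded by $NC_{12}$, and symmetrically $R_2$ gains $NC_{21}$. For the sum rate I would obtain two bounds: a cut at the receiver giving $R_1+R_2\le I(X_1,X_2;Y)$ with no conference term (the fully coherent bound), and a cooperation-limited bound of the form $I(X_1,X_2;Y\mid Q)+C_{12}+C_{21}$. Finally I would invoke the maximum-entropy property of the Gaussian distribution to show that, subject to the per-encoder power constraints (\ref{AvPconstr}) and the correlation induced between the inputs by conferencing, the single-letter bounds are maximized by jointly Gaussian $X_1,X_2$, and that the optimal correlation is captured by the split ratios $\kappa_1,\kappa_2$ between private and common power.

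The hard part will be the converse's treatment of the statistical dependence between the transmitted codewords $\mathbf{x}_1$ and $\mathbf{x}_2$ induced by conferencing: unlike the ordinary MAC, the inputs are correlated, so the joint input distribution does not factor, and the conference messages must be inserted into the chain of inequalities with their entropies correctly bounded by $NC_{12}$ and $NC_{21}$. Tied to this is showing that a single correlation parameter per encoder (the $\kappa_i$) suffices, which needs the maximum-entropy argument together with a single-letterization and time-sharing step to convexify the region. By contrast the achievability is comparatively routine once the common-message split and the coherent transmission are in place, modulo the standard discretization argument required to lift Willems' discrete-memoryless region to the Gaussian channel under an average power constraint.
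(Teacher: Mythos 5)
You should first be aware that the paper never actually proves Theorem \ref{GaussConf}: the region (\ref{CapRegConfG}) is imported verbatim from \cite{BrossLapWigg08}, and the only original proof in Appendix C is Prop.~\ref{CtoGC}, which treats just the special case $C_{12}=C_{21}=\infty$ and just the sum-rate constraint. There, starting from the degenerate form (\ref{C12C21infCR}) of Willems' region, the paper expands $I\left(X_1,X_2;Y\right)=H\left(Y\right)-\frac{1}{2}\log\left(2\pi e\sigma^2\right)$, bounds $\mathbb{E}\left[Y^2\right]$ by writing each codeword as a sum of uncorrelated private and common parts, applies the Cauchy--Schwarz inequality to $\mathbb{E}\left[X_{1,c}X_{2,c}\right]$ (with equality forcing linear dependence of the common codewords, which is exactly the coherent transmission), and closes with the Gaussian maximum-entropy bound. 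That modest computation is all the paper needs, since the only consequence used downstream is the coherent cross term in the received signal power.

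Your proposal is therefore a genuinely different and far more ambitious route: a full achievability-plus-converse proof in the style of \cite{BrossLapWigg08}. Your achievability half is sound: message splitting over the conference links, a common Gaussian codeword sent coherently with per-encoder scalings $\sqrt{\left(1-\kappa_i\right)P_i}$, independent private parts of power $\kappa_iP_i$, and the MAC-with-common-message bounds evaluate precisely to the three constraints of (\ref{CapRegConfG}). The gap is the converse, which you give only as a list of ingredients, and the two items you defer are exactly where the real work of \cite{BrossLapWigg08} lies. First, Willems' converse (Theorem \ref{WillemsConfR}) is a discrete-memoryless statement, so it cannot simply be ``lifted'' to the Gaussian channel with the average power constraint (\ref{AvPconstr}); the single-letterization with the conference messages must be redone for continuous alphabets under a cost constraint (your own discretization remark applies only to achievability). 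Second, the claim that jointly Gaussian inputs with correlation captured by a single pair $\left(\kappa_1,\kappa_2\right)$ exhaust the region requires a \emph{conditional} maximum-entropy argument that respects the conditioning on the conference variable $Q$, not the unconditional maximum-entropy lemma you invoke; establishing this parametrization is the technical core of the cited paper. As an outline your plan is the correct skeleton of the literature proof; as a proof it is incomplete precisely at its acknowledged hard part, whereas the paper sidesteps the issue entirely by citing the result and proving only the special case it needs.
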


The above theorem indicates something very interesting. In the case of Gaussian channels, splitting the 
power of each transmitter into two parts: (a) A part for a private message and (b) A part for a common message, with a ratio 
defined by the parameter $\kappa_i$, results in rate pairs on the boundary of the MAC region with conferencing. 
This idea of private/common message was also used in the achievability proof of the region in (\ref{CapRegConf}). 
The following proposition shows how the 
Gaussian capacity region can be derived from the d.m. region, in the special Case 2 of 
infinite capacity links between the two encoders and using the above technique.

\begin{Pro}
In the Gaussian channel, for the case $C_{12}=C_{21}=\infty$, the process of encoding a private message and 
a common message at each transmitter by splitting the total transmission power based on the ratio $\kappa_i$ and using superposition coding, 
is capacity achieving. Specifically, the rates satisfy the inequality (\ref{CapRegConfG}, $C_{12}=C_{21}=\infty$):
\begin{eqnarray}
0\leq R_1+R_2\leq \frac{1}{2}\log\left(1+\frac{h_{1}P_1+h_{2}P_2+2\sqrt{h_{1}\left(1-\kappa_1\right)P_1\cdot h_{2}\left(1-\kappa_2\right)P_2}}{\sigma^2}\right)
\end{eqnarray}
\label{CtoGC}
\end{Pro}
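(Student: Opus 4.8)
The plan is to establish achievability by directly evaluating the general sum-rate bound of Theorem \ref{WillemsConfR} for a Gaussian input built from the private/common power split, and to show that it reproduces the second term of the sum-rate constraint in (\ref{CapRegConfG}). Since $C_{12}=C_{21}=\infty$, the individual-rate bounds in (\ref{CapRegConf}) become vacuous and the region collapses to (\ref{C12C21infCR}), so the only constraint that must be matched is $R_1+R_2\le \max I\left(X_1,X_2;Y\right)$. Hence it suffices to exhibit an admissible input distribution for which $I\left(X_1,X_2;Y\right)$ equals the claimed expression.

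First I would construct the input corresponding to the splitting scheme. Let $U,V_1,V_2\sim\mathcal{N}\left(0,1\right)$ be independent, where $U$ carries the common message (made available at both encoders through the conferencing links) and $V_i$ carries the private message of encoder $i$. Set
\begin{eqnarray}
X_1 = \sqrt{\kappa_1 P_1}\,V_1 + \sqrt{\left(1-\kappa_1\right)P_1}\,U,\qquad X_2 = \sqrt{\kappa_2 P_2}\,V_2 + \sqrt{\left(1-\kappa_2\right)P_2}\,U,
\end{eqnarray}
so that the per-BS constraints (\ref{AvPconstr}) hold in the form $\mathbb{E}\left[X_i^2\right]=P_i$. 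The key modeling point is to identify the auxiliary/time-sharing variable $Q$ of Theorem \ref{WillemsConfR} with the common signal $U$: conditioned on $U$, the inputs $X_1$ and $X_2$ are independent, so the factorization $\mathbb{P}\left(x_1|q\right)\mathbb{P}\left(x_2|q\right)$ required by the region is respected, while the marginal correlation $\mathbb{E}\left[X_1X_2\right]=\sqrt{\left(1-\kappa_1\right)\left(1-\kappa_2\right)P_1P_2}$ is exactly the coherent-combining gain produced by conferencing.

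Next I would evaluate $I\left(X_1,X_2;Y\right)=h\left(Y\right)-h\left(Y|X_1,X_2\right)$ for the channel (\ref{TransmSign}). Since $Y\mid X_1,X_2$ is Gaussian with variance $\sigma^2$, the conditional term equals $\tfrac12\log\left(2\pi e\sigma^2\right)$; and since $\left(X_1,X_2\right)$ is jointly Gaussian, $Y$ is Gaussian with
\begin{eqnarray}
\mathrm{Var}\left(Y\right) = h_1 P_1 + h_2 P_2 + 2\sqrt{h_1\left(1-\kappa_1\right)P_1\cdot h_2\left(1-\kappa_2\right)P_2} + \sigma^2,
\end{eqnarray}
the cross term arising from $\mathbb{E}\left[X_1X_2\right]$ above. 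Therefore $I\left(X_1,X_2;Y\right)=\tfrac12\log\left(\mathrm{Var}\left(Y\right)/\sigma^2\right)$, which upon substitution is precisely the claimed sum-rate expression. The argument is closed by the maximum-entropy property: among admissible inputs with the induced covariance, the Gaussian choice maximizes $h\left(Y\right)$ and hence $I\left(X_1,X_2;Y\right)$, so for each $\left(\kappa_1,\kappa_2\right)$ the scheme attains the corresponding boundary point of (\ref{C12C21infCR}), and the union over $\kappa_1,\kappa_2$ traces out the region (\ref{CapRegConfG}) with $C_{12}=C_{21}=\infty$. The main obstacle is not the entropy computation, which is routine, but the conceptual step of reconciling the coherent correlation of the transmitted signals with the conditionally-independent input structure mandated by the discrete-memoryless region; making the identification $Q\equiv U$ precise, and checking that it preserves both the power constraints and the factorization $\mathbb{P}\left(x_1|q\right)\mathbb{P}\left(x_2|q\right)$, is where care is required.
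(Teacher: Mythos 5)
Your proof is correct, and it travels the same road as the paper's proof but in the opposite logical direction. The paper takes an arbitrary (not necessarily Gaussian) input of the split form $X_i = X_{i,p} + X_{i,c}$, bounds the cross term $\mathbb{E}\left[X_{1,c}X_{2,c}\right]$ by Cauchy--Schwarz, bounds $H\left(Y\right)$ by the Gaussian maximum-entropy theorem, and thereby shows that every such scheme satisfies the sum-rate inequality, with the equality conditions (linear dependence $X_{1,c}=\mu X_{2,c}$ and Gaussian inputs) only noted along the way; you instead exhibit an explicit jointly Gaussian input $X_i=\sqrt{\kappa_iP_i}\,V_i+\sqrt{\left(1-\kappa_i\right)P_i}\,U$ and evaluate $I\left(X_1,X_2;Y\right)$ exactly, which proves attainment of the right-hand side directly. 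What your route buys is twofold: the computation is an identity rather than a chain of inequalities, and -- more substantively -- you explicitly verify admissibility of the input for Willems' region by identifying the auxiliary variable $Q$ with the common signal $U$, under which $X_1$ and $X_2$ are conditionally independent, a factorization constraint that the maximization in (\ref{C12C21infCR}) imposes and that the paper's proof silently glosses over. What the paper's route buys is generality in the other direction: the Cauchy--Schwarz step shows that \emph{no} input of the split form (Gaussian or not, with common parts correlated however one likes) can exceed the stated bound, whereas your closing maximum-entropy remark only dominates inputs sharing the induced covariance; to fully recover that converse-flavored statement you would need to add the one-line Cauchy--Schwarz bound on the cross-correlation. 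Combining your achievability computation with that step reproduces the paper's argument in its entirety.
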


\begin{proof}
Let us start by the capacity region of the d.m. channel which is shown in (\ref{C12C21infCR}) 
to be $0\leq R_1+R_2 \leq \max I\left(X_1,X_2;Y\right)$. The maximization is 
taken over all probability distributions, considering also the transmission power constraint in (\ref{AvPconstr}). 
We expand the expression for the information $I$.

\begin{eqnarray}
I\left(X_1,X_2;Y\right) & = & H\left(Y\right) - H\left(Y|X_1,X_2\right)\nonumber\\
						& \stackrel{(i)}{=} 	& H\left(Y\right) - H\left(\sqrt{h_{1}} \cdot X_{1} + \sqrt{h_{2}} \cdot X_{2} + \eta|X_1,X_2\right)\nonumber\\
						& \stackrel{(ii)}{=} 	& H\left(Y\right) - H\left(\eta\right)\nonumber\\
						& \stackrel{(iii)}{=} 	& H\left(Y\right) - \frac{1}{2}\log\left(2\pi e\sigma^2\right)
\label{IG}
\end{eqnarray}
where in (i) we have replaced with (\ref{TransmSign}), in (ii) the conditioning removes the parts $X_1$ and $X_2$ which are fully known, 
while it does not influence the entropy of noise (independent r.v.), while in (iii) the term $H\left(\eta\right)$ is replaced by its 
expression for the differential entropy of the normal distribution. We now calculate the expected received signal power.

\begin{eqnarray}
\label{MomentY}
\mathbb{E}\left[Y^2\right]  & \stackrel{(\ref{TransmSign})}{=}  & \mathbb{E}\left[\left(\sqrt{h_{1}} \cdot X_{1} + \sqrt{h_{2}} \cdot X_{2} + \eta\right)^2\right]\nonumber\\
							& \stackrel{(iv)}{=}								& h_{1}\mathbb{E}\left[X_{1}^2\right] + h_{2}\mathbb{E}\left[X_{2}^2\right]  + \mathbb{E}\left[\eta^2\right] + 2\mathbb{E}\left[\sqrt{h_{1}} X_{1} \cdot \sqrt{h_{2}} X_{2}\right]\nonumber\\
							& \stackrel{(\ref{AvPconstr})}{\leq}				& h_{1}P_1 + h_{2}P_2  + \sigma^2 + 2\sqrt{h_{1}}\sqrt{h_{2}} \cdot \mathbb{E}\left[ X_{1} \cdot X_{2}\right]\nonumber\\
							& \stackrel{(v)}{=}				& h_{1}P_1 + h_{2}P_2  + \sigma^2 + 2\sqrt{h_{1}}\sqrt{h_{2}} \cdot \mathbb{E}\left[ 
							\left(X_{1,p}+X_{1,c}\right) \cdot \left(X_{2,p}+X_{2,c}\right)\right]\nonumber\\
							& \stackrel{(vi)}{=}				& h_{1}P_1 + h_{2}P_2  + \sigma^2 + 2\sqrt{h_{1}}\sqrt{h_{2}} \cdot \mathbb{E}\left[ 
							X_{1,c}\cdot X_{2,c}\right]\nonumber\\
							& \stackrel{(vii)}{\leq}				& h_{1}P_1 + h_{2}P_2  + \sigma^2 + 2\sqrt{h_{1}}\sqrt{h_{2}} \cdot \sqrt{\mathbb{E}\left[ 
							X_{1,c}^2\right]\cdot \mathbb{E}\left[X_{2,c}^2\right]}\nonumber\\
							& \stackrel{(viii)}{=}				& \sigma^2 + h_{1}P_1 + h_{2}P_2 + 2\sqrt{h_{1}\left(1-\kappa_1\right)P_1\cdot h_{2}\left(1-\kappa_2\right)P_2} =: c
\end{eqnarray}
where in (iv) we have expanded the term in parenthesis and considered the fact that $\mathbb{E}\left[X_i\cdot \eta\right] = 0$ since the noise and the 
encoded signal are uncorrelated independent r.v.'s, in (v) we consider the codeword $X_i$ to be the sum of the \textbf{private} and \textbf{common} message 
r.v.'s by applying superposition coding (see also \cite{GantiSuperPo10}, \cite{SuperPoCodACM07} and \cite{MarschFett08}), in (vi) we further expand the 
product in the expectation and consider that the primary and common messages are all uncorrelated r.v.'s, (vii) applies the Cauchy-Schwarz inequality \cite[p.399]{RossBook96} 
and (viii) splits the maximum transmission power, such that $\kappa_i$ is given for the private and $1-\kappa_i$ for the common message. An important notice is that the 
inequality in the application of the Cauchy-Schwarz holds with equality, if and only if $X_{1,c}$ and $X_{2,c}$ are linearly dependent. To guarantee this, we consider $X_{1,c}=\mu \cdot X_{2,c}$, where $\mu = \sqrt{\frac{\left(1-\kappa_1\right)P_1}{\left(1-\kappa_2\right)P_2}}$.

Given (\ref{MomentY}) the entropy of $Y$ is bounded by $H\left(Y\right)\leq \frac{1}{2}\log\left(2\pi e\cdot c\right)$ using \cite[Th.9.6.5]{ThomasCoverIT}, which states 
that the normal distribution maximizes the entropy for a given variance. Applying this inequality to (\ref{IG}) above, we reach the result for some fixed $\kappa_1,\kappa_2$.
\end{proof}

\newpage

\subsection{Motivation on User-optimal Geometric Cooperation Policies}
\label{AppD}

\begin{Pro}
Given a reference user $\mathbf{u}_i$ and a fixed cooperation pattern for the rest of the network $\mathbf{a}_{-i}$, 
it is optimal for the two nearest BSs either to fully cooperate or not to cooperate at all for the user service, depending on the value 
\begin{eqnarray}
\rho & := & \frac{h_{i2}}{h_{i1}}\in\left[0,1\right]
\label{RatioL}
\end{eqnarray}
of their relative channel power ratio. More specifically, 
\begin{eqnarray}
a_i^* & = & \left\{
\begin{tabular}{l l l}
$0$ & $(\mathrm{No\ Coop})$ & , if $\rho\leq 0.1716$\\
$\frac{1}{2}$ & $(\mathrm{Full\ Coop})$ & , if $\rho\geq 0.1716$
\end{tabular}
\right.
\label{CaseOPT}
\end{eqnarray}
\label{Regions}
\end{Pro}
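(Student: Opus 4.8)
The plan is to note that, with the cooperation pattern $\mathbf{a}_{-i}$ of all other users held fixed, both the noise term $\sigma_i^2$ and the interference term $\mathcal{I}_i\left(\mathbf{a}_{-i},p\right)$ in (\ref{SINRui2a1}) are independent of $a_i$. Maximizing $\mathrm{SINR}_i$ over the single variable $a_i$ is therefore equivalent to maximizing the beneficial signal $\mathcal{S}_i\left(a_i,p\right)$ of (\ref{SINRui2a2}). The admissible range $a_i\in\left[0,\frac{1}{2}\right]$ is dictated by power conservation, since the private power $\left(1-2a_i\right)p$ in (\ref{powerconservA}) must remain non-negative.

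First I would collect the terms of $\mathcal{S}_i$ that carry $a_i$, writing
\begin{eqnarray}
\mathcal{S}_i\left(a_i,p\right) & = & h_{i1}p + a_i\,p\left(h_{i2}-h_{i1}+2\sqrt{h_{i1}h_{i2}}\right).\nonumber
\end{eqnarray}
The structural fact that drives everything is that $\mathcal{S}_i$ is \emph{affine} in $a_i$: the cross term produced by conferencing enters linearly, without any $\sqrt{a_i\left(1-a_i\right)}$ coupling. An affine function on the closed interval $\left[0,\frac{1}{2}\right]$ attains its maximum at an endpoint, the choice being governed solely by the sign of the slope $p\left(h_{i2}-h_{i1}+2\sqrt{h_{i1}h_{i2}}\right)$. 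This already yields the dichotomy $a_i^*\in\left\{0,\frac{1}{2}\right\}$, i.e. only $\mathrm{No\ Coop}$ or $\mathrm{Full\ Coop}$ can be user-optimal.

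It then remains to convert the sign of this slope into a threshold on $\rho=h_{i2}/h_{i1}$. Dividing by $h_{i1}>0$, the slope has the sign of $g\left(\rho\right):=\rho-1+2\sqrt{\rho}$, which is strictly increasing on $\left[0,1\right]$ with $g\left(0\right)=-1<0$ and $g\left(1\right)=2>0$, hence possesses a unique root $\rho^*\in\left(0,1\right)$. For $\rho<\rho^*$ the slope is negative and the maximizer is $a_i^*=0$, whereas for $\rho>\rho^*$ it is positive and $a_i^*=\frac{1}{2}$. I would pin down $\rho^*$ through the substitution $u=\sqrt{\rho}$, turning $g\left(\rho\right)=0$ into $u^2+2u-1=0$; its positive root is $u=\sqrt{2}-1$, so that $\rho^*=\left(\sqrt{2}-1\right)^2=3-2\sqrt{2}\approx0.1716$. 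At the boundary $\rho=\rho^*$ the slope vanishes and every $a_i$ is optimal, which is consistent with the weak inequalities in (\ref{CaseOPT}).

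I do not expect a genuine obstacle here: the argument reduces to evaluating an affine function at its endpoints, and the only point truly deserving care is the verification in the second step that the conferencing cross term is linear rather than concave in $a_i$. This linearity follows directly from the received-signal model, where the two common-message components add coherently in amplitude after phase alignment; once it is established, solving the resulting quadratic for $\rho^*$ is entirely routine.
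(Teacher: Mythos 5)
Your proposal is correct and follows essentially the same route as the paper: both arguments observe that the denominator of $\mathrm{SINR}_i$ does not depend on $a_i$, so that the problem reduces to an affine (monotone) function of $a_i$ on $\left[0,\tfrac{1}{2}\right]$ maximized at an endpoint, with the slope sign reducing to the quadratic condition $\rho+2\sqrt{\rho}-1\geq 0$ and threshold $\left(\sqrt{2}-1\right)^2\approx 0.1716$. The only (cosmetic) difference is that you maximize the signal directly and give the exact root $3-2\sqrt{2}$, whereas the paper differentiates the $\mathrm{SINR}$ and quotes the root numerically.
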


\begin{proof}
To prove the first part of the statement it is easily seen that the signal $S_i\left(a_i,p\right)$ from (\ref{SINRui2a2}) is either monotone increasing or decreasing. 
The first partial derivative of the SINR over $a_i$ gives
\begin{eqnarray}
\frac{\partial \mathrm{SINR}_i\left(a_i,p\right)}{\partial a_i} & = & \left(- h_{i1} + h_{i2} + 2\sqrt{h_{i1}h_{i2}}\right)\frac{p}{\sigma_i^2+ \mathcal{I}_{i}\left(\mathbf{a}_{-i},p\right)}
\end{eqnarray}
and the monotonicity depends on the sign of the expression in parenthesis. For positive sign, the SINR is increasing in $a_i$ and the 
full cooperation is optimal, otherwise no cooperation is optimal. 
\begin{eqnarray}
- h_{i1} + h_{i2} + 2\sqrt{h_{i1}h_{i2}} & \geq 0 & \Rightarrow\nonumber\\
- \frac{1}{\sqrt{\rho}} + \sqrt{\rho} + 2 & \geq 0 & \Rightarrow \nonumber\\
\rho + 2\sqrt{\rho} -1 & \geq 0 &
\label{calculations}
\end{eqnarray}
The roots of the above quadratic polynomial are $-2.4142$ and $0.4142$ and the polynomial is negative inbetween and positive otherwise on $\mathbb{R}$. 
Furthermore, only the positive roots can be accepted since $\sqrt{\rho}$ is non-negative. 
Hence $\rho\geq 0.4142^2 = 0.1716$ results in full cooperation being optimal. Optimality of no cooperation is proved for the negative sign of the 
expression in parenthesis.
\end{proof}

Let us use the above result with a specific expression for the total gain. Specifically, we consider no fast-fading, so that $g\left(\mathbf{z}_n,\mathbf{u}_i\right) = 1$, and the exponential expression for the path-loss, so that $h\left(\mathbf{z}_n,\mathbf{u}_i\right) = d\left(\mathbf{z}_n,\mathbf{u}_i\right)^{-\beta}$, $\beta>2$. The distances of the first and second neighbour to user $\mathbf{u}_i$ are $r_{i1}$ and $r_{i2}$ respectively.

\begin{Cor}
Given that the total gain takes the expression $h\left(\mathbf{z}_n,\mathbf{u}_i\right) = d\left(\mathbf{z}_n,\mathbf{u}_i\right)^{-\beta}$, $\beta>2$, following cooperation result holds

\begin{eqnarray}
a_i^* & = & \left\{
\begin{tabular}{l l l}
$0$ & $(\mathrm{No\ Coop})$ & , if $r_{i1}\leq 0.1716^{1/\beta} \cdot r_{i2}$\\
$\frac{1}{2}$ & $(\mathrm{Full\ Coop})$ & , if $r_{i1}\geq 0.1716^{1/\beta} \cdot r_{i2}$
\end{tabular}
\right.
\label{CaseOPT}
\end{eqnarray}
For e.g. $\beta = 4$, full cooperation is optimal when $r_{i1}\geq 0.6436\cdot r_{i2}$.
\label{coroll1}
\end{Cor}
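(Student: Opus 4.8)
The plan is to derive the corollary as a direct specialization of Proposition \ref{Regions}: the optimization over $a_i$ has already been carried out there, so nothing remains except to re-express the decision threshold, originally stated in terms of the channel power ratio $\rho=h_{i2}/h_{i1}$, in terms of the geometric distance ratio $r_{i1}/r_{i2}$ under the no-fading path-loss model.

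First I would substitute the assumed gain $h(\mathbf{z}_n,\mathbf{u}_i)=d(\mathbf{z}_n,\mathbf{u}_i)^{-\beta}$ into the definition of $\rho$. With $h_{i1}=r_{i1}^{-\beta}$ and $h_{i2}=r_{i2}^{-\beta}$ this gives
\begin{eqnarray}
\rho \;=\; \frac{h_{i2}}{h_{i1}} \;=\; \frac{r_{i2}^{-\beta}}{r_{i1}^{-\beta}} \;=\; \left(\frac{r_{i1}}{r_{i2}}\right)^{\beta}.\nonumber
\end{eqnarray}
Since $r_{i2}\ge r_{i1}$ by the definition of the second closest neighbour, this quantity lies in $[0,1]$, consistent with the hypothesis $\rho\in[0,1]$ of Proposition \ref{Regions}.

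Next I would transport the threshold $0.1716$ through the $\beta$-th root. The dichotomy in Proposition \ref{Regions} turns on comparing $\rho$ with $0.1716$. Applying the strictly increasing map $t\mapsto t^{1/\beta}$ (well defined for $\beta>2$ and nonnegative $t$) to both sides of $\rho\le 0.1716$, respectively $\rho\ge 0.1716$, and using the identity above yields $r_{i1}/r_{i2}\le 0.1716^{1/\beta}$, respectively $r_{i1}/r_{i2}\ge 0.1716^{1/\beta}$. This reproduces exactly the claimed rule: $\mathrm{No\ Coop}$ when $r_{i1}\le 0.1716^{1/\beta}r_{i2}$ and $\mathrm{Full\ Coop}$ when $r_{i1}\ge 0.1716^{1/\beta}r_{i2}$. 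The only point deserving attention is the monotonicity of the root map, which guarantees that the direction of the inequality is preserved; this is precisely what makes the geometric distance ratio an equivalent decision statistic to the power ratio, and hence justifies the purely geometric policies of Definition \ref{PolicyGeomDef}.

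Finally I would confirm the numerical value for $\beta=4$. Using $0.1716=0.4142^2$ (the squared positive root already obtained in the proof of Proposition \ref{Regions}), one gets $0.1716^{1/4}=(0.4142)^{1/2}\approx 0.6436$, so full cooperation is optimal precisely when $r_{i1}\ge 0.6436\,r_{i2}$, as stated.
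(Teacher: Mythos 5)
Your proposal is correct and is essentially the paper's own (implicit) argument: the corollary is presented as a direct specialization of Proposition \ref{Regions} obtained by setting $g=1$, substituting $h_{i1}=r_{i1}^{-\beta}$, $h_{i2}=r_{i2}^{-\beta}$ so that $h_{i2}/h_{i1}=\left(r_{i1}/r_{i2}\right)^{\beta}$, and transporting the threshold $0.1716=0.4142^2$ through the monotone $\beta$-th root, giving $0.1716^{1/4}\approx 0.6436$ for $\beta=4$. Your explicit attention to the monotonicity of $t\mapsto t^{1/\beta}$ and to $\rho\in[0,1]$ (from $r_{i1}\leq r_{i2}$) fills in exactly the details the paper leaves unstated.
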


It is rather interesting in the above that the results hold irrespective of the cooperation strategy chosen by the other users. The decision to cooperate or not, is 
dependent on the user position within the related 2-Voronoi cell $\mathbf{u}_i\in \mathcal{V}_2\left(\mathbf{b}_{i1},\mathbf{b}_{i2}\right)$ and the fading characteristics. 
This can be true, only when we assume that a BS has an unbounded power budget to consume. However, in cases where the number of secondary users $\mathcal{N}^s\left(\mathbf{z}_i\right)$ is large, we cannot expect that the BS can use full cooperation modus for all such users, due to high power consumption expenses. 
It is reasonable then to ask, which is the optimal BS cooperation scheme under specific power constraints per BS.


\newpage

\subsection{Geometric Locus of the Cooperation Policies}
\label{AppE}

\begin{Lem}
Given a fixed parameter $\rho$ and locations of $\mathbf{b}_{i1}$ and $\mathbf{b}_{i2}$ on the 2D plane $\left(x_1,y_1\right)$ and $\left(x_2,y_2\right)$ respectively, 
the geometric locus of points that satisfy $r_{i1}\leq \rho \cdot r_{i2}$ for $\rho\in\left(0,1\right)$ 
is a \textbf{disc}, whose circle describing the boundary is given by the equation
\begin{eqnarray}
\left(x-\frac{x_1-x_2\rho^2}{1-\rho^2}\right)^2+ \left(y-\frac{y_1-y_2\rho^2}{1-\rho^2}\right)^2 & = & \left(\rho\frac{d\left(\mathbf{b}_{i1},\mathbf{b}_{i2}\right)}{1-\rho^2}\right)^2
\label{DiscCoop}
\end{eqnarray}
and has center $\left(\frac{x_1-x_2\rho^2}{1-\rho^2},\frac{y_1-y_2\rho^2}{1-\rho^2}\right)$ and radius $\rho \frac{d\left(\mathbf{b}_{i1},\mathbf{b}_{i2}\right)}{1-\rho^2}$, where 
$d\left(\mathbf{b}_{i1},\mathbf{b}_{i2}\right)$ is the Euclidean distance between $\mathbf{b}_{i1}$ and $\mathbf{b}_{i2}$.

For $\rho=1$ the locus of points degenerates to the line which passes over the 1-Voronoi boundary of the two cells
\begin{eqnarray}
x\left(x_2-x_1\right)+y \left(y_2-y_1\right) & = & \frac{1}{2}\left(x_2^2-x_1^2 + y_2^2-y_1^2\right).
\label{LineCoop}
\end{eqnarray}
\label{ProGL}
\end{Lem}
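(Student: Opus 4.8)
The plan is to treat this as a classical Apollonius-circle computation carried out directly in Cartesian coordinates. Writing a generic planar point as $\mathbf{z}=(x,y)$, we have $r_{i1}=d(\mathbf{b}_{i1},\mathbf{z})$ and $r_{i2}=d(\mathbf{b}_{i2},\mathbf{z})$ by the definitions in Section~\ref{SecIIA}. The first step is to record the defining inequality $r_{i1}\le\rho\,r_{i2}$ in coordinates. Since both distances are nonnegative and $\rho\ge 0$, squaring is an equivalence, so the locus is exactly the set of $(x,y)$ with
\[
(x-x_1)^2+(y-y_1)^2 \;\le\; \rho^2\left[(x-x_2)^2+(y-y_2)^2\right].
\]

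For $\rho\in(0,1)$ I would expand both sides, move everything to the left, and collect the quadratic part. The terms $x^2$ and $y^2$ occur with common coefficient $1-\rho^2>0$, so dividing through by $1-\rho^2$ produces a monic quadratic in $(x,y)$ whose linear coefficients are $-2\frac{x_1-\rho^2x_2}{1-\rho^2}$ and $-2\frac{y_1-\rho^2y_2}{1-\rho^2}$. Completing the square in $x$ and in $y$ then immediately identifies the center as $\left(\frac{x_1-\rho^2x_2}{1-\rho^2},\frac{y_1-\rho^2y_2}{1-\rho^2}\right)$, matching the claimed center, and shows that the boundary $r_{i1}=\rho r_{i2}$ is a circle, so the region itself is a disc.

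The one genuinely computational step --- and the only place where care is needed --- is verifying that the radius has the asserted value. After completing the square, the radius squared equals
\[
R^2=\frac{(x_1-\rho^2x_2)^2+(y_1-\rho^2y_2)^2}{(1-\rho^2)^2}-\frac{x_1^2+y_1^2-\rho^2(x_2^2+y_2^2)}{1-\rho^2}.
\]
I would put this over the common denominator $(1-\rho^2)^2$ and simplify the numerator: the pure squares and the mixed terms recombine while the $\rho^4$ contributions cancel, leaving precisely $\rho^2\left[(x_1-x_2)^2+(y_1-y_2)^2\right]=\rho^2\,d(\mathbf{b}_{i1},\mathbf{b}_{i2})^2$. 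Hence $R=\rho\,d(\mathbf{b}_{i1},\mathbf{b}_{i2})/(1-\rho^2)$, which is the stated radius and completes the case $\rho\in(0,1)$.

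Finally, the degenerate case $\rho=1$ is handled by inspection of the same squared inequality: the coefficient $1-\rho^2$ of the quadratic part vanishes, so all $x^2$ and $y^2$ terms disappear and the boundary equality reduces to a linear equation. Collecting the surviving first-order terms yields $x(x_2-x_1)+y(y_2-y_1)=\frac{1}{2}(x_2^2-x_1^2+y_2^2-y_1^2)$, i.e.\ the perpendicular bisector of $\overline{\mathbf{b}_{i1}\mathbf{b}_{i2}}$, which by Proposition~\ref{VertexEdgeV1} is exactly the line carrying the shared $1$-Voronoi edge of the two cells. I do not expect any obstacle beyond the bookkeeping in the radius simplification, which is a routine algebraic identity.
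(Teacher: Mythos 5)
Your proposal is correct and follows essentially the same route as the paper's proof: square the defining inequality, expand, divide by $1-\rho^2$, and complete the square to read off the Apollonius circle's center and radius, with the $\rho=1$ case degenerating to the linear equation of the perpendicular bisector. The only cosmetic difference is that the paper verifies the bisector claim by an explicit slope computation, whereas you identify it directly from equidistance and Proposition \ref{VertexEdgeV1}; both are valid.
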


\begin{proof}
The expression $r_1=\rho r_2$ can be rewritten by using the expressions on the Euclidean distances (\ref{distanceM}) between a point $\left(x,y\right)$ of the 
2D plane and the positions of the BSs $\mathbf{b}_{i1}$ and $\mathbf{b}_{i2}$, 

\begin{eqnarray}
d\left(\mathbf{b}_{i1},\left(x,y\right)\right) & = & \rho d\left(\mathbf{b}_{i2},\left(x,y\right)\right)\nonumber\\
\left(x_1-x\right)^2 + \left(y_1-y\right)^2 & = & \rho^2 \left(\left(x_2-x\right)^2 + \left(y_2-y\right)^2\right)\nonumber\\
\left(1-\rho^2\right)x^2 - 2x\left(x_1-x_2\rho^2\right) + \left(1-\rho^2\right)y^2 - 2y\left(y_1-y_2\rho^2\right) & = & x_2^2\rho^2-x_1^2+ y_2^2\rho^2-y_1^2.\nonumber
\label{GL1}
\end{eqnarray}
Simply by setting $\rho = 1$ in the last expression, we reach (\ref{LineCoop}). For $\rho\neq 1$ we can divide by $1-\rho^2$. We further add at both sides of the 
equality the term $\left(\frac{x_1-x_2\rho^2}{1-\rho^2}\right)^2+ \left(\frac{y_1-y_2\rho^2}{1-\rho^2}\right)^2$ so that expressions of squares of sums are formed and we reach

\begin{eqnarray}
\left(x-\frac{x_1-x_2\rho^2}{1-\rho^2}\right)^2 + \left(y-\frac{y_1-y_2\rho^2}{1-\rho^2}\right)^2 & = & \rho^2\frac{\left(x_1-x_2\right)^2+\left(y_1-y_2\right)^2}{\left(1-\rho^2\right)^2}.\nonumber
\label{GL2}
\end{eqnarray}
This is obviously the equation describing a circle with center and radius given in the proposition. To show why the degenerate case for $\rho=1$ is a line which passes 
over the 1-Voronoi boundary we argue as follows. The 1-Voronoi boundary is by Def. \ref{DefV1} the perpendicular bisector of the line segment $\overline{\mathbf{b}_{i1}\mathbf{b}_{i2}}$, so that 
a point on the boundary is $\left(x_1+\frac{x_2-x_1}{2}, y_1+\frac{y_2-y_1}{2}\right)$. Take any other point $\left(x,y\right)$ on the boundary. The tangent of 
the vector defined by these two points, equals the tangent of the line segment $\overline{\mathbf{b}_{i1}\mathbf{b}_{i2}}$ rotated by $\pi/2$

\begin{eqnarray}
\frac{y-\frac{y_2+y_1}{2}}{x-\frac{x_2+x_1}{2}} & = & -\frac{1}{\frac{y_2-y_1}{x_2-x_1}} \nonumber\\
y\cdot\left(y_2-y_1\right) - \frac{y_2^2-y_1^2}{2} & = & -x\cdot\left(x_2-x_1\right) + \frac{x_2^2-x_1^2}{2}.\nonumber
\end{eqnarray}
The above expression coincides with the one in (\ref{LineCoop}).
\end{proof}

\newpage

\subsection{Proofs and Supplementary material for distance p.d.f.}
\label{AppF}

To derive the p.d.f. of the distances to the first and second BS neighbour of the typical location, we will make use of Prop. \ref{CircleV1} and Prop. \ref{CircleV2} from Appendix \ref{AppA}. Based on these, to determine the first closest neighbour, 
we should find the largest ball $\mathcal{B}\left(\mathbf{u}_o,r_1\right)$ of radius $r_1$ which is empty. This ball meets the first neighbour on its 
boundary. This implies by Prop. \ref{CircleV1} that $\mathbf{u}$ belongs to its 1-Voronoi cell. We further enlarge the ball until the second closest 
neighbour is met 
on the boundary of $\mathcal{B}\left(\mathbf{u}_o,r_2\right)$, $r_2\geq r_1$ and the first (and no other) is contained in its interior, so that by Prop. \ref{CircleV2} $\mathbf{u}_o$ belongs to the 2-Voronoi cell of these two BSs. In this way, the annulus starting from the ball of radius 
$r_1$ and reaching the ball of radius $r_2$ is the largest empty.

\begin{Lem}
Given a Poisson p.p. of intensity $\lambda$, the p.d.f. of the distance $r_1$ between the typical location $\mathbf{u}_o$ and its first closest neighbour, 
equals
\begin{eqnarray}
f_{r_1}\left(r_1\right) & = & 2\lambda\pi r_1 e^{-\lambda \pi r_1^2}.
\label{pdfN1f}
\end{eqnarray}
\label{pdfN1}
The expected value of the distance $r_1$ equals
\begin{eqnarray}
\mathbb{E}\left[r_1\right] & = & \frac{1}{2\sqrt{\lambda}}.
\label{ExpR1}
\end{eqnarray}
\label{Lemr1}
\end{Lem}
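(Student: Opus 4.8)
The plan is to exploit the geometric characterization from Prop.~\ref{CircleV1}: the nearest-neighbour distance $r_1$ is the radius of the largest empty open ball centered at the typical location, so the tail event $\{r_1 > r\}$ coincides exactly with the event that the ball $\mathcal{B}(\mathbf{u}_o, r)$ contains no atom of $\phi$. First I would translate this void event into a probability via the defining property of the Poisson point process. The disc of radius $r$ has Lebesgue measure $\pi r^2$, and for a p.p.p. of intensity $\lambda$ the number of atoms it contains is Poisson with mean $\lambda \pi r^2$; hence the void probability is $e^{-\lambda\pi r^2}$. This yields the complementary c.d.f. directly:
\[
\mathbb{P}\left[r_1 > r\right] = \mathbb{P}\left[\mathcal{B}(\mathbf{u}_o, r)\ \text{empty}\right] = e^{-\lambda\pi r^2}.
\]

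Next I would pass to the density. Writing $F_{r_1}(r) = \mathbb{P}\left[r_1 \leq r\right] = 1 - e^{-\lambda\pi r^2}$ and differentiating with respect to $r$ gives $f_{r_1}(r) = 2\lambda\pi r\, e^{-\lambda\pi r^2}$, which is a Rayleigh density and is precisely the claimed expression (\ref{pdfN1f}). For the mean I would then evaluate
\[
\mathbb{E}\left[r_1\right] = \int_0^\infty r\, f_{r_1}(r)\, \mathrm{d} r = \int_0^\infty 2\lambda\pi r^2 e^{-\lambda\pi r^2}\, \mathrm{d} r,
\]
a standard Gaussian moment integral. Using $\int_0^\infty r^2 e^{-a r^2}\, \mathrm{d} r = \frac{\sqrt{\pi}}{4}\, a^{-3/2}$ (obtained by differentiating the Gaussian integral in the parameter $a$) with $a = \lambda\pi$ collapses the powers of $\pi$ and leaves $\mathbb{E}\left[r_1\right] = \frac{1}{2\sqrt{\lambda}}$, establishing (\ref{ExpR1}).

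The only step requiring genuine care is the very first one, namely the identification of the distributional tail of $r_1$ with the void probability. This hinges on combining Prop.~\ref{CircleV1} (which guarantees that no atom is strictly closer than $r_1$, i.e. the ball is empty up to its boundary) with the complete-independence and void-probability axioms of the Poisson process, so that the count in the disc is genuinely Poisson with mean equal to $\lambda$ times the disc area. Once that translation is justified, everything downstream is a routine differentiation and a closed-form integral, so I expect no further obstacles.
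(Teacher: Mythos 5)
Your proposal is correct and follows essentially the same route as the paper: both identify $\mathbb{P}\left[r_1>r\right]$ with the void probability $e^{-\lambda\pi r^2}$ of the disc of radius $r$ and differentiate the resulting c.d.f. to obtain the Rayleigh density. The only cosmetic difference is in the mean, where the paper integrates the tail $\int_0^\infty \mathbb{P}\left[r_1>r\right]\mathrm{d}r$ via the error function while you compute $\int_0^\infty r f_{r_1}(r)\,\mathrm{d}r$ as a Gaussian moment; both are one-line evaluations of the same type of integral and give $\frac{1}{2\sqrt{\lambda}}$.
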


\begin{proof}
We can derive the p.d.f. from the cumulative distribution of $r_1$, in other words 
from the expression of the probability $\mathbb{P}\left[s\leq r_1\right]$. Here $s$ is the value of the stopping time when hitting the first neighbour.
\begin{eqnarray}
1 - \mathbb{P}\left[s\leq r_1\right] & = & \mathbb{P}\left[s>r_1\right]\nonumber\\
									 & = & \mathbb{P}\left[\phi\left(\mathcal{B}\left(\mathbf{u}_o,r_1\right)\right)=0\right]\nonumber\\
									 & = & e^{-\lambda\pi r_1^2}.\nonumber
\end{eqnarray}
So that, 
\begin{eqnarray}
\mathbb{P}\left[s\leq r_1\right] = 1 -e^{-\lambda\pi r_1^2}& \Rightarrow & f_{r_1}\left(r_1\right) = 2\lambda\pi r_1e^{-\lambda\pi r_1^2}.\nonumber
\end{eqnarray}
For the expected value, we have that
\begin{eqnarray}
\mathbb{E}\left[r_1\right] = \int_0^{\infty} \! \mathbb{P}\left[s> r_1\right] \, \mathrm{d} r_1 & = &  \int_0^{\infty} \! e^{-\lambda\pi r_1^2} \, \mathrm{d} r_1 \nonumber\\
& = & \left.\frac{\sqrt{\pi}}{2}\frac{1}{\sqrt{\lambda\pi}}\cdot erf\left(r_1\sqrt{\lambda\pi}\right)\right\vert_0^{\infty}\nonumber\\
& = & \frac{1}{2\sqrt{\lambda}}.\nonumber
\label{ExpR1c}
\end{eqnarray}
\end{proof}

\begin{proof}\textbf{[Lemma \ref{LemJointD}]} We denote by $s$ and $t$ the stopping times for hitting the first and second neighbour respectively. 
The cumulative distribution of $t$, given $s$, equals
\begin{eqnarray}
1 - \mathbb{P}\left[t\leq r_2|s= r_1,\ r_2\geq r_1\right] & = & \mathbb{P}\left[t>r_2|s= r_1,\ r_2\geq r_1\right]\nonumber\\
									 & = & \mathbb{P}\left[\phi\left(\mathbf{int}\left(\mathcal{B}\left(\mathbf{u}_o,r_1\right)\cap\mathcal{B}\left(\mathbf{u}_o,r_2\right)\right)\right)=0\right]\nonumber\\
									 & = & e^{-\lambda\pi \left(r_2^2-r_1^2\right)}.\nonumber
\end{eqnarray}
So that,
\begin{eqnarray}
\mathbb{P}\left[t\leq r_2|s= r_1,\ r_2\geq r_1\right] = 1 - e^{-\lambda\pi \left(r_2^2-r_1^2\right)} & \Rightarrow & f_{r_2|r_1}\left(r_2|r_1\right) = 2\lambda\pi r_2e^{-\lambda\pi \left(r_2^2-r_1^2\right)}\nonumber
\end{eqnarray}
and using Bayes' rule for the p.d.f.'s
\begin{eqnarray}
f_{r_1,r_2}\left(r_1,r_2\right) = f_{r_2|r_1}\left(r_2|r_1\right)\cdot f_{r_1}\left(r_1\right) & = & \left(2\lambda\pi\right)^2 r_1 r_2e^{-\lambda\pi r_2^2}.\nonumber
\end{eqnarray}
Integrating the joint p.d.f. over $r_2$ we get simply the p.d.f. of the distance $r_1$ given in (\ref{pdfN1f})
\begin{eqnarray}
\label{pdfN2s}
\int_{r_1}^{\infty} \! \left(2\lambda\pi\right)^2r_1 r_2 e^{-\lambda\pi r_2^2} \, \mathrm{d} r_2 & = & 2\lambda\pi r_1 \left(\int_{r_1}^{\infty} \! \frac{\mathrm{d} e^{-\lambda\pi r_2^2}}{\mathrm{d} r_2} \, \mathrm{d} r_2\right)\nonumber\\
& = & 2\lambda\pi r_1e^{-\lambda\pi r_1^2}.
\end{eqnarray}
Otherwise, integrating the joint p.d.f. over $r_1$ we get
\begin{eqnarray}
\label{pdfN2f2}
\int_{0}^{r_2} \! \left(2\lambda\pi\right)^2r_1 r_2 e^{-\lambda\pi r_2^2} \, \mathrm{d} r_1 & = & \left(2\lambda\pi\right)^2 \frac{r_2^3}{2}e^{-\lambda\pi r_2^2}.\nonumber
\end{eqnarray}
Now we can calculate the expected value of $r_2$

\begin{eqnarray}
\label{ExpR2}
\mathbb{E}\left[r_2\right] & = & \int_0^{\infty} \! r_2 \int_0^{r_2}\! \left(2\lambda\pi\right)^2r_1 r_2 e^{-\lambda\pi r_2^2} \, \mathrm{d} r_1 \,  \mathrm{d} r_2\nonumber\\
& = & \int_0^{\infty} \! r_2\left(2\lambda\pi\right)^2 \frac{r_2^3}{2}e^{-\lambda\pi r_2^2}\,  \mathrm{d} r_2\nonumber\\
& \stackrel{(a)}{=} & \frac{3}{2}\cdot \int_0^{\infty} \! e^{-\lambda\pi r_2^2} \, \mathrm{d} r_2\nonumber\\
& = & \frac{3}{2\sqrt{\lambda\pi}}\frac{\sqrt{\pi}}{2}\left.erf\left(r_2\sqrt{\lambda\pi}\right)\right\vert_0^{\infty} = \frac{3}{4\sqrt{\lambda}},\nonumber
\end{eqnarray}
where (a) results after consecutive applications of integration by parts (product rule). It is obvious that $\mathbb{E}\left[r_2\right]>\mathbb{E}\left[r_1\right]$.
\end{proof}

\begin{proof}\textbf{[Lemma \ref{Lemrho}]}
To calculate the probability, we will use the expression for the joint p.d.f. of the pair of distances $\left(r_1,r_2\right)$ from the 
typical location to the first and second closest atom of the point process. This is provided in (\ref{pdfN2f}), Lemma \ref{LemJointD}. The probability equals

\begin{eqnarray}
\label{CalcPnocoop}
\mathbb{P}\left[r_1\leq \rho r_2\right] 	& = & \int_0^{\infty}\! \int_{r_1}^{\infty}\! \mathbbm{1}_{\left\{r_1\leq \rho r_2\right\}} \left(2\lambda\pi\right)^2r_1r_2e^{-\lambda\pi r_2^2}\, \mathrm{d} r_2 \, \mathrm{d} r_1\nonumber\\
												& = & \int_0^{\infty}\! \int_{\frac{r_1}{\rho}}^{\infty}\! \left(2\lambda\pi\right)^2r_1r_2e^{-\lambda\pi r_2^2}\, \mathrm{d} r_2 \, \mathrm{d} r_1\nonumber\\
												& = & \int_0^{\infty}\! 2\lambda\pi r_1\int_{\frac{r_1}{\rho}}^{\infty}\! 2\lambda\pi r_2e^{-\lambda\pi r_2^2}\, \mathrm{d} r_2 \, \mathrm{d} r_1\nonumber\\
												& = & \int_0^{\infty}\! 2\lambda\pi r_1 e^{-\lambda\pi \left(\frac{r_1}{\rho}\right)^2} \, \mathrm{d} r_1\nonumber\\
												& = & \rho^2\int_0^{\infty}\! -\frac{d \left(e^{-\lambda\pi \left(\frac{r_1}{\rho}\right)^2}\right)}{d\left(\frac{r_1}{\rho}\right)} \, \mathrm{d} \left(\frac{r_1}{\rho}\right) = \rho^2\nonumber
\end{eqnarray}
\end{proof}
To intuitively understand the result, on Lem. \ref{Lemrho} observe that the ratios of the two balls $\frac{\mathcal{B}\left(\mathbf{u}_o,r_1\right)}{\mathcal{B}\left(\mathbf{u}_o,r_2\right)}=\frac{\pi r_1^2}{\pi r_2^2}\leq \rho^2$ is smaller or equal to the probability of no cooperation. This means that $\rho^2$ equals the spatial average of the area where no cooperation takes place, over the entire area where both events (no/full coop) occur, with regards to the typical location.

\newpage

\subsection{Proofs and Supplementary Material for Channel Fading Distributions}

\begin{proof}\textbf{[Lemma \ref{LemZ}]}
For $i=1,2$, set $\mu_i:=\frac{r_i^{\beta}}{p}$, so that the r.v. $X_i:=g_i r_i^{-\beta}$ follows 
the exponential distribution $X_i\sim \exp\left(\mu_i\right)$ with $\mathbb{E}\left[X_i\right]=\frac{1}{\mu_i}$. 
We can easily conclude that the r.v. $\sqrt{X_i}$ follows the Rayleigh distribution with p.d.f.
\begin{eqnarray}
\label{pdfR}
f_{\sqrt{X_i}}\left(u\right) & = & 2\mu_i u e^{-\mu_i u^2}\nonumber
\end{eqnarray}
and expected value $\mathbb{E}\left[\sqrt{X_i}\right]=\sqrt{\frac{\pi}{4\mu_i}}$. The p.d.f. of the sum of the 
two independent r.v.'s $\sqrt{Z} = \sqrt{X_1}+\sqrt{X_2}$ is given by the convolution of their individual p.d.f.'s, namely
\begin{eqnarray}
\label{ConvX1X2a}
f_{\sqrt{Z}}\left(v\right) & = & 4\mu_1\mu_2 \int_0^{v} \!  u\left(v-u\right) e^{-\mu_1 u^2} e^{-\mu_2 \left(v-u\right)^2} \, \mathrm{d} u\nonumber.
\end{eqnarray}
Since we are interested in the p.d.f. of $Z$ rather than of $\sqrt{Z}$ we make following observation. The first derivative of the 
cumulative distribution function (c.d.f.) of a r.v. $Y$ is its p.d.f. $f_{Y}\left(y\right)$. Then 
\begin{eqnarray}
\label{replaceZ}
\mathbb{P}\left[Y^2\leq y\right] = \mathbb{P}\left[Y\leq\sqrt{y}\right] & \Rightarrow & f_{Y^2}\left(y\right) = \frac{\mathrm{d} \mathbb{P}\left[Y\leq\sqrt{y}\right]}{\mathrm{d} y} = \frac{f_Y\left(\sqrt{y}\right)}{2\sqrt{y}}.\nonumber
\end{eqnarray}
With the above transformation, we can get the p.d.f. of $Z$ directly from the expression of the convolution
\begin{eqnarray}
\label{ConvX1X2}
f_Z\left(z\right) = \frac{f_{\sqrt{Z}}\left(\sqrt{z}\right)}{2\sqrt{z}} & = & \frac{4\mu_1\mu_2}{2\sqrt{z}} \int_0^{\sqrt{z}} \!  u\left(\sqrt{z}-u\right) e^{-\mu_1 u^2} e^{-\mu_2 \left(\sqrt{z}-u\right)^2} \, \mathrm{d} u.
\end{eqnarray}
It can be verified that the p.d.f. is square integrable. Finally, the LT of $Z$ equals
\begin{eqnarray}
\label{LTZf}
\mathcal{L}_Z\left(s\right) & := & \int_0^{\infty} \! e^{-s z}f_Z\left(z\right) \,  \mathrm{d} z\nonumber\\
& = & \frac{-s\sqrt{\frac{1}{\mu_1\mu_2}}\pi +s\sqrt{\frac{1}{\mu_1\mu_2}} \arctan\left(\sqrt{\frac{\mu_1}{\mu_2}}g\left(s\right)\right)+s\sqrt{\frac{1}{\mu_1\mu_2}} \arctan\left(\sqrt{\frac{\mu_2}{\mu_1}}g\left(s\right)\right)+ g\left(s\right)}{g\left(s\right)^3}\nonumber
\end{eqnarray}
where $g\left(s\right) := \sqrt{1+\left(\frac{1}{\mu_1}+\frac{1}{\mu_2}\right) s}$. We can check that the resulting p.d.f. of $Z$ is actually valid, since $\mathcal{L}_Z\left(0\right) = \int_0^{\infty} \! f_Z\left(z\right) \,  \mathrm{d} z = 1$. The expected value of $Z$ is found directly by the first derivative at $s=0$
\begin{eqnarray}
\label{Zpdf}
\mathbb{E}\left[Z\right] & = & -\left.\frac{\mathrm{d} \mathcal{L}\left(s\right)}{\mathrm{d} s}\right|_{s=0}.\nonumber
\end{eqnarray}
The expression is provided in (\ref{EZpdf}) and a direct evaluation for $\mu_1=\mu_2=1/2$ gives $\mathbb{E}\left[Z\right]=\pi+4$. We further plot the p.d.f. for values $\mu_1=\mu_2=2$  in Fig. \ref{fig:Zplot}.

\begin{figure}[h]
\centering
\includegraphics[trim = 0mm 0mm 0mm 0mm, clip, width=0.35\textwidth]{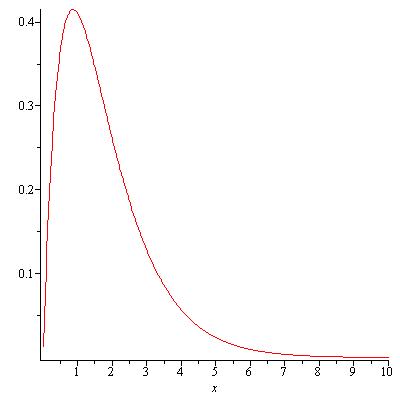}
\caption{The p.d.f. of the random variable $Z$ with $\mu_1=\mu_2=2$.}
\label{fig:Zplot}
\end{figure}
\end{proof}

\begin{Lem}
\label{StochOrd}
Given $\mu_1=\mu_2=\mu:=\frac{r^{\beta}}{p}$, and the two r.v.'s $G\sim\exp\left(\mu\right)$ and $Z_{r,r}/2$ from (\ref{RVz}), the following stochastic ordering inequality holds
\begin{eqnarray}
\label{SOineq}
\begin{tabular}{l l l l}
If $\mu< 1$ & $\Rightarrow$ & $G\leq_{st}\frac{Z_{r,r}}{2}$ & \\
If $\mu\geq 1$ & $\Rightarrow$ & $\mathbb{P}\left[Z_{r,r}>2 t_l\right] \leq \mathbb{P}\left[G>t_l\right]$, & $t_l\leq t^*$\\
& & $\mathbb{P}\left[Z_{r,r}>2 t_h\right] \geq \mathbb{P}\left[G>t_h\right]$, & $t_h> t^*$
\end{tabular}
\end{eqnarray}
for some $t^*<\infty$.
\end{Lem}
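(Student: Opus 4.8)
The plan is to collapse the two–dimensional problem into a single scalar tail comparison that can be evaluated in closed form. Writing $Y_i:=\sqrt{X_i}$ with $X_i=g_ir^{-\beta}\sim\exp(\mu)$, each $Y_i$ is Rayleigh with density $2\mu y\,e^{-\mu y^2}$ and tail $\bar F_Y(y)=e^{-\mu y^2}$; since $Z_{r,r}=(Y_1+Y_2)^2$ in (\ref{RVz}) with $Y_i\ge 0$, we have $\mathbb{P}[Z_{r,r}/2>t]=\mathbb{P}[Y_1+Y_2>\sqrt{2t}]$. As $G\sim\exp(\mu)$ has tail $e^{-\mu t}$, the ordering $G\le_{st}Z_{r,r}/2$ is equivalent to the scalar inequality $\mathbb{P}[Y_1+Y_2>\sqrt{2t}]\ge e^{-\mu t}$ for every $t\ge 0$. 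So the first step is to find the law of the Rayleigh sum $W:=Y_1+Y_2$.

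I would compute $\bar F_W(a)=\mathbb{P}[W>a]$ by conditioning on $Y_1$ and inserting $\bar F_Y$. The only nontrivial piece is $\int_0^a 2\mu y_1\,e^{-\mu(y_1^2+(a-y_1)^2)}\,\mathrm{d}y_1$, which I would evaluate by completing the square $y_1^2+(a-y_1)^2=2(y_1-a/2)^2+a^2/2$ and substituting $v=y_1-a/2$: the odd part integrates to zero and the even part yields an error function. This gives the closed form
\[
\bar F_W(a)=e^{-\mu a^2}+a\sqrt{\frac{\pi\mu}{2}}\,e^{-\mu a^2/2}\,\mathrm{erf}\!\big(a\sqrt{\mu/2}\big),
\]
and substituting $a=\sqrt{2t}$ produces the tail of the full-cooperation variable,
\[
\mathbb{P}[Z_{r,r}/2>t]=e^{-2\mu t}+\sqrt{\pi\mu t}\,e^{-\mu t}\,\mathrm{erf}\!\big(\sqrt{\mu t}\big).
\]

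The decisive step is to sign the difference $D(t):=\mathbb{P}[Z_{r,r}/2>t]-e^{-\mu t}=e^{-\mu t}\big[e^{-\mu t}+\sqrt{\pi\mu t}\,\mathrm{erf}(\sqrt{\mu t})-1\big]$. Putting $\tau:=\mu t$, the sign of $D(t)$ is the sign of $h(\tau):=e^{-\tau}+\sqrt{\pi\tau}\,\mathrm{erf}(\sqrt\tau)-1$. A direct differentiation collapses neatly, $h'(\tau)=\frac{\sqrt\pi}{2\sqrt\tau}\,\mathrm{erf}(\sqrt\tau)$, because the two $e^{-\tau}$ contributions cancel; since $h(0)=0$ and $h'(\tau)>0$ for $\tau>0$, $h$ is strictly increasing and positive on $(0,\infty)$. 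Hence $D(t)>0$ for every $t>0$, which establishes $G\le_{st}Z_{r,r}/2$ and in particular the $\mu<1$ assertion.

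The main obstacle lies in the claimed $\mu$-dichotomy. The substitution $\tau=\mu t$ shows that the comparison depends on $t$ and $\mu$ only through the single variable $\tau$, so changing $\mu$ merely rescales the threshold axis and the sign pattern of $D$ along $t\in[0,\infty)$ is identical for every $\mu>0$. Consequently the genuine content of the lemma is entirely carried by the fixed function $h$, and the sharp question — whether there is a finite crossing $t^{*}$ or dominance throughout — is settled by the monotonicity of $h$ rather than by the value of $\mu$. Establishing the stated $\mu\ge 1$ behaviour therefore requires a careful examination of $h$ near $\tau=0$ and for large $\tau$ (weighing the prefactor $\sqrt{\pi\mu t}$ against $e^{-\mu t}$), and reconciling that analysis with the scaling invariance just noted is where I expect the real difficulty to sit.
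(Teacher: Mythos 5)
Your derivation is correct and follows the same strategy as the paper's own proof: write $Z_{r,r}=(Y_1+Y_2)^2$ with $Y_i$ i.i.d.\ Rayleigh, obtain the tail of $Z_{r,r}/2$ in closed form from the convolution, and study the sign of its difference with the exponential tail. The two proofs disagree only in the closed form itself, and yours is the right one. The paper's proof asserts
\[
\mathbb{P}\left[Z_{r,r}>2t\right] \;=\; \frac{e^{-2\mu t}}{\mu^4}\left(e^{\mu t}\sqrt{\pi\mu t}\,\mathrm{erf}\left(\sqrt{\mu t}\right)+1\right),
\]
which carries a spurious factor $1/\mu^{4}$: it would give $\mathbb{P}[Z_{r,r}>0]=1/\mu^{4}\neq 1$ for a strictly positive random variable, and integrating this tail yields a mean inconsistent with the paper's own expression (\ref{EZpdf}). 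Your formula $\mathbb{P}[Z_{r,r}/2>t]=e^{-2\mu t}+\sqrt{\pi\mu t}\,e^{-\mu t}\,\mathrm{erf}(\sqrt{\mu t})$ passes both sanity checks (value $1$ at $t=0$, and mean $(1+\pi/4)/\mu$ after integration, matching (\ref{EZpdf})).

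The difficulty you flag in your final paragraph is therefore not a gap in your argument but an error in the lemma. The paper's dichotomy at $\mu=1$ comes entirely from the $1/\mu^{4}$ factor: in the paper's proof the difference of tails at $t=0$ equals $1/\mu^{4}-1$, whose sign flips at $\mu=1$, and its level-crossing argument then manufactures the finite crossing $t^{*}$ for $\mu\geq1$. (That value at $t=0$ is already impossible: two tail functions of positive random variables must both equal $1$ there, so the difference at $t=0$ is $0$ for every $\mu$.) With the correct tail the difference is $D(t)=e^{-\mu t}h(\mu t)$ with $h(\tau)=e^{-\tau}+\sqrt{\pi\tau}\,\mathrm{erf}(\sqrt{\tau})-1$, and your observations $h(0)=0$ and $h'(\tau)=\frac{\sqrt{\pi}}{2\sqrt{\tau}}\,\mathrm{erf}(\sqrt{\tau})>0$ give $D(t)>0$ for every $t>0$ and every $\mu>0$. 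Hence $G\leq_{st}Z_{r,r}/2$ holds for all $\mu$, not only $\mu<1$; the $\mu\geq1$ clause is false except in the degenerate reading $t^{*}=0$, where it collapses into the same dominance statement; and your scaling remark ($\mu$ and $t$ enter only through $\tau=\mu t$) is exactly the structural reason no genuine threshold in $\mu$ can exist. Note that the main text (Lemma \ref{StochOrd1}) invokes only the $\mu<1$ implication, which your proof establishes and in fact strengthens.
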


\begin{proof}
Based on the p.d.f. expression of $Z_{r,r}$ in (\ref{ConvX1X2}) and by integrating from $2t,\ldots,\infty$ we get
\begin{eqnarray}
\label{CDFz}
\mathbb{P}\left[Z_{r,r}>2t\right] & = & \frac{e^{-2\mu t}}{\mu^4}\left( e^{\mu t} \sqrt{\pi\mu t} \cdot erf\left(\sqrt{\mu t}\right)+ 1\right).\nonumber
\end{eqnarray}
Furthermore, the inverse c.d.f. for the r.v. $G$ equals
\begin{eqnarray}
\label{CDFg}
\mathbb{P}\left[G>t\right] & = & e^{-\mu t}.\nonumber
\end{eqnarray}
Let us take the difference of the above expressions which gives
\begin{eqnarray}
\label{DiffCDF}
\mathbb{P}\left[Z_{r,r}>2t\right] - \mathbb{P}\left[G>t\right] & = & e^{-\mu t}\left[\frac{e^{-\mu t}}{\mu^4}\left(e^{\mu t}\sqrt{\pi \mu t} \cdot erf\left(\sqrt{\mu t}\right)+1\right)-1\right].
\end{eqnarray}
For $t=0$ the difference equals $\frac{1}{\mu^4}-1$ and for $\mu\leq 1$ the inequality in (\ref{SOineq}) holds true. Furthermore, for $t\rightarrow \infty$ the above difference goes to $0$. We will show that under the same condition for the $\mu$ there are no level crossings, in other words, there exists no $t<\infty$ such that

\begin{eqnarray}
\label{LevC1}
\frac{e^{-\mu t}}{\mu^4}\left(e^{\mu t}\sqrt{\pi \mu t} \cdot erf\left(\sqrt{\mu t}\right)+1\right) = & 1 & \Rightarrow \nonumber\\
\sqrt{\pi \mu t} \cdot erf\left(\sqrt{\mu t}\right) = & \mu^4-e^{-\mu t}.\nonumber
\end{eqnarray}
The derivative of the function in the left handside (LHS) equals $\mu e^{-\mu t} + \frac{\sqrt{\pi}}{2\sqrt{\mu t}}\cdot erf\left(\sqrt{\mu t}\right)$ and for all $t$ is greater than the derivative of the function at the right handside (RHS) which equals $\mu e^{-\mu t}$. Both functions are strictly increasing. So if we show that the starting point of the RHS is strictly smaller than that of the LHS we have shown that the two curves do not meet. For $t=0$ the LHS equals $0$, whereas the RHS equals $\mu^4-1$ and under the condition $\mu\leq 1$ we have that indeed the initial point of the RHS is non-positive, so that there are no level crossings definitely for $\mu<1$. The difference in (\ref{DiffCDF}) is plotted as example in Fig \ref{fig:SO1} for $\mu<1$.

On the other hand, for $t=0$ and $\mu\geq 1$, the difference in (\ref{DiffCDF}) is non-positive and $\mathbb{P}\left[Z_{r,r}>0\right] \leq \mathbb{P}\left[G>0\right]$. Using the same arguments as above, since $\mu^4-1$ is positive for the $RHS$, there will be exactly one $t^*<\infty$ such that a level crossing occurs. That means for $t_l<t^*$: $\mathbb{P}\left[Z_{r,r}>2 t_l\right] \leq \mathbb{P}\left[G>t_l\right]$, whereas for $t_h>t^*$: $\mathbb{P}\left[Z_{r,r}>2 t_h\right] \geq \mathbb{P}\left[G>t_h\right]$. An example is plotted in Fig. \ref{fig:SO1} for $\mu>1$.

\begin{figure}[ht]    
\centering  
\label{fig:ProofSO}
	 		\subfigure[The stochastic ordering difference for $\mu=1/2<1$.]{          
           \includegraphics[trim = 0mm 0mm 0mm 0mm, clip, width=0.4\textwidth]{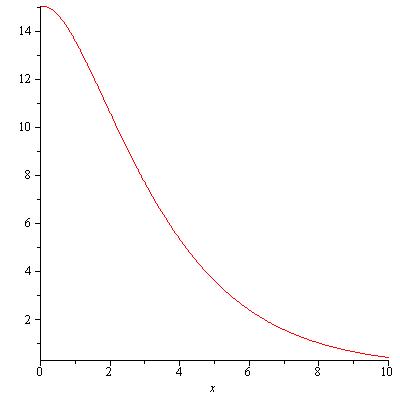}
           \label{fig:SO1}
           }
            \subfigure[The stochastic ordering difference for $\mu=1.2>1$.]{          
           \includegraphics[trim = 0mm 0mm 0mm 0mm, clip, width=0.4\textwidth]{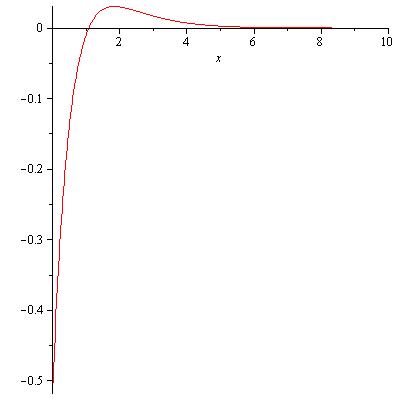}
           \label{fig:SO2}
           }
           \caption{Illustrative plots related to the Proof of Lemma \ref{StochOrd}.}
\end{figure}

%

\end{proof}

\begin{proof}\textbf{[Lemma \ref{LaplaceOrd}]}
We first establish a simple result. It can be easily shown that for the p.d.f. $f_{\frac{Z}{2}}\left(t\right)=2f_{Z}\left(2t\right)$. Taking the Laplace transform and using the 
above equality for the p.d.f.'s we get 
$\mathcal{L}_{\frac{Z}{2}}\left(s\right)=\mathbb{E}_{\frac{Z}{2}}\left[e^{-st}\right]=\mathbb{E}_Z\left[e^{-\frac{s}{2}2t}\right]=\mathcal{L}_Z\left(\frac{s}{2}\right)$.

The LT of the r.v. $G$ equals 

\begin{eqnarray}
\label{LTG1}
\mathcal{L}_G\left(s\right) & = & \frac{1}{1+sp}\nonumber
\end{eqnarray}
while, the LT of the r.v. $\frac{Z_{r,r}}{2}$ is found using the above observation and the form in (\ref{LTz})
\begin{eqnarray}
\label{LTZo2}
\mathcal{L}_{Z_{r,r}/2}\left(s\right) 	& = & \mathcal{L}_{Z}\left(\frac{s}{2},\mu,\mu\right)\nonumber\\
										& = & \frac{1}{1+sp}\left(1+sp\frac{-\pi/2+\arctan\left(\sqrt{1+sp}\right)}{\sqrt{1+sp}}\right)\nonumber.
\end{eqnarray}
The range of $\arctan\left(\sqrt{1+sp}\right)\in\left[\pi/4,\pi/2\right]$ and hence the term in the parenthesis is $(\cdot)\leq1$. This concludes the proof.

\end{proof}

\newpage 

\subsection{Proofs and Supplementary Material for the Interference random variable}

\begin{proof}\textbf{[Theorem \ref{THlaplI}]}
From the definition of the Laplace Transform, 
\begin{eqnarray}
\label{CalculLTi}
\mathcal{L}_{\mathcal{I}}\left(s,\rho,r_2\right) 	& = & \mathbb{E}_{\mathcal{I}}\left[e^{-s\mathcal{I}}\right]\nonumber\\
													& = & \mathbb{E}_{\phi,\left\{G_n\right\},\left\{G_{n1},G_{n2}\right\},\left\{B_n\right\}}\left[e^{-s\left(r_{2}^{-\beta} G_{2}B_2 + r_2^{-\beta}\frac{G_1+G_2}{2} \left(1-B_2\right)\right)} \prod_{\mathbf{z}_n\in\phi\setminus\left\{\mathbf{b}_1,\mathbf{b}_2\right\}}e^{-s\left(d_{n}^{-\beta}G_{n} B_n + d_n^{-\beta}\frac{G_{n1}+G_{n2}}{2} \left(1-B_n\right)\right)}\right]\nonumber
\end{eqnarray}
Observe now that the random variables $B_n$, $G_n$, $G_{n1},G_{n2}$ are independently distributed and independent from the point process $\phi$. Furthermore, the contribution of each atom outside the ball of radius $r_2$ is 
\begin{eqnarray}
\label{EachI}
\mathcal{L}_{\mathcal{J}}\left(s,\rho,d_n\right)& := & \mathbb{E}_{B_n}\left[\mathbb{E}_{G_n}\left[e^{-sd_{n}^{-\beta}G_{n} B_n}\right]\cdot\mathbb{E}_{G_{n1},G_{n2}}\left[e^{-sd_n^{-\beta}\frac{G_{n1}+G_{n2}}{2} \left(1-B_n\right)}\right]\right]\nonumber\\
& = & \rho^2\mathbb{E}_{G_n}\left[e^{-sd_{n}^{-\beta}G_{n}}\right] + \left(1-\rho^2\right)\mathbb{E}_{G_{n1},G_{n2}}\left[e^{-sd_n^{-\beta}\frac{G_{n1}+G_{n2}}{2}}\right]\nonumber\\
& = & \rho^2\mathcal{L}_G\left(sd_{n}^{-\beta}\right)+ \left(1-\rho^2\right)\left(\mathcal{L}_G\left(\frac{sd_{n}^{-\beta}}{2}\right)\right)^2\nonumber\\
& \stackrel{(a)}{=} & \rho^2\frac{1}{1+sd_n^{-\beta}p}+\left(1-\rho^2\right)\frac{1}{\left(1+sd_n^{-\beta}\frac{p}{2}\right)^2}\nonumber
\end{eqnarray}
where (a) comes fromt the LT of the random variable $G$, $G_n, G_{n1}, G_{n2} \sim\exp\left(1/p\right)$. Similarly for the atom on the boundary of the ball, the contribution is given from the above expression by replacing $d_n$ by $r_2$. The LT of the interference finally takes the expression
\begin{eqnarray}
\label{LTi2}
\mathcal{L}_{\mathcal{I}}\left(s,\rho,r_2\right) 	& = & \mathcal{L}_{\mathcal{J}}\left(s,\rho,r_2\right)\cdot\mathbb{E}_{\phi}\left[\prod_{\mathbf{z}_n\in\phi\setminus\left\{\mathbf{b}_1,\mathbf{b}_2\right\}}\mathcal{L}_{\mathcal{J}}\left(s,\rho,d_n\right)\right]\nonumber\\
&\stackrel{(b)}{=} & \mathcal{L}_{\mathcal{J}}\left(s,\rho,r_2\right)\cdot e^{-2\pi\lambda\int_{r_2}^{\infty} \!  \left(1-\mathcal{L}_{\mathcal{J}}\left(s,\rho,r\right)\right)r \, \mathrm{d} r }\nonumber
\end{eqnarray}
where (b) comes from applying the Laplace functional expression for the p.p.p. using polar coordinates \cite[Proposition 1.5, pp. 18-19]{BaccelliBookStoch}, for radius $r$ ranging from $r_2$ to $\infty$, since it has already been identified that the second closest neighbour lies at the boundary of the ball of radius $r_2$.
\end{proof}

\begin{Cor}
The LT of the Interference random variable $\mathcal{L}_{\mathcal{I}}\left(s,\rho,r_2\right)$ in (\ref{LTinterfD}) is an increasing (non-decreasing) function in $\rho$ and $r_2$ and decreasing (non-increasing) function in $s$. 
\end{Cor}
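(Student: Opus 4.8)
The plan is to prove this monotonicity statement by reducing everything to the behaviour of the single-atom kernel $\mathcal{L}_{\mathcal{J}}(s,\rho,r)$ given in (\ref{EachID}), and then propagating its monotonicity through the structure of (\ref{LTinterfD}). Recall that (\ref{LTinterfD}) is a product of the boundary factor $\mathcal{L}_{\mathcal{J}}(s,\rho,r_2)$ with the exponential $\exp\{-2\pi\lambda\int_{r_2}^{\infty}(1-\mathcal{L}_{\mathcal{J}}(s,\rho,r))\,r\,\mathrm{d}r\}$. Both factors are strictly positive, so a product of two positive factors that move in the same direction with respect to a given variable is itself monotone in that direction; hence it suffices to analyze each factor separately.

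First I would analyze the kernel. Writing $x:=s\,r^{-\beta}p>0$, we have $\mathcal{L}_{\mathcal{J}}=\rho^2(1+x)^{-1}+(1-\rho^2)(1+x/2)^{-2}$. As $s$ grows, $x$ grows and both summands decrease, so $\mathcal{L}_{\mathcal{J}}$ is non-increasing in $s$; as $r$ grows, $x$ shrinks (since $\beta>0$) and both summands increase toward $1$, so $\mathcal{L}_{\mathcal{J}}$ is non-decreasing in $r$ with limit $1$ as $r\to\infty$. For the dependence on $\rho$, one computes $\partial_\rho\mathcal{L}_{\mathcal{J}}=2\rho\big[(1+x)^{-1}-(1+x/2)^{-2}\big]$, and since $(1+x/2)^2=1+x+x^2/4\ge 1+x$ we get $(1+x)^{-1}\ge(1+x/2)^{-2}$, so $\partial_\rho\mathcal{L}_{\mathcal{J}}\ge 0$ and $\mathcal{L}_{\mathcal{J}}$ is non-decreasing in $\rho$. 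This last inequality is exactly the Laplace-ordering comparison $G\leq_L\tfrac{G_1+G_2}{2}$ between the No-Coop and Full-Coop marks that underlies Lemma \ref{LaplaceOrd}.

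Then I would propagate these facts to $\mathcal{L}_{\mathcal{I}}$. For $s$: the boundary factor $\mathcal{L}_{\mathcal{J}}(s,\rho,r_2)$ decreases, while $1-\mathcal{L}_{\mathcal{J}}(s,\rho,r)$ increases so the integral increases, the exponent $-2\pi\lambda\int(\cdots)$ decreases, and the exponential decreases; two positive decreasing factors give $\mathcal{L}_{\mathcal{I}}$ non-increasing in $s$. For $\rho$: the boundary factor increases, while $1-\mathcal{L}_{\mathcal{J}}$ decreases so the integral decreases, the exponent increases, and the exponential increases; both positive factors increase, so $\mathcal{L}_{\mathcal{I}}$ is non-decreasing in $\rho$. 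For $r_2$, which enters both as the boundary argument and as the lower limit of integration, the boundary factor $\mathcal{L}_{\mathcal{J}}(s,\rho,r_2)$ increases, and by the fundamental theorem of calculus $\partial_{r_2}\big(-2\pi\lambda\int_{r_2}^{\infty}(1-\mathcal{L}_{\mathcal{J}})\,r\,\mathrm{d}r\big)=2\pi\lambda\,(1-\mathcal{L}_{\mathcal{J}}(s,\rho,r_2))\,r_2\ge 0$ because $\mathcal{L}_{\mathcal{J}}\le 1$; hence the exponential also increases and $\mathcal{L}_{\mathcal{I}}$ is non-decreasing in $r_2$.

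All the computations are elementary, so I do not expect a serious obstacle. The only two points deserving a moment of care are the sign of $\partial_\rho\mathcal{L}_{\mathcal{J}}$, which collapses to the trivial inequality $(1+x/2)^2\ge 1+x$, and the legitimacy of differentiating through the improper integral with respect to its lower limit. The latter is justified by the tail estimate $1-\mathcal{L}_{\mathcal{J}}(s,\rho,r)=O(r^{-\beta})$ as $r\to\infty$, which makes the integrand behave like $r^{1-\beta}$, so that the integral converges and is smooth precisely because $\beta>2$.
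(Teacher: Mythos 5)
Your proposal is correct and takes essentially the same route as the paper: establish monotonicity of the single-atom kernel $\mathcal{L}_{\mathcal{J}}\left(s,\rho,r\right)$ in each variable (your denominator comparison $\left(1+x/2\right)^2\geq 1+x$ is exactly the inequality the paper obtains by combining the two fractions into one non-negative term) and then propagate it through the product-of-positive-factors structure of (\ref{LTinterfD}), using non-negativity of $1-\mathcal{L}_{\mathcal{J}}$ and the shrinking domain of integration for the $r_2$ case. The only (cosmetic) divergence is the $s$-monotonicity, where the paper dispatches it in one line from the probabilistic definition $\mathcal{L}_{\mathcal{I}}\left(s,\rho,r_2\right)=\mathbb{E}\left[e^{-s\mathcal{I}}\right]$ with $\mathcal{I}\geq 0$, while you rederive it analytically from the closed-form expression; both are valid.
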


\begin{proof}
We first establish the monotonicity in $\rho$. Let us first consider the LT $\mathcal{L}_{\mathcal{J}}$ in (\ref{EachID}), found in the above proof

\begin{eqnarray}
\label{LJ1}
\mathcal{L}_{\mathcal{J}}\left(s,\rho,r\right) & = &  \rho^2\frac{1}{1+sd_n^{-\beta}p}+\left(1-\rho^2\right)\frac{1}{\left(1+sd_n^{-\beta}\frac{p}{2}\right)^2}\Rightarrow\nonumber\\
\frac{\partial \mathcal{L}_{\mathcal{J}}\left(s,\rho,r\right)}{\partial \rho} & = & 2\rho\left(\frac{1}{1+sd_n^{-\beta}p}-\frac{1}{\left(1+sd_n^{-\beta}\frac{p}{2}\right)^2}\right)\nonumber\\
& = & 2\rho\frac{s^2d_n^{-2\beta}\frac{p^2}{4}}{\left(1+sd_n^{-\beta}p\right)\left(1+sd_n^{-\beta}\frac{p}{2}\right)^2}\geq 0.\nonumber
\end{eqnarray}
Hence $\mathcal{L}_{\mathcal{J}}$ is increasing in $\rho$. From the expression in (\ref{LTinterfD}) and using the above result on $\mathcal{L}_{\mathcal{J}}$ it is easily seen that the partial derivative over $\rho$ is also non-negative.

Considering $s$, the monotone decreasing behavior is a direct result of the definition of the Laplace transform $\mathbb{E}_{\mathcal{I}}\left[e^{-s\mathcal{I}}\right]$.

The $r_2$ monotonicity comes from the following arguement. Variable $r_2$ is found in the $\mathcal{L}_{\mathcal{J}}\left(s,\rho,r_2\right)$ expression only as a product of the form $sr_2^{-\beta}$, hence $\mathcal{L}_{\mathcal{J}}\left(s,\rho,r_2\right)$ is increasing in $r_2$ (Laplace transform). In the exponential term, $r_2$ is the lower limit of the integral and does not appear at some other position. Hence, the larger the $r_2$ the smaller the value of the integral, since the quantity $1-\mathcal{L}_{\mathcal{J}}\left(s,\rho,r\right)$ can be  shown to be non-negative. As a result, the value of the negative exponential term is non-decreasing with increasing $r_2$ and this concludes the proof. 
\end{proof}
The LT of the interference takes its maximal value for $\rho=1$, where $\mathrm{No\ Coop}$ is applied in the entire Voronoi cell of each interferer. This is reasonable based on the Laplace-Stieltjes transform ordering. The argument is that for any $0\leq\rho_a<\rho_b\leq 1$, $\mathcal{L}_{\mathcal{I}}\left(s,\rho_a,r_2\right)\leq\mathcal{L}_{\mathcal{I}}\left(s,\rho_b,r_2\right)$ $\Rightarrow$ $\mathcal{I}\left(\rho_a,r_2\right)\geq_L \mathcal{I}\left(\rho_b,r_2\right)\geq_L\mathcal{I}\left(1,r_2\right)$. The larger the $\rho$ the less the interference. Similarly, the larger the distance to the second neighbour $r_2$, the less the interference, because a larger empty ball of interferers around the typical location is guaranteed.

\begin{Cor}
\label{CorLTIrho1}
The LT of the Interference random variable $\mathcal{L}_{\mathcal{I}}\left(s,\rho,r_2\right)$ in (\ref{LTinterfD}) for the case of $\rho=1$ $(\mathrm{No\ Coop})$ and path-loss exponent $\beta=4$ takes the expression
\begin{eqnarray}
\label{LTInocoop}
\left.\mathcal{L}_{\mathcal{I}}\left(s,1,r_2\right)\right|_{\beta=4} & = & \frac{1}{1+sr_2^{-4}p}e^{-\pi\lambda\sqrt{sp}\left(\pi/2-\arctan\left(\frac{r_2^2}{\sqrt{sp}}\right)\right)}
\end{eqnarray} 
\end{Cor}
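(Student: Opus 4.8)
The plan is to specialize the general interference Laplace transform of Theorem \ref{THlaplI} to the two parameter values $\rho=1$ and $\beta=4$ and then evaluate the resulting elementary integral. First I would substitute $\rho=1$ into the expression (\ref{EachID}) for $\mathcal{L}_{\mathcal{J}}$: the weight $1-\rho^2$ multiplying the second ($\mathrm{Full\ Coop}$) term vanishes, so that $\mathcal{L}_{\mathcal{J}}\left(s,1,r\right)=\frac{1}{1+sr^{-\beta}p}$. In particular the prefactor $\mathcal{L}_{\mathcal{J}}\left(s,1,r_2\right)$ in (\ref{LTinterfD}) reduces to $\frac{1}{1+sr_2^{-\beta}p}$, which at $\beta=4$ is exactly the rational factor $\frac{1}{1+sr_2^{-4}p}$ appearing in the claim.

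It then remains to evaluate the exponent $-2\pi\lambda\int_{r_2}^{\infty}\left(1-\mathcal{L}_{\mathcal{J}}\left(s,1,r\right)\right)r\,\mathrm{d} r$. I would first rewrite the integrand as
\[
1-\mathcal{L}_{\mathcal{J}}\left(s,1,r\right)=1-\frac{1}{1+sr^{-\beta}p}=\frac{sp}{r^{\beta}+sp},
\]
and set $\beta=4$, leaving $\int_{r_2}^{\infty}\frac{sp}{r^{4}+sp}\,r\,\mathrm{d} r$. The key step is the substitution $u=r^{2}$ (so that $r\,\mathrm{d} r=\tfrac12\,\mathrm{d} u$), which transforms this into $\frac{sp}{2}\int_{r_2^2}^{\infty}\frac{\mathrm{d} u}{u^{2}+sp}$, a standard arctangent integral. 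Applying $\int\frac{\mathrm{d} u}{u^{2}+a^{2}}=\frac{1}{a}\arctan\left(u/a\right)$ with $a=\sqrt{sp}$ and evaluating between $u=r_2^2$ and $u\to\infty$ yields $\frac{\sqrt{sp}}{2}\bigl(\frac{\pi}{2}-\arctan\left(r_2^2/\sqrt{sp}\right)\bigr)$.

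Multiplying by $-2\pi\lambda$ produces the exponent $-\pi\lambda\sqrt{sp}\bigl(\frac{\pi}{2}-\arctan\left(r_2^2/\sqrt{sp}\right)\bigr)$, and combining this with the rational prefactor gives the stated formula (\ref{LTInocoop}). There is no genuine obstacle here: the computation is entirely routine once $\rho=1$ collapses $\mathcal{L}_{\mathcal{J}}$ to a single simple fraction. The only points that deserve a little care are recognizing the substitution $u=r^{2}$ that reduces the quartic denominator to a quadratic one (hence to an arctangent), and noting that the integral converges precisely because $\beta=4>2$, so that the field of distant interferers contributes a finite, well-defined amount.
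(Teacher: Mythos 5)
Your proposal is correct and follows essentially the same route as the paper's own proof: substituting $\rho=1$ to collapse $\mathcal{L}_{\mathcal{J}}$ to $\frac{1}{1+sr^{-\beta}p}$, then reducing the exponent integral via a $u\propto r^{2}$ substitution to a standard arctangent integral (the paper uses the normalized variable $u=\bigl(r/(sp)^{1/\beta}\bigr)^{2}$ rather than your $u=r^{2}$, but this is only a cosmetic rescaling). Your final evaluation $\frac{\sqrt{sp}}{2}\left(\frac{\pi}{2}-\arctan\left(r_2^2/\sqrt{sp}\right)\right)$ and the assembly of the prefactor with the exponential term match the paper exactly.
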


\begin{proof}
For the case of $\rho=1$ we have that 
\begin{eqnarray}
\mathcal{L}_{\mathcal{J}}\left(s,1,r\right) & = & \frac{1}{1+sr^{-\beta}p}.\nonumber
\end{eqnarray}
Using this expression, the integral on the exponent in (\ref{LTinterfD}) can be written as
\begin{eqnarray}
\int_{r_2}^{\infty} \!  \left(1-\mathcal{L}_{\mathcal{J}}\left(s,1,r\right)\right)r \, \mathrm{d} r & = & \int_{r_2}^{\infty} \!  \frac{sr^{-\beta}p}{1+sr^{-\beta}p}r \, \mathrm{d} r \nonumber\\
& \stackrel{u=\left(\frac{r}{(sp)^{1/{\beta}}}\right)^2}{=} & \frac{\left(sp\right)^{2/\beta}}{2}\int_{\left(\frac{r_2}{(sp)^{1/\beta}}\right)^2}^{\infty} \!  \frac{u^{-\beta/2}}{1+u^{-\beta/2}} \, \mathrm{d} u\nonumber\\
& \stackrel{\beta=4}{=} & \frac{\left(sp\right)^{1/2}}{2}\left.\arctan{\left(u\right)}\right|_{\left(\frac{r_2}{(sp)^{1/4}}\right)^2}^{\infty}\nonumber\\
& = & \frac{\sqrt{sp}}{2}\left(\pi/2-\arctan\left(\frac{r_2^2}{\sqrt{sp}}\right)\right).\nonumber
\end{eqnarray}
\end{proof}

\begin{Cor}
\label{EI}
The expected value of the Interference random variable, whose LT is given in (\ref{LTinterfD}) is equal to
\begin{eqnarray}
\label{EIgeneral2}
\mathbb{E}\left[\mathcal{I}\left(\rho,r_2,\beta,p,\lambda \right)\right] & = & \frac{p}{\left(\beta-2\right)r_2^{\beta}}\left(\beta-2+2\pi\lambda r_2^2\right),
\end{eqnarray} 
and is \textbf{independent of the cooperation parameter $\rho$}. For the special case of path-loss exponent $\beta=4$, the expression simplifies to
\begin{eqnarray}
\label{EIspecial2}
\mathbb{E}\left[\mathcal{I}\left(\rho,r_2,4,p,\lambda \right)\right] & = & \frac{p}{r_2^{4}}\left(1+\pi\lambda r_2^2\right).
\end{eqnarray} 
\end{Cor}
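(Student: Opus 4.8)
The plan is to obtain the mean by differentiating the Laplace transform at the origin, via the standard identity $\mathbb{E}\left[\mathcal{I}\right] = -\left.\frac{d}{ds}\mathcal{L}_{\mathcal{I}}\left(s,\rho,r_2\right)\right|_{s=0}$, with $\mathcal{L}_{\mathcal{I}}$ as given in (\ref{LTinterfD}). I would exploit the product structure $\mathcal{L}_{\mathcal{I}}\left(s,\rho,r_2\right)=\mathcal{L}_{\mathcal{J}}\left(s,\rho,r_2\right)\cdot e^{h\left(s\right)}$, where $h\left(s\right):=-2\pi\lambda\int_{r_2}^{\infty}\left(1-\mathcal{L}_{\mathcal{J}}\left(s,\rho,r\right)\right)r\,\mathrm{d}r$. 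The first step is to record the two boundary values $\mathcal{L}_{\mathcal{J}}\left(0,\rho,r\right)=\rho^2+\left(1-\rho^2\right)=1$ and $h\left(0\right)=0$, so that both factors equal $1$ at $s=0$ and the product rule collapses to $\mathcal{L}_{\mathcal{I}}'\left(0\right)=\mathcal{L}_{\mathcal{J}}'\left(0,\rho,r_2\right)+h'\left(0\right)$.

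The crux is the $s$-derivative of $\mathcal{L}_{\mathcal{J}}$ at the origin. Differentiating the expression in (\ref{EachID}) term by term, each of the two summands contributes $-pr^{-\beta}$ at $s=0$ (for the first term $\frac{d}{ds}\left(1+spr^{-\beta}\right)^{-1}\!\mid_{0}=-pr^{-\beta}$; for the second $\frac{d}{ds}\left(1+s\frac{p}{2}r^{-\beta}\right)^{-2}\!\mid_{0}=-2\cdot\frac{p}{2}r^{-\beta}=-pr^{-\beta}$). The key qualitative point is that the weights then combine as $\rho^2\left(-pr^{-\beta}\right)+\left(1-\rho^2\right)\left(-pr^{-\beta}\right)=-pr^{-\beta}$, which is \textbf{independent of $\rho$}: the per-interferer mean contribution equals $pr^{-\beta}$ whether the interferer is in $\mathrm{No\ Coop}$ (exponential mark of mean $p$) or in $\mathrm{Full\ Coop}$ (gamma mark $\frac{G_{n1}+G_{n2}}{2}$ of the same mean $p$). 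This is exactly the observation made after (\ref{Interference1}) that both cooperation modes carry the same expected path-loss. In particular, the boundary atom yields $\mathcal{L}_{\mathcal{J}}'\left(0,\rho,r_2\right)=-pr_2^{-\beta}$, which will produce the first summand of $\mathbb{E}\left[\mathcal{I}\right]$.

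Next I would handle $h'\left(0\right)$ by differentiating under the integral sign, obtaining $h'\left(0\right)=2\pi\lambda\int_{r_2}^{\infty}\left.\frac{\partial}{\partial s}\mathcal{L}_{\mathcal{J}}\left(s,\rho,r\right)\right|_{s=0}r\,\mathrm{d}r=-2\pi\lambda p\int_{r_2}^{\infty}r^{1-\beta}\,\mathrm{d}r$. The radial integral converges precisely because $\beta>2$, evaluating to $\frac{r_2^{2-\beta}}{\beta-2}$, so $h'\left(0\right)=-\frac{2\pi\lambda p\,r_2^{2-\beta}}{\beta-2}$. Collecting the two pieces gives $\mathbb{E}\left[\mathcal{I}\right]=-\left(\mathcal{L}_{\mathcal{J}}'\left(0,\rho,r_2\right)+h'\left(0\right)\right)=pr_2^{-\beta}+\frac{2\pi\lambda p\,r_2^{2-\beta}}{\beta-2}$, which factors as $\frac{p}{\left(\beta-2\right)r_2^{\beta}}\left(\beta-2+2\pi\lambda r_2^2\right)$, i.e. (\ref{EIgeneral2}). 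Specializing to $\beta=4$ and simplifying $\frac{p}{2r_2^4}\left(2+2\pi\lambda r_2^2\right)=\frac{p}{r_2^4}\left(1+\pi\lambda r_2^2\right)$ recovers (\ref{EIspecial2}).

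The main obstacle is not the algebra but the analytic justification: interchanging the $s$-derivative with the Laplace-functional integral in $h$, and confirming both that $h$ is finite and differentiable at $0$. Both hinge on $\beta>2$, which makes $r^{1-\beta}$ integrable on $[r_2,\infty)$ and supplies a dominating bound near $s=0$; I would state this domination explicitly rather than leave it implicit. As an independent cross-check I would note that the same result follows directly by linearity of expectation: the boundary term has mean $r_2^{-\beta}\left(\rho^2\mathbb{E}\left[G\right]+\left(1-\rho^2\right)\mathbb{E}\left[\tfrac{G_1+G_2}{2}\right]\right)=pr_2^{-\beta}$, and applying Campbell's theorem to the marked sum over the p.p.p. outside the ball of radius $r_2$ gives $\lambda\int_{r>r_2}pr^{-\beta}\,\mathrm{d}\mathbf{x}=2\pi\lambda p\,\frac{r_2^{2-\beta}}{\beta-2}$, again manifestly free of $\rho$.
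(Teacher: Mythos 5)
Your proposal is correct and follows essentially the same route as the paper: the paper's proof also obtains the mean via $\mathbb{E}\left[\mathcal{I}\right]=-\left.\frac{\partial\mathcal{L}_{\mathcal{I}}\left(s,\rho,r_2\right)}{\partial s}\right|_{s=0}$ applied to (\ref{LTinterfD}), and recovers the $\beta=4$ case by substitution. You simply carry out in detail (product rule at $s=0$, differentiation under the integral, the $\rho$-cancellation) what the paper states tersely, and your Campbell-theorem cross-check is a sound complementary observation consistent with the remark after (\ref{Interference1}).
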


\begin{proof}
By taking the derivative of the LT for the Interference r.v. in (\ref{LTinterfD}) we get $\mathbb{E}\left[\mathcal{I}\right]=-\left.\frac{\partial\mathcal{L}_{\mathcal{I}}\left(s,\rho,r_2\right)}{\partial s}\right|_{s=0}$. The expression in (\ref{EIspecial2}) results by direct substitution in (\ref{EIgeneral}) of $\beta=4$ or can alternatively be derived 
by $-\left.\frac{\partial\mathcal{L}_{\mathcal{I}}\left(s,\rho,r_2\right)}{\partial s}\right|_{s=0,\ \beta=4}$ and the LT expression in (\ref{LTInocoop}).
\end{proof}
Finally, observe from (\ref{EIgeneral2}) that $\lim_{r_2\rightarrow 0}\mathbb{E}\left[\mathcal{I}\left(\rho,r_2,\beta,p,\lambda \right)\right]=\infty$, for $\beta>2$ and irrespective of the power $p$.


\newpage

\subsection{Proof for General Coverage Probability}

\begin{proof}\textbf{[Theorem \ref{CovProb}]}
The probability of coverage in (\ref{CovDef}), given the $\mathrm{SINR}$ expression in (\ref{SINRo2}) can be broken into two separate integrals, which should 
be further calculated.

\begin{eqnarray}
\label{TwoInts}
q_c\left(\rho\right) 	& := & 	\mathbb{E}_{r_1,r_2}\left[\mathbb{P}\left[\mathrm{SINR}\left(\rho,r_1,r_2\right)>T|r_1,r_2\right]\right]\nonumber\\
						& = &	\int_0^{\infty} \! \int_{\frac{r_1}{\rho}}^{\infty} \! \mathbb{P}\left[G_1>r_1^{\beta}T\left(\sigma^2+\mathcal{I}\right)|r_1,r_2\right] f_{r_1,r_2}\left(r_1,r_2\right) \,  \mathrm{d} r_2 \,  \mathrm{d} r_1\nonumber\\
						& + & 	\int_0^{\infty} \! \int_{r_1}^{\frac{r_1}{\rho}} \! \mathbb{P}\left[Z_{r_1,r_2}>2T\left(\sigma^2+\mathcal{I}\right)|r_1,r_2\right] f_{r_1,r_2}\left(r_1,r_2\right) \,  \mathrm{d} r_2 \,  \mathrm{d} r_1\nonumber\\
						& := & q_{c,1}\left(\rho\right)	+ q_{c,2}\left(\rho\right),\nonumber
\end{eqnarray}
where the $Z_{r_1,r_2}$ r.v. is defined in (\ref{RVz}). For the first integral we derive the probability expression

\begin{eqnarray}
\label{qc1}
\mathbb{P}\left[G_1>r_1^{\beta}T\left(\sigma^2+\mathcal{I}\right)|r_1,r_2\right]  & \stackrel{(a)}{=} & \mathbb{E}_{\mathcal{I}}\left[\mathbb{P}\left[G_1>r_1^{\beta}T\left(\sigma^2+\mathcal{I}\right)|r_1,r_2,\mathcal{I}\right]\right]\nonumber\\
& \stackrel{(b)}{=} & \mathbb{E}_{\mathcal{I}}\left[e^{-\frac{r_1^{\beta}}{p}T\left(\sigma^2+\mathcal{I}\right)}|r_1,r_2\right]  \nonumber\\
& \stackrel{(c)}{=} & e^{-\frac{r_1^{\beta}}{P} T\sigma^2}\mathcal{L}_{\mathcal{I}}\left(\frac{r_1^{\beta}}{p} T,\rho,r_2\right).\nonumber
\end{eqnarray}
In the above, (a) comes from the law of total probability, (b) from the fact that $G_1\sim\exp\left(1/p\right)$ and (c) from the definition of the Laplace transform. 

For the second integral, we will use the Proposition in \cite[Prop.5.4]{BaccelliBookStoch} or alternatively in \cite[Prop.2.2]{BaccelliOpAloha} for the expression of the coverage probability for general fading. The conditions to apply the proposition is that:

\begin{itemize}
\item $Z_{r_1,r_2}$ has a finite first moment and admits a square integrable density, shown valid in Lemma \ref{LemZ}.
\item $\mathcal{I}$ admits a square integrable density.
\end{itemize}
Then we get that 
\begin{eqnarray}
\mathbb{P}\left[Z_{r_1,r_2}>2T\left(\sigma^2+\mathcal{I}\right)|r_1,r_2\right] & = & \int_{-\infty}^{\infty} \! e^{-2j\pi\sigma^2 s}\cdot \mathcal{L}_{\mathcal{I}}\left(2j\pi s,\rho,r_2\right)\frac{\mathcal{L}_{Z}\left(-j\pi s/T,\frac{r_1^{\beta}}{p},\frac{r_2^{\beta}}{p}\right)-1}{2j\pi s}\,  \mathrm{d} s,\nonumber
\label{GenFade}
\end{eqnarray}
where $\mathcal{L}_{Z}\left(s,\mu_1,\mu_2\right)$ is the LT of the general fading r.v. $Z_{r_1,r_2}$, whose expression is provided in (\ref{LTz}) and $\mathcal{L}_{\mathcal{I}}\left(s,\rho,r_2\right)$ is the LT of the interference r.v. $\mathcal{I}$, provided in (\ref{LTinterfD}). In the case of DPC, the LT for the interference r.v. is provided in (\ref{LTinterfDcancel}). Substituting the probability expressions in the above $q_{c,1}\left(\rho\right)$ and $q_{c,2}\left(\rho\right)$ respectively and using the joint distribution over $r_1,r_2$ from (\ref{pdfN2f}) in Lemma \ref{LemJointD}, we conclude the proof.
\end{proof}
Observe that for $\rho = 1$ the above expression of $q_c\left(\rho\right)$ in (\ref{Tint1}) and (\ref{Tint2}) equals $q_c\left(1\right) = q_{c,1}\left(1\right) + 0$ and coincides with the expression in \cite[eq. (2)]{AndrewsCoverage}, given the presence of an interferer at a distance $r_2$ away from the typical location.

\end{document}